\pdfoutput=1
\documentclass[11pt,reqno]{article}
\usepackage{lmodern}          %
\usepackage[T1]{fontenc}
\usepackage{amsmath,amssymb,amsthm}
\usepackage{tikz-cd}
\usetikzlibrary{arrows.meta}
\usepackage[margin=1in]{geometry}
\usepackage{microtype}
\usepackage{enumitem}
\usepackage{booktabs}          %
\usepackage{array}             %
\usepackage{float}
\usepackage{xcolor}            %
\usepackage{xurl}
\usepackage[colorlinks=true,linkcolor=blue!55!black,citecolor=green!45!black,%
            urlcolor=blue!55!black]{hyperref}
\usepackage{iftex}
\ifpdf
  \usepackage{embedfile}
\fi
\hypersetup{
  pdftitle={Coupling Independence Implies Zero-Freeness},
  pdfauthor={Shuai Shao and Ke Shi},
  pdfkeywords={Potts model, Fisher zeros, Lee\textendash Yang zeros,
    coupling independence, zero-freeness, Holant problems}
}
\usepackage{aliascnt}
\usepackage{cleveref}

\theoremstyle{plain}
\newtheorem{theorem}{Theorem}[section]
\IfFormatAtLeastTF{2026-06-01}{%
  \newtheorem{proposition}[theorem]{Proposition}
  \newtheorem{lemma}[theorem]{Lemma}
  \newtheorem{corollary}[theorem]{Corollary}
  \theoremstyle{definition}
  \newtheorem{definition}[theorem]{Definition}
  \theoremstyle{remark}
  \newtheorem{remark}[theorem]{Remark}
}{%
  \newaliascnt{proposition}{theorem}
  \newtheorem{proposition}[proposition]{Proposition}
  \aliascntresetthe{proposition}
  \newaliascnt{lemma}{theorem}
  \newtheorem{lemma}[lemma]{Lemma}
  \aliascntresetthe{lemma}
  \newaliascnt{corollary}{theorem}
  \newtheorem{corollary}[corollary]{Corollary}
  \aliascntresetthe{corollary}
  \theoremstyle{definition}
  \newaliascnt{definition}{theorem}
  \newtheorem{definition}[definition]{Definition}
  \aliascntresetthe{definition}
  \theoremstyle{remark}
  \newaliascnt{remark}{theorem}
  \newtheorem{remark}[remark]{Remark}
  \aliascntresetthe{remark}
}

\crefname{theorem}{Theorem}{Theorems}
\crefname{proposition}{Proposition}{Propositions}
\crefname{lemma}{Lemma}{Lemmas}
\crefname{corollary}{Corollary}{Corollaries}
\crefname{definition}{Definition}{Definitions}
\crefname{remark}{Remark}{Remarks}
\crefname{section}{Section}{Sections}
\crefname{equation}{}{}

\newcommand{\C}{\mathbb{C}}
\newcommand{\eps}{\varepsilon}
\newcommand{\abs}[1]{\left|#1\right|}

\newcommand{\set}[1]{\left\{#1\right\}}
\newcommand{\dist}{\operatorname{dist}}
\newcommand{\Log}{\operatorname{Log}}          %
\newcommand{\one}[1]{\mathbf{1}_{\left[#1\right]}}   %
\DeclareMathOperator{\E}{\mathbb{E}}
\DeclareMathOperator{\Ham}{Ham}

\newcommand{\Deg}{\Delta}                        %
\newcommand{\Zpart}{Z}                           %

\newcommand{\Wd}{W_{1,d}}                        %
\newcommand{\WHam}{W_{1,\Ham}}                   %

\newcommand{\colours}{[q]}                        %
\newcommand{\bcount}{b}
\newcommand{\Gdeg}{\mathcal{G}_{\Deg}}
\newcommand{\Gdegg}[1]{\mathcal{G}_{\Deg,#1}}
\newcommand{\Zpin}[2]{Z_{#1}^{#2}}
\newcommand{\nZpin}[2]{\widehat Z_{#1}^{#2}}
\newcommand{\mupin}[3]{\mu_{#1,#3}^{#2}}
\newcommand{\hardlist}[2]{L^{#1}(#2)}
\newcommand{\gapq}{\gamma_0}

\newcommand{\flip}{\Phi}                           %
\newcommand{\softK}{K}

\title{\bfseries Coupling Independence Implies Zero-Freeness}
\author{Shuai Shao\thanks{School of Computer Science and Technology,
    University of Science and Technology of China.}\\
  \href{mailto:shao10@ustc.edu.cn}{\texttt{shao10@ustc.edu.cn}}
\and
Ke Shi\footnotemark[1]\\
\href{mailto:self.ke.shi@gmail.com}{\texttt{self.ke.shi@gmail.com}}}
\date{}

\begin{document}
\maketitle

\begin{abstract}
For \(\Delta\ge2\) and \(q\ge11\Delta/6\), we prove that the antiferromagnetic
\(q\)-state Potts partition function on finite simple graphs of maximum
degree at most \(\Delta\) has no Fisher zeros in a graph-uniform complex
neighbourhood of \([0,1]\).  For \(q>11\Delta/6\), the proof establishes
coupling independence throughout \([0,1]\) using a soft version
of Vigoda's flip dynamics.  Our main tool is a separator-shell transfer
theorem for the Potts model on induced-subgraph closed classes of graphs
of maximum degree at most \(\Delta\), with \(q\ge\Delta+1\).  It yields a
graph-uniform zero-free neighbourhood of \([0,1]\) from Hamming coupling
independence at \(0\) and a uniform coupling-independence bound on each
interval \([\delta,1]\), \(\delta\in(0,1]\).  We also obtain zero-free Lee--Yang
polydiscs around the uniform field for vertex- and edge-colour fields.
For Boolean Holant problems on bounded-degree graphs whose signatures
come from a fixed finite family of log-concave symmetric signatures \(f\)
with \(f(0)>0\), such as \(b\)-matchings, we obtain graph-uniform zero-free
polytubes around every bounded box of nonnegative activities; their
union is an open zero-free neighbourhood of the nonnegative orthant.
An appendix summarizes further coupling-independence
inputs and the zero-free regions they yield.
\end{abstract}

\paragraph*{AI disclosure.}
The main text was developed with assistance from GPT-5.5, particularly
in the proof of coupling independence.  The exposition was polished
with Claude Opus 5.5.  The authors assume responsibility for all claims.

\setcounter{tocdepth}{2}
\tableofcontents

\section{Introduction}
\label{sec:intro}

The Potts model, which arises in statistical physics, describes
interactions between neighbouring spins.  In this paper, we study the
antiferromagnetic Potts model and the complex zeros of its partition
function.  The model is defined on a finite simple graph \(G=(V,E)\),
with each vertex assigned one of \(q\) spins, also called colours, from
\(\colours=\{1,\ldots,q\}\).  A configuration is a mapping
\(\sigma\colon V\to\colours\).  Its weight is \(z^{m_G(\sigma)}\), where
\(z\) is an edge activity and \(m_G(\sigma)\) is the number of edges
whose endpoints receive the same colour.  The \emph{partition function}
is
\begin{equation}
  \Zpart_G(q;z):=\sum_{\sigma\colon V\to\colours}z^{m_G(\sigma)},
  \qquad
  m_G(\sigma):=\bigl|\{uv\in E:\sigma_u=\sigma_v\}\bigr|.
  \label{eq:intro-potts-Z}
\end{equation}
The model is \emph{antiferromagnetic} if the edge activity is a real
number \(x\in[0,1]\); we reserve \(z\) for complex activities.  In this
regime, each monochromatic edge contributes a factor \(x\le1\), whereas
an edge whose endpoints receive different colours contributes~\(1\).
In particular, if \(x=0\), then only configurations with no
monochromatic edges contribute nonzero weight, and \(\Zpart_G(q;0)\)
counts the \emph{proper \(q\)-colourings} of \(G\).  At \(x=1\), the
spins are independent, and \(\Zpart_G(q;1)=q^{|V|}\).

In statistical physics, when \(x\ge0\), the partition function is the
normalizing factor of the Gibbs distribution, under which a
configuration \(\sigma\) has probability
\(x^{m_G(\sigma)}/\Zpart_G(q;x)\).  The partition function also
determines the free energy, whose nonanalytic behaviour in the
thermodynamic limit is used to describe phase transitions.  Studying the
partition function at complex activities provides a way to understand
its analyticity.  The zeros of \(\Zpart_G(q;z)\) in the complex edge
activity \(z\) are called \emph{Fisher zeros}~\cite{Fisher1965}.
Throughout, \(q\) is a fixed positive integer.

From the computational perspective, approximating these partition
functions is a basic problem that has driven the development of sampling
algorithms and deterministic approximate counting.  Graph colourings are
a particularly important test case: how many colours, relative to the
maximum degree \(\Deg\), suffice for efficient approximation?

\paragraph{Approximate counting of colourings.}
The classical approach uses Markov chain Monte Carlo (MCMC): a rapidly
mixing chain supplies approximate samples, which can be converted into a
fully polynomial-time randomized approximation scheme (FPRAS) for
counting.  Jerrum~\cite{Jerrum1995} established rapid mixing of Glauber
dynamics when \(q>2\Deg\).  Vigoda~\cite{Vigoda2000} introduced a coupling
of bounded two-colour flips and improved the condition to
\(q>11\Deg/6\), a threshold that remained the benchmark for general
graphs for nearly two decades.  Subsequent refinements reached
\(q\ge(11/6-\varepsilon_0)\Deg\) for an absolute constant
\(\varepsilon_0\approx10^{-5}\)~\cite{CDMPP2019},
and \(q\ge1.809\Deg\) when \(\Deg\ge125\)~\cite{CarlsonVigoda2024}.

Deterministic algorithms developed along two complementary directions.
The first is \emph{correlation decay}, exemplified by Weitz's algorithm
for the hard-core model~\cite{Weitz2006}.  Its central idea is to
approximate local marginals by truncating a recursive computation once
the influence of distant boundary conditions becomes sufficiently small.
For colourings, Gamarnik and Katz~\cite{GamarnikKatz2012} developed a
computation-tree approach and obtained an FPTAS on triangle-free graphs.
Lu and Yin~\cite{LuYin2013} subsequently obtained an FPTAS on general
graphs when \(q\ge\alpha\Deg+1\), for every fixed
\(\alpha>\alpha_{\mathrm{LY}}\approx2.58071\).
Here and below, deterministic approximation results are understood with
\(q\) and \(\Deg\) fixed.

The second direction uses \emph{complex zero-freeness}.  Barvinok's
interpolation method approximates the logarithm of a partition function
inside a zero-free region~\cite{Barvinok2016}; the bounded-degree
coefficient-computation method of Patel and Regts~\cite{PatelRegts2017}
turns this approach into an FPTAS in the settings considered here.
For the antiferromagnetic Potts model, Bencs, Davies, Patel, and
Regts~\cite{BencsDaviesPatelRegts2021} established a zero-free complex
neighbourhood of \([0,1)\) for \(q\ge e\Deg+1\), allowing independent
edge activities.  Liu, Sinclair, and
Srivastava~\cite{LiuSinclairSrivastava2025} obtained a zero-free
neighbourhood of \([0,1]\) for \(\Deg\ge3\) and \(q\ge2\Deg\), and
hence an FPTAS for counting colourings at this threshold.
Bencs, Berrekkal, and Regts~\cite{BencsBerrekkalRegts2026} further
improved the condition to \(q\ge1.998\Deg\), for \(q,\Deg\ge3\).

More recently, Chen, Feng, Guo, Zhang, and Zou~\cite{CFFGZZ2024}
introduced a deterministic counting algorithm based on \emph{coupling
independence}.  Informally, this property says that changing a single
pinned spin admits a coupling of the resulting conditional distributions
with a bounded expected number of disagreements, uniformly over the
size of the graph.  Together with a lower bound on nonzero conditional
marginals, it yields an FPTAS and recovers the improved colouring
thresholds of the sampling algorithms above.  Thus deterministic
approximate counting now reaches, and even surpasses, the classical
\(11\Deg/6\) threshold.  However, the corresponding zero-freeness
results had not reached this threshold on general bounded-degree
graphs.  This leaves a structural question: can the coupling estimates
that support efficient sampling and deterministic counting also exclude
complex zeros?

\subsection{Our main result}

We establish a zero-free neighbourhood of the entire antiferromagnetic
interval \([0,1]\) whenever \(q\ge11\Deg/6\), including equality.
The neighbourhood is uniform over all graphs of maximum degree at most
\(\Deg\), and the result remains valid under arbitrary pinnings after
removing the zeros forced by edges between pinned vertices.

To state this stronger conditional form, let \(\Lambda\subseteq V\),
let \(\tau\colon\Lambda\to\colours\) be a pinning, and let
\(m_G(\tau)\) denote the number of monochromatic edges in \(G[\Lambda]\).
Define the \emph{normalized pinned partition function} by
\begin{equation}
  \nZpin{G}{\tau}(q;z)
  :=\sum_{\substack{\sigma\colon V\to\colours\\
                    \sigma|_\Lambda=\tau}}
       z^{m_G(\sigma)-m_G(\tau)}.
  \label{eq:intro-normalized-pinned}
\end{equation}
The usual pinned partition function consequently satisfies
\begin{equation}
  \Zpin{G}{\tau}(q;z)
  =z^{m_G(\tau)}\nZpin{G}{\tau}(q;z).
  \label{eq:intro-normalization}
\end{equation}
This normalization removes the contribution of edges whose endpoints
are both pinned.  For \(I\subseteq\C\), write
\(\mathcal U_\eps(I):=\{z\in\C:\dist(z,I)<\eps\}\).

\begin{theorem}[Potts zero-freeness]
\label{thm:intro-main}
Let \(q,\Deg\) be integers with \(\Deg\ge2\) and \(q\ge11\Deg/6\).
There exists \(\eps=\eps(q,\Deg)>0\) such that, for every finite simple
graph \(G\) of maximum degree at most \(\Deg\) and every pinning \(\tau\),
\[
  \nZpin{G}{\tau}(q;z)\ne0
  \qquad\text{for every }z\in\mathcal U_\eps([0,1]).
\]
In particular, \(\Zpart_G(q;z)\ne0\) throughout this neighbourhood.
The same holds for the usual pinned partition function whenever
\(\tau\) is proper on \(G[\Lambda]\); otherwise, its only zero in this
neighbourhood is at \(z=0\), with multiplicity \(m_G(\tau)\).
\end{theorem}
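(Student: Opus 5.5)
The plan is to derive \cref{thm:intro-main} from the two ingredients announced in the abstract: the separator-shell transfer theorem and coupling-independence estimates for the antiferromagnetic Potts model. The relevant class is \(\Gdeg\), the finite simple graphs of maximum degree at most \(\Deg\). It is closed under induced subgraphs. For \(\Deg\ge2\) and \(q\ge11\Deg/6\) we have \(q\ge\Deg+1\): this is immediate for \(\Deg\ge6\), and the cases \(2\le\Deg\le5\) are checked directly, since \(q\) is an integer. So the transfer theorem applies. It produces a graph-uniform \(\eps>0\) with \(\nZpin{G}{\tau}(q;z)\ne0\) on \(\mathcal U_\eps([0,1])\) for all pinnings, once we supply two inputs. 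The first is Hamming coupling independence at \(x=0\), i.e.\ for proper colourings. The second is, for each \(\delta\in(0,1]\), a coupling-independence bound holding uniformly for \(x\in[\delta,1]\), over all \(G\in\Gdeg\) and all pinnings.

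For \(q>11\Deg/6\), both inputs would come from a contraction argument. At \(x=0\) I would use Vigoda's flip dynamics, and on \([\delta,1]\) its soft version in which Kempe-component flips are weighted so as to respect the factor \(x^{m_G(\sigma)}\). The steps are as follows. First, run two copies of the chain from the conditional Gibbs measures that differ at one pinned vertex. Second, couple them by Vigoda's coupling, whose one-step expected change in Hamming distance is at most \(-c\) times the distance, with \(c>0\) depending only on \(q/\Deg-11/6\). Third, use stationarity to conclude that the expected number of disagreements under the stationary coupling is \(O(1)\), uniformly in the graph and the pinning. The soft version has to reproduce Vigoda's linear-programming accounting with \(x\)-dependent flip probabilities. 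I expect this to be the main technical obstacle, and I would first check that the bad-case weights remain monotone in \(x\), so that the worst case is \(x=0\). This also explains why the interval \([\delta,1]\) is needed: positive \(x\) gives positive marginals and irreducibility, and uniformity only has to hold on each compact subinterval.

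The boundary case \(q=11\Deg/6\) requires \(6\mid\Deg\). Here Vigoda's contraction is not strict, so this case has to be handled separately. At \(x=0\), I would invoke the known improvement to \(q\ge(11/6-\varepsilon_0)\Deg\) of \cite{CDMPP2019}, together with the Hamming coupling independence of \cite{CFFGZZ2024}, which the paper presumably records among its cited coupling-independence inputs. For \(x\in[\delta,1]\), the soft weights should yield strict contraction at equality, since a monochromatic edge now has positive weight and each bad flip is damped by a factor depending on \(\delta\). If this fails, I would fall back on the \(\varepsilon_0\)-improved analysis there as well.

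Finally, the statement about the usual pinned partition function follows from \eqref{eq:intro-normalization}. Indeed \(\Zpin{G}{\tau}=z^{m_G(\tau)}\nZpin{G}{\tau}\), and the second factor is nonvanishing on \(\mathcal U_\eps([0,1])\). Hence \(\Zpin{G}{\tau}\) has no zeros there when \(m_G(\tau)=0\), which means \(\tau\) is proper on \(G[\Lambda]\). Otherwise its only zero is at \(z=0\), with multiplicity exactly \(m_G(\tau)\). Taking \(\Lambda=\emptyset\) gives \(\Zpart_G(q;z)\ne0\).
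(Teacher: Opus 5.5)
Your reduction is the paper's: apply \Cref{thm:potts-transfer} to \(\Gdeg\), using \Cref{thm:strict-ci} for \(q>11\Deg/6\). At equality, use \cite[Theorem~20]{CFFGZZ2024} at \(x=0\) together with a positive-activity bound on \([\delta,1]\), then conclude with \eqref{eq:intro-normalization}. The gap is the coupling input for \(x>0\), which you defer as the main obstacle and which, as you describe it, does not work. Swapping a \(\{c,d\}\)-component of the free graph never changes the number of monochromatic free edges. Monochromatic edges inside the component stay monochromatic, and every neighbour outside it has a colour outside \(\{c,d\}\). So weighting Kempe flips by powers of \(x\) is vacuous on free edges, and the chain is not ergodic for \(\mupin{G}{\tau}{x}\), \(0<x<1\), once \(G^\tau\) has an edge.

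The missing idea is the decomposition \(x\one{c=c'}+\one{c\ne c'}=x+(1-x)\one{c\ne c'}\). Activate each satisfied constraint independently with probability \(1-x\). Given the active set \(A\), the colouring is uniform on the hard list instance \(F_A\) (\Cref{lem:active-rep}). Vigoda's flips are then run on \(F_A\), whose components can create and destroy monochromatic edges of \(G\). Vigoda's profile is used unchanged, so no \(x\)-dependent LP is needed. The activity \(x\) enters only through the expected number of active constraints at the disagreement vertex, which gives scaled drift \(\frac{11}{6}(1-x)\Deg-q\) (\Cref{lem:soft-contraction}). What must be reproved is the one-step estimate for two \emph{different} hard instances: under common coins, \(F_X\) and \(F_Y\) differ at the root in edges and list (\Cref{lem:coupled-activation-root-local,prop:hard-one-step}). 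This case is not covered by Vigoda or by \cite{CDMPP2019}.

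Your stationarity step also needs an additive term. The two copies run different kernels, because the root pin changes the constraints at its neighbours. A coupling that only contracts, with one-step change at most \(-c\) times the distance, would force the stationary distance to zero, so the \(O(1)\) must come from a one-step discrepancy between the kernels. The paper compares each child with the common middle system \((G-r,\tau)\). The at most \(\Deg\) added free--pinned constraints move one step by at most \((1-x)\Deg/(mq)\) in \(\WHam\) (\Cref{lem:boundary-sensitivity}, using \(\sup_s sp_s=1\)). Then \Cref{lem:stationary-comparison} with \(\kappa=\kappa_x/m\) and the triangle inequality give \(2(1-x)\Deg/(q-\frac{11}{6}(1-x)\Deg)\). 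At \(q=11\Deg/6\) this equals \(12(1-x)/(11x)\le12/(11\delta)\) on \([\delta,1]\) (\Cref{cor:critical-line-input}), so the equality case needs no fallback. An \(\varepsilon_0\)-improved analysis at positive activity would not be available off the shelf in any case. For \(q>11\Deg/6\), the \(x=0\) bound then follows from finite-state continuity (\Cref{lem:finite-endpoint-closure}), so no separate hard chain is needed.
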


The theorem improves the previous general-graph Fisher zero-freeness
condition \(q\ge1.998\Deg\) to \(q\ge11\Deg/6\approx1.833\Deg\).
For fixed \((q,\Deg)\), it also gives an FPTAS for
\(\Zpart_G(q;x)\) at rational activities \(x\in[0,1]\), by applying
Barvinok's interpolation method~\cite{Barvinok2016} from the explicitly
computable value \(\Zpart_G(q;1)=q^{|V|}\) and computing the required
coefficients with the algorithm of Patel and Regts~\cite{PatelRegts2017}.
The neighbourhood can be chosen effectively from the explicit coupling
bounds and the finite collection of bounded local systems in the proof.
The contribution is the exclusion of complex zeros throughout this
interval, including the hard-colouring endpoint; stronger counting
thresholds at that endpoint already follow from~\cite{CFFGZZ2024}.

\subsection{From coupling independence to zero-freeness}
\label{sec:intro-transfer}

Our proof establishes a transfer principle from real couplings to
complex zero-freeness.  We say that a model satisfies \emph{Hamming
coupling independence} with constant \(C\) if, under every pinning,
fixing a free vertex to two different feasible spins produces two
conditional Gibbs distributions that admit a coupling with expected
Hamming disagreement at most \(C\) on the remaining free vertices.
The constant is independent of the graph size.
This controls the total effect of a local change and can be established
using contractive couplings of Markov chains.

The transfer uses a distance shell to separate a root from the rest of
the graph.  If a coupling has expected total disagreement at most
\(C\), then among \(L\) disjoint shells there is one on which its
expected disagreement is at most \(C/L\).  This averaging observation
also underlies the influence-decay argument
of~\cite{CFFGZZ2024}.  On bounded-degree graphs, choosing \(L\) to be
a sufficiently large constant gives a separator with a bounded interior
and a small coupling cost on its boundary.

Conditioning on the separator expresses the partition-function response
to a complex perturbation as an average under a real boundary
distribution.  Our complex-average stability lemma controls how this
average changes when the boundary distribution changes, using its
Hamming Wasserstein distance and the variation of the exterior response.
An induction bounds that variation, while the bounded interior
contributes a uniformly small error.  The small coupling cost makes the
induction close and prevents cancellation in the partition function.
At \(x=0\), configurations forbidden by the hard constraint can acquire
nonzero weight under perturbation.  We control their contribution as an
additive error, obtaining a zero-free disc that joins the
positive-activity neighbourhoods.

The following statement summarizes the transfer principle for the Potts
model.  Its formulation separates the hard-colouring endpoint from
positive activities, so it does not require a coupling bound that is
uniform as \(x\downarrow0\).

\begin{theorem}[Coupling independence implies zero-freeness, informal]
\label{thm:potts-transfer-informal}
Fix \(\Deg\ge2\) and \(q\ge\Deg+1\), and let \(\mathcal G\) be a class
of finite simple graphs of maximum degree at most \(\Deg\), closed
under taking induced subgraphs.  Suppose the Potts model on this class
satisfies coupling independence at \(x=0\), and uniformly on
\([\delta,1]\) for every fixed \(\delta>0\), under arbitrary pinnings.
Then the normalized pinned partition functions have a common zero-free
complex neighbourhood of \([0,1]\), independent of the graph and the
pinning.
\end{theorem}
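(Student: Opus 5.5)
The plan is to prove, by induction on the number of free vertices, a quantitative statement that is stronger than nonvanishing. Fix a free vertex \(v\) and consider the ratios \(R^{\tau}_{v,c}(z):=\nZpin{G}{\tau\wedge(v\mapsto c)}(q;z)/\nZpin{G}{\tau\wedge(v\mapsto c')}(q;z)\) for feasible colours \(c,c'\). The hypothesis is that for \(z\) in a neighbourhood \(\mathcal U_\eps([0,1])\) each such ratio lies in a small multiplicative neighbourhood of its real value at some real \(x\in[0,1]\) with \(|z-x|<\eps\). Concretely, \(|\Log R(z)-\Log R(x)|\le\eta\) for a small absolute \(\eta\). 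Since \(\nZpin{G}{\tau}=\sum_c\nZpin{G}{\tau\wedge(v\mapsto c)}\), a sum of complex numbers that are all close in argument to positive reals cannot vanish. This closeness therefore gives zero-freeness, and pinning vertices one at a time telescopes the whole partition function. Closure under induced subgraphs and the normalization \eqref{eq:intro-normalized-pinned} let us delete pinned vertices and absorb their effect into vertex fields, so the induction stays inside \(\mathcal G\). The condition \(q\ge\Deg+1\) guarantees that every free vertex has a feasible colour at \(x=0\), so all real conditional measures are well defined.

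The inductive step uses the separator shell. Let \(C\) be the coupling-independence constant at the relevant \(x\). Take \(L\) disjoint distance shells \(S_1,\dots,S_L\) around \(v\), with \(L\) a large constant depending on \(q,\Deg,C\). Couple the two real conditional measures (with \(v\) set to \(c\) or to \(c'\)) with expected disagreement at most \(C\). By averaging, some shell \(S=S_j\) carries disagreement at most \(C/L\). Conditioning on the configuration \(\xi\) on \(S\) splits the graph into a bounded interior \(B\) and an exterior. We then write
\[
  \nZpin{G}{\tau\wedge(v\mapsto c)}(z)
  =\sum_{\xi} \nZpin{B}{\cdot}(z)\,\nZpin{\mathrm{ext}}{\xi}(z).
\]
Normalized by the real values, this is a complex average of the exterior response \(\xi\mapsto \nZpin{\mathrm{ext}}{\xi}(z)/\nZpin{\mathrm{ext}}{\xi}(x)\) under the real boundary law. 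The interior \(B\) has bounded size, so its contribution varies by \(O(\eps)\) uniformly. The exterior response has strictly fewer free vertices. By induction, changing \(\xi\) at one site of \(S\) changes its logarithm by a controlled amount, since that change is a single-site ratio again. A complex-average stability estimate of the form
\[
  \bigl|\Log \E_{\mu}F-\Log\E_{\nu}F\bigr|\lesssim W_{1,\Ham}(\mu,\nu)\cdot\max_{\text{1-site}}|\Log F(\xi)-\Log F(\xi')|
\]
then bounds the log-ratio change by \((C/L)\eta'+O(\eps)\). Choosing \(L\) large and then \(\eps\) small closes the induction with the same \(\eta\).

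The parameters are uniform on each \([\delta,1]\) because \(C\) is uniform there. Near \(x=0\), we use the coupling bound at \(x=0\) itself and treat \(z\) in a small disc around \(0\) as a perturbation of the hard-colouring measure. Configurations with monochromatic edges in the free region carry weight \(O(|z|)\), and in the bounded interior their relative contribution is an additive error of size \(O(|z|)\). In the exterior the induction handles them. The disc around \(0\) overlaps \([\delta,1]\) for some \(\delta\); we fix that \(\delta\), and then the \([\delta,1]\) neighbourhoods finish the covering.

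I expect the main obstacle to be the stability step: controlling the variation of the complex exterior response in a way that is consistent with the induction hypothesis. Bounded log-ratios for single-site changes must compose along paths between boundary conditions that differ on many sites of \(S\). Lower bounds on real marginals, which follow from \(q\ge\Deg+1\) and the bounded degree, are needed so that the normalizations remain bounded. The additive error at \(z\approx0\) is the second delicate point.
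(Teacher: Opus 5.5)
Your outline is the paper's proof. It has the same ingredients: induction on free vertices with the one-vertex log-response as the inductive quantity, and a low-cost shell among \(L\) disjoint shells obtained by projecting one coupling. The separator identity is read as a complex average of the exterior response under the real shell law, the parent is handled by a half-plane argument, and a separate disc at \(z=0\) is glued to the \([\delta,1]\) neighbourhoods. Two steps fail as written, though.

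First, your stability estimate \(|\Log\E_\mu F-\Log\E_\nu F|\lesssim\WHam(\mu,\nu)\cdot(\text{one-site Lipschitz constant})\) is false without a bound on the total oscillation of \(\Log F\). Two equally likely shell colourings with log-responses \(0\) and \(i\pi\) give a zero average. A one-site bound \(\eta\) only yields oscillation \(s\eta\) over a shell of size up to \(s=\Deg(\Deg-1)^L\). \Cref{lem:cavg} assumes oscillation at most \(1/8\), which forces \(\eta\lesssim1/s\). So \(\eta\) cannot be an absolute constant fixed before \(L\). Choose \(L\ge64C\) from \(C\) alone, then \(\eta\le1/(16s)\), then \(\eps\). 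This closes because the inequality \(2\eta C/L+\eta/2<\eta\) is homogeneous in \(\eta\); the interior error is made at most \(\eta/2\) by the last choice of \(\eps\).

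Second, with normalized children the parent identity is \(\nZpin{G}{\tau}(z)=\sum_c z^{\bcount_v^\tau(c)}\nZpin{G}{\tau^{v=c}}(z)\), not a plain sum. On \([\delta,1]\) the multipliers \(z^m/x^m\), \(m\le\Deg\), have small argument. Near \(z=0\), however, the terms of colours blocked at \(v\) have arbitrary argument, so ``all terms near the positive axis'' fails. The paper keeps only unblocked colours in the half-plane. It bounds the blocked terms by \(K|z|\) times the main term, using \(|R_c/R_{a_0}|\le e^\alpha\) and the comparison \(\nZpin{G}{\tau^{y=a}}(0)\le q^\Deg\nZpin{G}{\tau^{y=b}}(0)\) of \Cref{lem:potts-hard-comparison}.

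Iterated over the shell, the same comparison is what makes your interior \(O(|z|)\) error small relative to \(\E_{\nu_c}[e^h]\). That error includes shell colourings of zero base weight. Their exterior values at \(0\) must be compared, through at most \(s\) one-site changes, with that of the shell restriction of a hard-feasible colouring. This is a ratio bound on normalized children, not a lower bound on marginals: a blocked colour has marginal \(0\) at \(x=0\).

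It also means the response bound must hold for all colour pairs, not just ``feasible'' ones, using normalized children, which are positive at \(0\) by \Cref{lem:hard-feas}. Hamming paths between proper shell colourings pass through improper ones, for example when swapping the colours of two adjacent shell vertices. The error sum also runs over improper shell colourings.
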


The coupling assumptions at \(x=0\) are interpreted after deleting
edges between pinned vertices, consistently with
\eqref{eq:intro-normalized-pinned}.
The transfer requires no uniform lower bound on nonzero conditional
marginals.  This matters near \(x=0\): a colour forbidden by a pinned
neighbour can have positive but vanishing marginal as \(x\downarrow0\).
The precise statement is \Cref{thm:potts-transfer}, proved in
\Cref{sec:transfer-principle}.

To obtain \Cref{thm:intro-main}, we extend Vigoda's flip coupling from
proper colourings to positive Potts activities.  The identity
\[
  x\one{c=c'}+\one{c\ne c'}
  =x+(1-x)\one{c\ne c'}
\]
represents a soft interaction as a mixture of an unconstrained edge and
a hard inequality.  Conditioning on the resulting random constraints
produces a list-colouring instance.  We couple both the constraints and
the bounded flips, allowing the two instances to differ in edges and
lists near the initial disagreement.  Averaging the contraction estimate
then yields coupling independence.  For \(q>11\Deg/6\), the bound is
uniform on \([0,1]\).  At equality, it is uniform on each
\([\delta,1]\), and we use the hard-colouring coupling bound
of~\cite[Theorem~20]{CFFGZZ2024} at \(x=0\).

\paragraph{Further consequences.}
The separator argument applies more broadly to local models for which
the required coupling and bounded-interior estimates hold: its
complex-average stability lemma is a statement about arbitrary laws on
a finite metric space.  We develop the following applications.
\begin{itemize}
  \item \emph{Colour fields.}
  For proper vertex and edge colourings, the method gives Lee--Yang
  zero-freeness under independent colour fields, using the
  coupling-independence inputs specified in \Cref{sec:lee-yang}.
  \item \emph{Log-concave Holant problems.}
  For Boolean Holant problems on bounded-degree graphs with signatures
  from a fixed finite family of log-concave symmetric signatures \(f\)
  satisfying \(f(0)>0\), we obtain graph-uniform zero-free polytubes
  around every bounded box of nonnegative activities
  (\Cref{sec:holant-consequence}).
  \item \emph{Further Potts regimes.}
  Additional coupling estimates, primarily from the companion
  article~\cite{ShaoShi2026FurtherPotts}, give further zero-free
  regions through the same transfer principle.  These include the
  condition \(q\ge1.809\Deg\) for \(\Deg\ge125\); the inputs and
  consequences are collected in \Cref{app:additional-potts}.
\end{itemize}

\subsection{Related work}

\paragraph{Correlation decay and complex zeros.}
The connection between spatial mixing and analyticity has a long history
in statistical physics.  For lattice systems, Dobrushin and Shlosman
related suitable strong mixing conditions to \emph{complete
analyticity}, which includes nonvanishing of finite-volume partition
functions under small complex
perturbations~\cite{DobrushinShlosman1985,DobrushinShlosman1987}.
On bounded-degree graphs, a substantial line of work extends contraction
estimates for real ratio recursions to complex parameters, proving
zero-freeness for the hard-core model, the Ising model, and more general
spin systems~\cite{PetersRegts2019,LiuSinclairSrivastava2019,PetersRegts2020,ShaoSun2021,LiuSinclairSrivastava2025}.
Recent work treats the hard-core model on graphs of bounded connective
constant~\cite{ChenShaoShi2026}.
Peters, Regts, and Reppekus~\cite{PetersRegtsReppekus2026} show that,
for suitable hard-core graph families, a strengthening of strong spatial
mixing called \emph{very strong spatial mixing} implies zero-freeness.

For the Potts model, the proofs of Liu, Sinclair, and Srivastava and of
Bencs, Berrekkal, and Regts perturb real recursion estimates to nearby
complex activities~\cite{LiuSinclairSrivastava2025,BencsBerrekkalRegts2026}.
Stronger thresholds are available with additional graph structure:
on triangle-free graphs, Liu, Sinclair, and Srivastava obtain
zero-freeness near \([0,1]\) when \(q\ge\alpha\Deg+\beta\), where
\(\alpha>\alpha_*\approx1.7632\), \(\alpha_*e^{-1/\alpha_*}=1\),
and \(\beta\) depends on \(\alpha\).  Their argument builds on the
strong spatial mixing estimates of Gamarnik, Katz, and
Misra~\cite{GamarnikKatzMisra2015}.

The reverse direction, from zero-freeness to spatial mixing, has also
seen considerable progress.  Gamarnik~\cite{Gamarnik2023} derives an
asymptotic form of spatial mixing at superlogarithmic distances.
Regts~\cite{Regts2023} and Shao and Ye~\cite{ShaoYe2024} establish
strong spatial mixing from suitable zero-free regions for graph
polynomials and spin systems, with the required closure and boundary
conditions.  For the ferromagnetic Ising model, Shao and
Shi~\cite{ShaoShi2026} obtain a Weitz-type FPTAS throughout the
Lee--Yang zero-free region without requiring strong spatial mixing on
the whole region.  Together, these results clarify when analytic
information can support local recursive approximation algorithms.

\paragraph{Couplings, spectral independence, and zero-freeness.}
Spectral independence controls the eigenvalues of conditional influence
matrices and provides a powerful route to rapid mixing, initiated by
Anari, Liu, and Oveis Gharan~\cite{AnariLiuOveisGharan2020}.
Chen, Liu, and Vigoda~\cite{ChenLiuVigoda2024} give criteria for
spectral independence based on multivariate zero-freeness.
Coupling independence supplies another such criterion: it was
formalized for Boolean distributions by Chen and
Zhang~\cite{ChenZhang2023}, and subsequently used by Chen and
Feng~\cite{ChenFeng2024} to study mixing of multi-spin systems.
The deterministic counting theorem of Chen, Feng, Guo, Zhang, and
Zou~\cite{CFFGZZ2024} gives it an additional algorithmic role.
Our transfer principle establishes a further implication, from coupling
independence to complex zero-freeness in the settings considered here.

Another recent approach obtains zero-freeness by extending Markov-chain
arguments themselves to complex parameters.  Liu, Wang, Yin, and
Yu~\cite{LiuWangYinYu2025} develop this approach for hypergraph
independent sets, and Liu and Yu~\cite{LiuYu2026} extend it to
hypergraph colourings and constraint satisfaction problems.
Our proof uses real couplings as input to a complex stability argument
across a separator, allowing the probabilistic and analytic parts of the
proof to be developed separately.

\paragraph{Independent work.}
Independently, Jingcheng Liu, Chunyang Wang, Yitong Yin, Yuichi Yoshida,
Yixiao Yu, and Zihan Zhang (personal communication) have also shown
that coupling independence implies zero-freeness.  Their results include
multivariate zero-free neighbourhoods for strictly positive spin systems,
a zero-free disc at the proper-colouring endpoint, and zero-freeness
for certain symmetric Boolean Holant problems under appropriate
coupling or adjacent-ratio assumptions.

\paragraph{Organization.}
\Cref{sec:prelim} introduces the notation and coupling independence.
\Cref{sec:transfer-principle} proves the transfer theorem, and
\Cref{sec:strict-potts} establishes the soft flip coupling and
\Cref{thm:intro-main}.  The applications to colour fields and
log-concave Holant problems appear in \Cref{sec:lee-yang,sec:holant-consequence}.
\Cref{app:additional-potts} collects further Potts applications.

\section{Preliminaries}\label{sec:prelim}
\subsection{Pinnings and partition functions}\label{sec:potts-setup}

Throughout, \(q\) and \(\Deg\) are positive integers,
\(\colours=\{1,\ldots,q\}\), \(x\in[0,1]\) is a real base point, and
\(z\in\C\) is the complex edge-activity variable.  Since \(q\) is
fixed, we suppress it from partition functions, writing, for example,
\(\nZpin{G}{\tau}(z)\) for \(\nZpin{G}{\tau}(q;z)\).  We write \(\one{A}\)
for the indicator of an event \(A\), and use the convention \(0^0=1\).
For a finite graph \(G\), write
\[
  \Delta(G):=\max_{v\in V(G)}\deg_G(v)
\]
for its maximum degree; the symbol \(\Deg\) denotes a fixed degree bound.
Let
\[
  \Gdeg:=\set{G:\text{$G$ is finite, simple, and }\Delta(G)\le\Deg},
\]
and, for \(g\ge3\), let
\[
  \Gdegg{g}:=\set{G\in\Gdeg:\operatorname{girth}(G)\ge g}.
\]
Both classes are closed under taking induced subgraphs.  We use the convention that
an acyclic graph, including the empty graph, has girth \(+\infty\).

Fix \(G=(V,E)\in\Gdeg\).  For a pinning
\(\tau\colon\Lambda\to\colours\), write
\[
  V^\tau:=V\setminus\Lambda,
  \qquad
  G^\tau:=G[V^\tau],
\]
and call \(G^\tau\) the \emph{free graph}.  For \(u\in V^\tau\) and
\(c\in\colours\), define the boundary count
\begin{equation}\label{eq:pin-counts}
  \bcount_u^\tau(c)
  :=\abs{\set{p\in\Lambda:up\in E,\ \tau_p=c}}.
\end{equation}
Boundary counts record pinned neighbours separately, even when several
have the same colour.  The normalized pinned
partition function from \eqref{eq:intro-normalized-pinned} has the factorized
form
\begin{equation}\label{eq:potts-Z}
  \nZpin{G}{\tau}(z)
  =\sum_{\varphi\colon V^\tau\to\colours}
       \prod_{\substack{uv\in E\\\{u,v\}\not\subseteq\Lambda}}
       \bigl(z\one{(\tau\cup\varphi)_u=(\tau\cup\varphi)_v}
             +\one{(\tau\cup\varphi)_u\ne(\tau\cup\varphi)_v}\bigr).
\end{equation}
Equivalently,
\[
  \nZpin{G}{\tau}(z)
  =\sum_{\varphi\colon V^\tau\to\colours}
     \prod_{u\in V^\tau}z^{\bcount_u^\tau(\varphi_u)}
     \prod_{uv\in E(G^\tau)}
       \bigl(z\one{\varphi_u=\varphi_v}+\one{\varphi_u\ne\varphi_v}\bigr).
\]

Whenever \(\nZpin{G}{\tau}(x)>0\), let \(\mupin{G}{\tau}{x}\) be the
Gibbs law on \(\colours^{V^\tau}\):
\[
  \mupin{G}{\tau}{x}(\varphi)
  :=\frac{1}{\nZpin{G}{\tau}(x)}
     \prod_{u\in V^\tau}x^{\bcount_u^\tau(\varphi_u)}
     \prod_{uv\in E(G^\tau)}
       \bigl(x\one{\varphi_u=\varphi_v}+\one{\varphi_u\ne\varphi_v}\bigr).
\]
For \(x>0\), this is the usual conditional Gibbs law: removing the
pinned-only edge contributions changes only the normalizing constant.  At \(x=0\), it is the normalized hard
pinned law, which remains meaningful even when \(\tau\) is improper.

For \(v\in V^\tau\) and \(a\in\colours\), define the \emph{child} pinning
\[
  \tau^{v=a}:=\tau\cup\{v\mapsto a\}.
\]
Then \(V^{\tau^{v=a}}=V^\tau\setminus\{v\}\), and
\begin{equation}\label{eq:potts-colour-recursion}
  \nZpin{G}{\tau}(z)
  =\sum_{a\in\colours}
     z^{\bcount_v^\tau(a)}\nZpin{G}{\tau^{v=a}}(z).
\end{equation}
The multiplier \(z^{\bcount_v^\tau(a)}\) accounts for the edges from \(v\)
to \(\Lambda\), which are pinned-only under \(\tau^{v=a}\); the normalized
function \(\nZpin{G}{\tau^{v=a}}\) omits them, so it has no forced zero at
\(z=0\) even when this multiplier vanishes there.

\subsection{Hard-colouring feasibility}
A colour \(c\) is blocked at a free vertex \(u\) when
\(\bcount_u^\tau(c)>0\).  The hard effective list of \(u\) is
\[
  \hardlist{\tau}{u}:=\set{c\in\colours:\bcount_u^\tau(c)=0}.
\]
A colouring \(\varphi\colon V^\tau\to\colours\) is \emph{hard-feasible}
if it is proper on \(G^\tau\) and
\(\varphi_u\in\hardlist{\tau}{u}\) for every \(u\in V^\tau\).

\begin{lemma}[Hard feasibility]\label{lem:hard-feas}
Assume \(q\ge\Deg+1\).  For every \(G\in\Gdeg\), every pinning \(\tau\),
and every \(u\in V^\tau\),
\[
  \abs{\hardlist{\tau}{u}}
  \ge \deg_{G^\tau}(u)+1.
\]
Consequently, \(\nZpin{G}{\tau}(0)>0\).
\end{lemma}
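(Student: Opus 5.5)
The plan is a direct counting argument followed by a greedy colouring. Fix \(u\in V^\tau\) and split its neighbourhood in \(G\) into pinned neighbours \(N_\Lambda(u):=\set{p\in\Lambda:up\in E}\) and free neighbours, the latter numbering exactly \(\deg_{G^\tau}(u)\). A colour \(c\) is blocked precisely when \(\bcount_u^\tau(c)>0\), that is, when some pinned neighbour carries colour \(c\). The set of blocked colours is therefore \(\tau(N_\Lambda(u))\), and its size is at most \(\abs{N_\Lambda(u)}\); repeated colours among pinned neighbours only help, since \(\bcount\) counts multiplicity but blocking depends only on positivity. This gives
\[
  \abs{\hardlist{\tau}{u}}\ge q-\abs{N_\Lambda(u)}
  = q-\deg_G(u)+\deg_{G^\tau}(u)\ge q-\Deg+\deg_{G^\tau}(u)\ge\deg_{G^\tau}(u)+1,
\]
where the last step uses \(q\ge\Deg+1\).

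For the consequence, I construct a hard-feasible colouring greedily. Order \(V^\tau\) arbitrarily and colour each vertex \(u\) in turn from \(\hardlist{\tau}{u}\), avoiding the colours of its already coloured free neighbours. At most \(\deg_{G^\tau}(u)\) colours are excluded in this way, so the list bound leaves at least one available colour. The resulting \(\varphi\) is proper on \(G^\tau\) and satisfies \(\varphi_u\in\hardlist{\tau}{u}\) for every \(u\in V^\tau\).

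It remains to evaluate \eqref{eq:potts-Z} at \(z=0\) with the convention \(0^0=1\). For this \(\varphi\), every factor \(z^{\bcount_u^\tau(\varphi_u)}\) equals \(1\) because \(\bcount_u^\tau(\varphi_u)=0\), and every free edge factor equals \(1\) because \(\varphi\) is proper on \(G^\tau\). Every other summand is a product of terms in \(\{0,1\}\), hence nonnegative. Therefore \(\nZpin{G}{\tau}(0)\ge1>0\). If \(V^\tau=\emptyset\), the sum has the single empty term \(1\), so the conclusion still holds.

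No step here is a genuine obstacle. The only point needing care is the bookkeeping: edges with both endpoints pinned are excluded from the normalized function, which is exactly why improper \(\tau\) causes no trouble.
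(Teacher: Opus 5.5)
Your proof is correct and follows the same route as the paper: bound the number of blocked colours by the number of pinned neighbours, use \(\deg_{G^\tau}(u)+\abs{N_G(u)\cap\Lambda}=\deg_G(u)\le\Deg\), and then colour \(G^\tau\) greedily from the hard lists. Your additional check that the hard-feasible colouring contributes exactly \(1\) at \(z=0\) while every other summand is nonnegative makes explicit a step the paper leaves implicit.
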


\begin{proof}
At most \(\abs{N_G(u)\cap\Lambda}\) colours are blocked at \(u\).  Since
\[
  \deg_{G^\tau}(u)+\abs{N_G(u)\cap\Lambda}
  =\deg_G(u)\le\Deg,
\]
we have
\[
  \abs{\hardlist{\tau}{u}}
  \ge q-\abs{N_G(u)\cap\Lambda}
  \ge \deg_{G^\tau}(u)+1.
\]
A greedy list colouring of \(G^\tau\), in any vertex ordering, leaves an
available colour at every step, so a hard-feasible colouring exists and
\(\nZpin{G}{\tau}(0)>0\).
\end{proof}

\subsection{Coupling independence and Wasserstein distance}

We fix the Wasserstein notation and define coupling independence, the
probabilistic hypothesis of \Cref{thm:potts-transfer}.

Let $(\mathcal{X},d)$ be a finite metric space. For probability measures
$\nu_0,\nu_1$ on $\mathcal{X}$,
\begin{equation}\label{eq:W1def}
  \Wd(\nu_0,\nu_1) := \inf_{\pi\in\Pi(\nu_0,\nu_1)} \E_{\pi}[d(X,Y)],
\end{equation}
where $\Pi(\nu_0,\nu_1)$ denotes the set of couplings of $\nu_0$ and $\nu_1$,
and \(X,Y\) are the two coordinate projections under \(\pi\).
On a finite space the infimum is attained.
For configurations on a common site set,
\[
  \Ham(\sigma,\sigma')
    :=\abs{\set{u:\sigma_u\ne\sigma'_u}},
  \qquad
  \WHam:=\Wd\ \text{with }d=\mathord{\Ham}.
\]

\begin{definition}[Potts coupling independence]\label{def:potts-ci}
Let \(C\in[0,\infty)\), \(\mathcal G\subseteq\Gdeg\), and
\(J\subseteq[0,1]\).  We say that
\(\mathcal G\) satisfies \(C\)-\emph{coupling independence} on \(J\) if,
for every \(G\in\mathcal G\), every pinning \(\tau\), every
\(r\in V^\tau\), all \(a,b\in\colours\), and every \(x\in J\), the two
pinned laws are well defined and
\begin{equation}\label{eq:potts-ci}
  \WHam\!\left(
    \mupin{G}{\tau^{r=a}}{x},
    \mupin{G}{\tau^{r=b}}{x}
  \right)\le C.
\end{equation}
Both laws live on the common free set \(V^\tau\setminus\{r\}\), where the
Hamming distance is taken; at \(x=0\) they are the normalized hard pinned
laws.  When
\(J=\{x\}\), we say that \(\mathcal G\) satisfies \(C\)-coupling
independence \emph{at} \(x\).
\end{definition}

For \(q\ge\Deg+1\), the normalized child law \(\mupin{G}{\tau^{r=a}}{0}\)
remains well defined even when \(a\) is blocked at \(r\), by
\Cref{lem:hard-feas}.  In that case it is defined by the normalized child
weights, rather than by conditioning the parent law \(\mupin{G}{\tau}{0}\)
on \(\varphi_r=a\); in the recursion \eqref{eq:potts-colour-recursion},
only the multiplier \(z^{\bcount_r^\tau(a)}\) of the child term vanishes
at \(z=0\).

\section{From coupling independence to zero-freeness}
\label{sec:transfer-principle}

This section proves the following precise form of
\Cref{thm:potts-transfer-informal}, with coupling independence as in
\Cref{def:potts-ci}.

\begin{theorem}[Coupling independence implies zero-freeness]
\label{thm:potts-transfer}
Fix integers \(q,\Deg\) with \(\Deg\ge2\) and \(q\ge\Deg+1\), and let
\(\mathcal G\) be a class of finite simple graphs of maximum degree at
most \(\Deg\) that is closed under taking induced subgraphs.  Assume:
\begin{enumerate}[label=\textup{(\roman*)}]
  \item \(\mathcal G\) satisfies coupling independence at \(x=0\) with
        constant \(C_0\);
  \item for every \(\delta\in(0,1]\), it satisfies coupling independence
        on \([\delta,1]\) with a finite constant \(C_\delta\).
\end{enumerate}
Then there is \(\eps>0\) such that, for every \(G\in\mathcal G\) and every
pinning \(\tau\),
\[
  \nZpin{G}{\tau}(z)\ne0
  \qquad\text{for every }z\in\mathcal U_\eps([0,1]).
\]
Consequently, \(\Zpin{G}{\tau}(z)=z^{m_G(\tau)}\nZpin{G}{\tau}(z)\) has
no zeros in the same neighbourhood except, when \(m_G(\tau)\ge1\), a zero
of multiplicity \(m_G(\tau)\) at \(z=0\).
\end{theorem}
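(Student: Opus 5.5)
The plan is an induction on the number of free vertices, carried out uniformly over all pinnings. The inductive object is the family of child ratios
\[
  R^{\tau}_{v,a}(z):=\frac{z^{\bcount_v^\tau(a)}\nZpin{G}{\tau^{v=a}}(z)}{\nZpin{G}{\tau}(z)},
\]
compared with their real values at a nearby base point \(x\in[0,1]\). From \eqref{eq:potts-colour-recursion}, nonvanishing of \(\nZpin{G}{\tau}(z)\) follows once the vector \(\bigl(z^{\bcount_v^\tau(a)}\nZpin{G}{\tau^{v=a}}(z)\bigr)_a\) is shown to be a small complex perturbation of a positive vector. Concretely, I will prove by induction on \(|V^\tau|\) the bound
\[
  \Bigl|\log\frac{\nZpin{G}{\tau^{v=a}}(z)}{\nZpin{G}{\tau^{v=b}}(z)}-\log\frac{\nZpin{G}{\tau^{v=a}}(x)}{\nZpin{G}{\tau^{v=b}}(x)}\Bigr|\le\eta
\]
for every \(z\) with \(|z-x|<\eps\). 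Here \(\eta\) is a fixed small constant, and the phases of the summands then stay within \(O(\eta+\eps)\) of each other, which prevents cancellation. Near \(x=0\), the terms with \(\bcount_v^\tau(a)>0\) are handled differently: their multipliers have modulus at most \(\eps\), and they are treated as an additive error. \Cref{lem:hard-feas} guarantees that at least one unblocked colour carries real weight bounded below relative to the whole.

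The key step bounds the log-ratio variation using coupling independence, through a separator shell. Fix \(v\) and two colours \(a,b\). Consider the distance shells \(S_1,\dots,S_L\) around \(v\) in \(G^\tau\). By the hypothesis, at the base point there is a coupling of \(\mupin{G}{\tau^{v=a}}{x}\) and \(\mupin{G}{\tau^{v=b}}{x}\) with expected Hamming disagreement at most \(C\). By averaging, some shell \(S_\ell\) with \(\ell\le L\) carries expected disagreement at most \(C/L\). Conditioning on the configuration \(\xi\) on \(S_\ell\) factorizes the partition function into a bounded interior part \(B\) (at most \(\Deg^{L}\) vertices) and an exterior part. This expresses
\[
  \frac{\nZpin{G}{\tau^{v=a}}(z)}{\nZpin{G}{\tau^{v=a}}(x)}=\E_{\xi\sim\nu_a}\bigl[F_\xi(z)\,I^a_\xi(z)\bigr],
\]
where \(\nu_a\) is the real marginal on \(S_\ell\). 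The factor \(F_\xi(z)=\nZpin{G}{\tau\cup\xi}(z)/\nZpin{G}{\tau\cup\xi}(x)\), restricted to the exterior, is the same for \(a\) and \(b\). The factor \(I^a_\xi\) is a bounded-interior ratio, which is within \(O(\eps)\) of \(1\) by continuity over finitely many local systems; at \(x=0\) the same claim holds after normalization, using \(q\ge\Deg+1\).

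A complex-average stability lemma then controls the difference between the two averages. For any laws \(\nu_a,\nu_b\) on a finite space and any function \(F\),
\[
  \bigl|\E_{\nu_a}F-\E_{\nu_b}F\bigr|\le \WHam(\nu_a,\nu_b)\cdot\operatorname{Lip}_{\Ham}(F).
\]
The Lipschitz constant of \(\log F_\xi\) in \(\xi\) is itself a log-ratio variation for smaller free sets, obtained by changing one boundary spin at a time; the induction hypothesis bounds it by \(\eta\). Combining the pieces gives a bound of the form
\[
  \eta'\le \frac{C}{L}\,e^{O(\eta)}\,\eta+O_L(\eps).
\]
Choosing \(L\) large first and then \(\eps\) small closes the induction with \(\eta'\le\eta\). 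The induced-subgraph closure is what allows the hypotheses to be applied to the exterior instances.

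Uniformity over \([0,1]\) comes from splitting the interval. Near \(0\), hypothesis (i) at \(x=0\) gives a disc \(|z|<\eps_0\) around the hard-colouring point. On \([\delta,1]\) with \(\delta<\eps_0/2\), hypothesis (ii) gives a uniform \(\eps_\delta\), and the final neighbourhood uses \(\min(\eps_0/2,\eps_\delta)\). The closing statement about \(\Zpin{G}{\tau}\) is then immediate from \eqref{eq:intro-normalization}.

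I expect the main obstacle to be the \(x=0\) endpoint. There, blocked colours have vanishing real weight but nonzero complex weight. The coupling is only available for the hard laws, while the inductive ratios of blocked children are not close to any real quantity. The first approach I would try is to prove the hard-endpoint statement in absolute-error form. The target would be \(\bigl|\nZpin{G}{\tau}(z)/\nZpin{G}{\tau}(0)-1\bigr|\le\eta\) for \(|z|<\eps_0\). Blocked contributions would be bounded crudely by \(|z|\) times normalized-child ratios, using the inductive bound on those ratios, together with the fact that every normalized child law is well defined.
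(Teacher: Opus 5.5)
\textbf{Overall.} Your positive-activity argument follows the paper's architecture: a one-vertex response bound, a shell of coupling cost at most $C/L$, a complex-average comparison across the separator, the colour recursion with all summands in a half-plane, and the split of $[0,1]$ into a disc at $0$ and $[\delta,1]$. The problem is the hard endpoint. You correctly flag it as the main obstacle, but it is not resolved, and the fix you suggest cannot work.

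\textbf{The absolute-error target is false.} The bound $|\nZpin{G}{\tau}(z)/\nZpin{G}{\tau}(0)-1|\le\eta$ does not hold uniformly in the graph. For the unpinned path $P_n\in\Gdeg$ one has $Z_{P_n}(z)=q(q-1+z)^{n-1}$. The ratio is therefore $(1+z/(q-1))^{n-1}$, which leaves every small disc around $1$ for large $n$ whenever $z\ne0$. Only local, relative quantities such as the one-vertex responses can be bounded uniformly.

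\textbf{Infeasible shell colourings.} Your claim that $I^a_\xi$ is close to $1$ ``after normalization'' at $x=0$ also fails. Consider shell colourings $\xi$ that are improper on $S$ or conflict with $\Lambda$. They have $D_{a,\xi}(0)=0$, hence zero $\nu_a$-mass. Yet they contribute terms of size $O(|z|)\,\nZpin{G'}{\tau_\xi}(0)\,|e^{h_\xi}|$ to the complex sum. Nothing in your proposal bounds $\nZpin{G'}{\tau_\xi}(0)$ relative to $\nZpin{G}{\tau^{v=a}}(0)$.

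\textbf{Blocked children in the parent recursion.} The same issue arises there. The response bound controls $\nZpin{G}{\tau^{v=c}}(z)/\nZpin{G}{\tau^{v=a_0}}(z)$ only relative to its value at $0$. The base ratio $\nZpin{G}{\tau^{v=c}}(0)/\nZpin{G}{\tau^{v=a_0}}(0)$ for a blocked colour $c$ is not controlled by \Cref{lem:hard-feas}, which gives positivity only. Nor is it controlled by coupling independence, which bounds laws, not ratios of partition functions.

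\textbf{The missing ingredient.} What you need is the quantitative one-vertex comparison $\nZpin{G}{\tau^{y=a}}(0)\le q^{\Deg}\,\nZpin{G}{\tau^{y=b}}(0)$. It is proved locally: condition on $S_2(y)$ and greedily list-colour $S_1(y)$, which is where $q\ge\Deg+1$ is used. With it, keep the relative response induction at base point $0$ and proceed as follows.
\begin{enumerate}
  \item Write $R_c=M_c+E_c$, with $M_c=\E_{\nu_c}[e^h]$ and $E_c$ collecting $D_{c,\xi}(z)-D_{c,\xi}(0)$ over all $\xi$, including the infeasible ones.
  \item Take $\xi_c$ to be the shell restriction of a hard-feasible colouring, so $D_{c,\xi_c}(0)\ge1$.
  \item At most $|S|$ applications of the comparison give $|E_c|\le K|z|\,|e^{h_{\xi_c}}|$.
  \item The small oscillation of $h$ gives $|M_c|\ge\tfrac{3}{4}|e^{h_{\xi_c}}|$.
  \item The averaging lemma, with no local perturbation, then compares $M_a$ with $M_b$.
\end{enumerate}
The same comparison bounds the blocked terms against the $a_0$-term in the parent recursion.

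\textbf{A smaller point, in the positive case as well.} The inequality $|\E_{\nu_a}F-\E_{\nu_b}F|\le\WHam(\nu_a,\nu_b)\operatorname{Lip}(F)$ gives only an additive bound. To turn it into a bound on $\Log$ of the ratio, you need $|\E_{\nu_b}e^{h}|$ comparable to $\max_\xi|e^{h_\xi}|$. That requires the total oscillation of $h$ over the shell to be small, not just its Lipschitz constant. This oscillation can be as large as $|S|\eta$, and $|S|$ grows exponentially in $L$. So the factor in your recursion is really $e^{O(|S|\eta)}$, and the order of constants must be: $L$ first, then $\eta$ with $|S|\eta\le1/16$, then $\eps$. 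This is the role of the range condition \eqref{eq:cavg-range} in \Cref{lem:cavg}. If $\eta$ is fixed first and $L$ is then taken large, the shell average can cancel and logarithms cannot be compared.
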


\Cref{sec:complex-average} proves a stability estimate for complex
averages.  \Cref{sec:potts-induction} uses it in a separator induction
on each interval \([\delta,1]\): induction makes the exterior response
Lipschitz on a separator shell, and coupling independence compares the
two shell averages.  At \(0\), some interior weights vanish, so
\Cref{sec:hard-endpoint} supplies an additive-error estimate that gives a
disc around \(0\); \Cref{sec:transfer-proof-end} combines the two.

\subsection{Stability of complex averages}\label{sec:complex-average}

The separator argument writes the ratio of a normalized partition
function at \(z\) to its value at a real base point as
\(\E_\nu[e^{h+\eta}]\), where \(h\) is a common exterior response and
\(\eta\) is a small local perturbation.  We must compare two such averages
when both the law and the perturbation change.  The Lipschitz bound on
\(h\) will turn a coupling of the two laws into a bound on the change
of the exponent.  The small oscillation of \(h\), together with the
small local perturbations, will prevent complex cancellation and allow
us to compare logarithms.

\begin{lemma}[Complex averages under nearby laws]\label{lem:cavg}
Let \((\mathcal X,d)\) be a finite metric space, let \(\nu_0,\nu_1\) be
probability measures on \(\mathcal X\), and let \(A\ge0\),
\(\gamma\in[0,1/8]\).
Suppose \(h,\eta,\eta'\colon\mathcal X\to\C\) satisfy
\begin{align}
  |h(\xi)-h(\xi')|&\le A\,d(\xi,\xi'),\label{eq:cavg-lip}\\
  \max_{\xi,\xi'}|h(\xi)-h(\xi')|&\le1/8,\label{eq:cavg-range}\\
  \|\eta\|_\infty,\ \|\eta'\|_\infty&\le\gamma,\label{eq:cavg-eta}
\end{align}
where \(\|f\|_\infty=\max_\xi|f(\xi)|\).  Then the averages \(\E_{\nu_0}[e^{h+\eta}]\) and
\(\E_{\nu_1}[e^{h+\eta'}]\) are nonzero
and, with \(\Log\) of their ratio taken as the continuous branch that
vanishes at \(h=\eta=\eta'=0\) on the convex set of triples
\((h,\eta,\eta')\) satisfying
\eqref{eq:cavg-lip}--\eqref{eq:cavg-eta},
\begin{equation}\label{eq:cavg}
  \left|\Log\frac{\E_{\nu_0}[e^{h+\eta}]}{\E_{\nu_1}[e^{h+\eta'}]}\right|
  \le2A\,\Wd(\nu_0,\nu_1)+4\gamma.
\end{equation}
\end{lemma}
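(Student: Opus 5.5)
The plan is to factor out a common reference value, split the ratio into three factors, and estimate each one separately. Fix \(\xi_0\in\mathcal X\) and write \(h=h(\xi_0)+g\), so that \(\|g\|_\infty\le1/8\) by \eqref{eq:cavg-range}. The common factor \(e^{h(\xi_0)}\) cancels in the ratio, so we may work with \(g\) throughout. The basic observation is the following. If \(|w|\le1/4\), then \(e^w\) lies in the set
\[
  \{\zeta:\ e^{-1/4}\le|\zeta|\le e^{1/4},\ |\arg\zeta|\le1/4\},
\]
and any average \(\E_\nu[e^{w}]\) lies in the convex hull of this set, which sits inside the right half-plane \(\{\operatorname{Re}\zeta\ge e^{-1/4}\cos(1/4)\}\). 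Applied with \(w=g+\eta\), this gives at once that both averages are nonzero. It also shows that, after dividing by \(e^{h(\xi_0)}\), every average involved lies in a fixed compact subset of the right half-plane. Hence the continuous branch of \(\Log\) on the convex parameter set agrees with the difference of principal logarithms of the normalized averages. This settles the branch issue.

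We then write
\[
  \frac{\E_{\nu_0}[e^{h+\eta}]}{\E_{\nu_1}[e^{h+\eta'}]}
  =\frac{\E_{\nu_0}[e^{h+\eta}]}{\E_{\nu_0}[e^{h}]}\cdot
   \frac{\E_{\nu_0}[e^{h}]}{\E_{\nu_1}[e^{h}]}\cdot
   \frac{\E_{\nu_1}[e^{h}]}{\E_{\nu_1}[e^{h+\eta'}]},
\]
and bound the \(\Log\) of each factor.

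\emph{Outer factors.} The first factor is \(1+u\) with
\[
  u=\frac{\E_{\nu_0}[e^{g}(e^{\eta}-1)]}{\E_{\nu_0}[e^{g}]}.
\]
For the numerator we use \(|e^{\eta}-1|\le\gamma e^{\gamma}\) and \(|e^g|\le e^{1/8}\). For the denominator we use \(\operatorname{Re}e^{g}\ge e^{-1/8}\cos(1/8)\approx0.875\). With \(\gamma\le1/8\) these give \(|u|\le c\gamma\) with \(c\approx1.47\), so \(|u|\le0.19\). Then
\[
  |\Log(1+u)|\le\frac{|u|}{1-|u|}\le2\gamma.
\]
The third factor is treated symmetrically with \(\nu_1,\eta'\). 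Together the outer factors give the \(4\gamma\) term.

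\emph{Middle factor.} Take an optimal coupling \(\pi\) of \(\nu_0,\nu_1\). Then
\[
  \E_{\nu_0}[e^{g}]-\E_{\nu_1}[e^{g}]=\E_\pi\bigl[e^{g(X)}-e^{g(Y)}\bigr].
\]
Integrating along the segment from \(g(Y)\) to \(g(X)\), where \(|e^{\cdot}|\le e^{1/8}\), and using \eqref{eq:cavg-lip}, we get
\[
  \bigl|e^{g(X)}-e^{g(Y)}\bigr|\le e^{1/8}A\,d(X,Y).
\]
Dividing by \(|\E_{\nu_1}e^g|\ge0.875\) gives middle factor \(=1+u\) with \(|u|\le c'A\,\Wd(\nu_0,\nu_1)\), where \(c'\approx1.3\). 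This alone does not give the constant \(2\) uniformly, so we split into two cases.

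If \(A\,\Wd\le t_0\approx0.27\), then \(|u|\le0.35\), and \(|u|/(1-|u|)\le2A\,\Wd\).

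If \(A\,\Wd>t_0\), we use instead the crude bound
\[
  \bigl|\Log\E_{\nu_0}[e^g]-\Log\E_{\nu_1}[e^g]\bigr|\le2\max_{\nu}\bigl|\Log\E_\nu[e^{g}]\bigr|,
\]
where the maximum is over probability measures \(\nu\). Since \(\E_\nu[e^g]\) has modulus in \([0.875,1.14]\) and argument at most about \(0.13\), this is at most about \(0.36<2t_0\).

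The main obstacle is this middle-factor constant. The linearization gives roughly \(2.6A\,\Wd\) if applied naively, so the case split against the a priori bound on \(\Log\) of averages over the small disc is essential. I would verify the numerical constants (\(e^{\pm1/8}\), \(\cos(1/8)\), and the argument bound of the convex hull) carefully at that step.
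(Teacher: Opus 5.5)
Your proof is correct, including the branch identification through the right half-plane. It is organized differently from the paper's, though.

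\textbf{The paper's route.} After the same centring of \(h\), the paper does not factor the ratio. It fixes one optimal coupling \((X,Y)\) and interpolates the exponents, \(u_t=t(h(X)+\eta(X))+(1-t)(h(Y)+\eta'(Y))\). It then follows \(M(t)=\E[e^{u_t}]\) from the \(\nu_1\)-average to the \(\nu_0\)-average. Since \(|u_t|\le1/4\), the path stays in \(\{|w-1|<1\}\) with \(|M(t)|\ge2-e^{1/4}\), while \(|M'(t)|\le e^{1/4}(A\,\Wd(\nu_0,\nu_1)+2\gamma)\). Integrating \(M'/M\) gives the constant \(e^{1/4}/(2-e^{1/4})<2\) for both terms at once, with no case analysis.

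\textbf{Your route.} Your three factors are exactly the solid arrows of the diagram drawn after the paper's proof. The difference is that you bound each arrow by linearizing at an endpoint, \(|\Log(1+u)|\le|u|/(1-|u|)\). This is harmless for the outer factors, where \(|u|\lesssim0.19\).

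\textbf{The middle-factor case split.} For the middle factor, the loss \(1/(1-|u|)\) is what forces your case split, and the admissible window is narrow. With \(c'=e^{1/4}/\cos(1/8)\approx1.294\) you need \(t_0\le1/c'-1/2\approx0.273\). The a priori bound \(2\max_\nu|\Log\E_\nu[e^g]|\approx0.36\) needs \(t_0\gtrsim0.18\). Your \(t_0\approx0.27\) lies inside this window but at its edge: rounding \(c'\) up to \(1.3\) already breaks it. A choice like \(t_0=0.2\) leaves room on both sides.

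\textbf{Avoiding the split.} The split can be removed entirely. Both \(\E_{\nu_0}[e^g]\) and \(\E_{\nu_1}[e^g]\) lie in the convex hull of \(\{e^w:|w|\le1/8\}\), where \(\operatorname{Re}\zeta\ge e^{-1/8}\cos(1/8)\). Integrating \(d\zeta/\zeta\) along the segment joining them gives \(|\Log(\E_{\nu_0}[e^g]/\E_{\nu_1}[e^g])|\le c'A\,\Wd(\nu_0,\nu_1)\) directly. This is the paper's path argument moved to the target space.

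\textbf{What each buys.} Your route is derivative-free and keeps the \(4\gamma\) and \(2A\,\Wd\) contributions visibly separate. The paper's gives a single uniform estimate whose constant needs no tuning.
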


\begin{proof}
First subtract \(h(\xi_*)\) for a fixed \(\xi_*\in\mathcal X\).
This multiplies both averages by the same nonzero factor, so it changes
neither their ratio nor whether they vanish.  Keeping the notation \(h\)
for the centred function, \eqref{eq:cavg-range} now gives
\(\lVert h\rVert_\infty\le1/8\).

Choose an optimal coupling \((X,Y)\), with \(X\sim\nu_0\) and
\(Y\sim\nu_1\), so that
\(\E[d(X,Y)]=\Wd(\nu_0,\nu_1)\).  For \(0\le t\le1\), interpolate
the exponents on this common probability space:
\[
  u_t=t\bigl(h(X)+\eta(X)\bigr)
      +(1-t)\bigl(h(Y)+\eta'(Y)\bigr),
  \qquad M(t)=\E[e^{u_t}].
\]
The marginals identify the endpoints as
\[
  M(0)=\E_{\nu_1}[e^{h+\eta'}],
  \qquad M(1)=\E_{\nu_0}[e^{h+\eta}].
\]
Thus the required ratio is \(M(1)/M(0)\).

We first check that this path stays away from zero.  Each endpoint
exponent has absolute value at most \(1/8+\gamma\le1/4\), so the same
is true of \(u_t\).  The power-series inequality
\(|e^w-1|\le e^{|w|}-1\) yields
\[
  |M(t)-1|\le\E\bigl[|e^{u_t}-1|\bigr]
      \le e^{1/4}-1<1,
  \qquad |M(t)|\ge2-e^{1/4}.
\]
Thus the path lies in the disc \(D:=\{w\in\C:|w-1|<1\}\), and both
endpoint averages are nonzero.

Next we bound how fast the average moves.  The state space is finite,
so differentiation under the expectation gives
\[
  M'(t)=\E\!\left[e^{u_t}
    \bigl(h(X)-h(Y)+\eta(X)-\eta'(Y)\bigr)\right].
\]
The Lipschitz condition bounds the first difference by \(A\,d(X,Y)\),
and the two perturbations contribute at most \(2\gamma\).  Since
\(|e^{u_t}|\le e^{1/4}\), we obtain
\[
  |M'(t)|
  \le e^{1/4}\E\bigl[A\,d(X,Y)+2\gamma\bigr]
  =e^{1/4}\bigl(A\,\Wd(\nu_0,\nu_1)+2\gamma\bigr).
\]
This is where the coupling cost enters the estimate.

Finally, the principal logarithm \(\Log\) is analytic on \(D\), with
\(\Log1=0\).  The chain rule and the fundamental theorem of calculus give
\[
  \Log M(1)-\Log M(0)=\int_0^1\frac{M'(t)}{M(t)}\,dt.
\]
This difference is the logarithm specified in the lemma: both averages
lie in \(D\) for every admissible triple, so the difference is continuous
on the convex set of admissible triples and vanishes at
\(h=\eta=\eta'=0\).  Combining the derivative bound
with the lower bound on \(|M(t)|\), we conclude that
\[
\begin{aligned}
  \left|\Log\frac{M(1)}{M(0)}\right|
  &\le\int_0^1\frac{|M'(t)|}{|M(t)|}\,dt\\
  &\le\frac{e^{1/4}}{2-e^{1/4}}
       \bigl(A\,\Wd(\nu_0,\nu_1)+2\gamma\bigr)\\
  &\le2A\,\Wd(\nu_0,\nu_1)+4\gamma,
\end{aligned}
\]
where the last step uses \(e^{1/4}/(2-e^{1/4})<2\).
\end{proof}

In the diagram below, each arrow label bounds the modulus of the
logarithm of the ratio of its endpoint averages.  The horizontal solid arrow uses
\(\eta=\eta'=0\).  Along a vertical arrow the law stays fixed, so we
take \(X=Y\), and one endpoint has zero perturbation.  The perturbation
difference in the derivative bound is then at most \(\gamma\), giving
\(2\gamma\).
The dashed arrow is \eqref{eq:cavg}.  Thus coupling independence enters
through the horizontal Wasserstein term, while the local perturbations
account for the two vertical terms.
\[
  \begin{tikzcd}[column sep=8em,row sep=3.2em,labels={font=\footnotesize}]
    \E_{\nu_0}[e^h]
      \arrow[d, "{2\gamma}"]
      &
    \E_{\nu_1}[e^h]
      \arrow[l, "{2A\,\Wd(\nu_0,\nu_1)}"']
      \\
    \E_{\nu_0}[e^{h+\eta}]
      &
    \E_{\nu_1}[e^{h+\eta'}]
      \arrow[u, "{2\gamma}"']
      \arrow[l, dashed, "{2A\,\Wd(\nu_0,\nu_1)+4\gamma}"']
  \end{tikzcd}
\]

\subsection{Positive-activity estimates}
\label{sec:potts-induction}

The estimates in this subsection are uniform on every compact interval
\([\delta,1]\subset(0,1]\).
Fix \(G\in\Gdeg\), a pinning \(\tau\), and a free vertex \(v\).  All
graph distances in this subsection are measured in the free graph \(G^\tau\).
For \(r\ge0\), write
\[
  B_r(v):=\{u\in V^\tau:\dist_{G^\tau}(u,v)\le r\},
  \qquad
  S_r(v):=\{u\in V^\tau:\dist_{G^\tau}(u,v)=r\}.
\]
For colours \(a,b\) and a real base point \(x\), let \(\nu_a^{(r)}\) and
\(\nu_b^{(r)}\) be the \(S_r(v)\)-marginals of
\(\mupin{G}{\tau^{v=a}}{x}\) and \(\mupin{G}{\tau^{v=b}}{x}\).  If the
two pinned laws admit a coupling of expected Hamming cost at most \(C\),
then, for every integer \(L\ge1\),
\begin{equation}\label{eq:potts-shell-sum}
  \sum_{r=2}^{L+1}
    \WHam\bigl(\nu_a^{(r)},\nu_b^{(r)}\bigr)
  \le C.
\end{equation}
Indeed, fix a single coupling \((X,Y)\) of the two pinned laws with
expected Hamming cost at most \(C\).  Projecting this coupling onto each
shell gives
\[
  \WHam\bigl(\nu_a^{(r)},\nu_b^{(r)}\bigr)
  \le\E\!\left[\Ham\bigl(X_{S_r(v)},Y_{S_r(v)}\bigr)\right].
\]
Since the shells are disjoint, the right-hand sides sum to at most
\(\E[\Ham(X,Y)]\le C\), which proves \eqref{eq:potts-shell-sum}.
Hence one of these \(L\) shells has cost at most \(C/L\).  If a
shell is empty, the component of \(v\) has size bounded in terms of
\((\Deg,L)\) and is treated as a finite local system.

Choose \(r\ge2\) and put
\[
  S:=S_r(v),\qquad
  U:=B_{r-1}(v)\setminus\{v\},\qquad
  G':=G-B_{r-1}(v);
\]
for \(\xi\colon S\to\colours\), write \(\tau_\xi:=\tau\cup\xi\).
For \(c\in\colours\), split a colouring at \(S\) as follows.  Among the
edge factors present in the normalized child \(\nZpin{G}{\tau^{v=c}}\),
assign to the exterior precisely those with an endpoint in
\(V^\tau\setminus B_r(v)\).  Let
\(D_{c,\xi}(z)\) be the sum, over colourings of \(U\), of the product of
the remaining factors, evaluated with \(v=c\) and \(S=\xi\).  In particular, factors from edges within \(S\)
and from edges joining \(S\) to \(\Lambda\) belong to \(D_{c,\xi}\).
Edges joining \(v\) to \(\Lambda\) are pinned-only in the child, so they
contribute no factor here.  Since no edge crosses directly from
\(U\) to the exterior, we obtain the \emph{separator identity}
\begin{equation}\label{eq:potts-separator-defined}
  \nZpin{G}{\tau^{v=c}}(z)
  =\sum_{\xi\colon S\to\colours}
      D_{c,\xi}(z)\nZpin{G'}{\tau_\xi}(z).
\end{equation}
At a real base point \(x\), the shell marginal \(\nu_c=\nu_c^{(r)}\) is
\begin{equation}\label{eq:potts-shell-marginal-defined}
  \nu_c(\xi)
  :=\frac{D_{c,\xi}(x)\nZpin{G'}{\tau_\xi}(x)}
          {\nZpin{G}{\tau^{v=c}}(x)}.
\end{equation}
The exterior instance \((G',\tau_\xi)\) does not depend on the root
colour \(c\), and changing one coordinate of \(\xi\) changes only one
pinned colour in it; this is where the inductive one-vertex response bound
enters.
\begin{lemma}[Positive-activity Potts induction]
\label{lem:potts-positive-response}
Fix integers \(q,\Deg\) with \(\Deg\ge2\), let \(\delta\in(0,1]\) and
\(C\in[0,\infty)\), and let
\(\mathcal G\subseteq\Gdeg\) be closed under taking induced subgraphs and satisfy
\(C\)-coupling independence on \([\delta,1]\).  There are
\(\alpha,\eps>0\), depending only on \((q,\Deg,\delta,C)\), such that, for
every \(G\in\mathcal G\), every pinning \(\tau\), every
\(x\in[\delta,1]\), and every \(z\in\C\) with \(|z-x|<\eps\),
\[
  \nZpin{G}{\tau}(z)\ne0,
\]
and, for every \(v\in V^\tau\) and \(a,b\in\colours\),
\begin{equation}\label{eq:potts-response-invariant}
  \left|\Log
  \frac{\nZpin{G}{\tau^{v=a}}(z)/\nZpin{G}{\tau^{v=a}}(x)}
       {\nZpin{G}{\tau^{v=b}}(z)/\nZpin{G}{\tau^{v=b}}(x)}\right|
  \le\alpha.
\end{equation}
The logarithm is continued from \(z=x\).
\end{lemma}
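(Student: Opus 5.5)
We argue by strong induction on the number of free vertices \(|V^\tau|\), proving both conclusions together for all instances \((G,\tau)\) with \(G\in\mathcal G\). The constants are fixed in advance as follows. First choose \(L\), then \(\alpha\) small (say \(\alpha\le1/8\)), and finally \(\eps\) small in terms of \((q,\Deg,\delta,C,L,\alpha)\).

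The induction hypothesis will be applied to exterior instances \((G',\tau_\xi)\). These have strictly fewer free vertices, and \(G'\) is an induced subgraph of \(G\), so it again lies in \(\mathcal G\). Since the pinning \(\tau_\xi\) extends \(\tau\) by \(\xi\) on \(S\), the hypothesis covers them.

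\emph{Nonvanishing from the response bound.} Use the colour recursion \eqref{eq:potts-colour-recursion} at any free vertex \(v\):
\[
  \nZpin{G}{\tau}(z)/\nZpin{G}{\tau}(x)
  =\sum_a w_a\,\frac{z^{\bcount_v^\tau(a)}}{x^{\bcount_v^\tau(a)}}\,R_a(z),
\]
where \(w_a\) is the \(x\)-marginal of \(v\) and \(R_a(z):=\nZpin{G}{\tau^{v=a}}(z)/\nZpin{G}{\tau^{v=a}}(x)\).

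Since \(x\ge\delta\), each factor \((z/x)^{b}\) with \(b\le\Deg\) has logarithm of size \(O(\eps/\delta)\). By the invariant \eqref{eq:potts-response-invariant} applied to children, which have fewer free vertices, all \(\Log R_a\) agree up to \(\alpha\). Factoring out \(R_{a_0}\) for a fixed colour \(a_0\) leaves a convex combination of numbers within \(\alpha+O(\eps/\delta)\le1/4\) of \(1\) in the exponent, so the sum is nonzero. This is the case \(X=Y\) of \Cref{lem:cavg} on a one-point-shifted space.

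For \(R_{a_0}\) itself to be nonzero we need the child to be nonvanishing, which holds by induction. The base case of no free vertices is trivial: the normalized function is \(1\).

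\emph{Response bound via a separator.} Fix \(v,a,b\). By \eqref{eq:potts-shell-sum} with the coupling independence constant \(C\), some shell \(S=S_r(v)\) with \(2\le r\le L+1\) has \(\WHam(\nu_a,\nu_b)\le C/L\). If every shell is empty, the component of \(v\) has at most \(N(\Deg,L)\) vertices.

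In that case, the only dependence on \(v\) comes from that component, and the ratio in question is a ratio of polynomials of bounded degree with positive real values on \([\delta,1]\). There are finitely many such local systems up to isomorphism and boundary counts. By compactness of \([\delta,1]\) and continuity, the logarithmic ratio is \(O(\eps)\le\alpha\) once \(\eps\) is small.

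Otherwise, divide \eqref{eq:potts-separator-defined} by its value at \(x\) and use \eqref{eq:potts-shell-marginal-defined} to get
\[
  R_c(z)=\E_{\nu_c}\bigl[e^{h(\xi)+\eta_c(\xi)}\bigr],
  \qquad h(\xi):=\Log\frac{\nZpin{G'}{\tau_\xi}(z)}{\nZpin{G'}{\tau_\xi}(x)},
  \qquad \eta_c(\xi):=\Log\frac{D_{c,\xi}(z)}{D_{c,\xi}(x)}.
\]
The interior sums \(D_{c,\xi}\) range over colourings of \(U\), which has bounded size. They are positive at \(x\ge\delta\) since all factors are positive, and there are finitely many local systems. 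Hence \(|\eta_c|\le\gamma(\eps)\to0\) uniformly as \(\eps\to0\).

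For \(h\), let \(\xi,\xi'\) differ at one vertex \(s\in S\). Apply the induction hypothesis to \((G',\tau_\xi\setminus\{s\})\) with root \(s\): this gives \(|h(\xi)-h(\xi')|\le\alpha\), so \(h\) is \(\alpha\)-Lipschitz for \(\Ham\). Since \(|S|\le\Deg^{L+1}\), the range of \(h\) is at most \(\alpha\Deg^{L+1}\). This is the snag: to get range \(\le1/8\) we would need \(\alpha\) tiny relative to \(L\), while we also want \(2\alpha C/L\le\alpha/2\).

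I would resolve this by bounding the range of \(h\) directly rather than via Lipschitz. Both \(\xi\) and \(\xi'\) differ only on the bounded set \(S\). Write \(\nZpin{G'}{\tau_\xi}\) relative to the instance with \(S\) left free, via the recursion \eqref{eq:potts-colour-recursion} iterated over \(S\), and use induction (already-established nonvanishing plus the \(\alpha\)-response at each step) together with a continuity bound in \(\eps\). Alternatively, and more cleanly, I would strengthen the invariant to additionally assert \(|\Log(\nZpin{G}{\tau}(z)/\nZpin{G}{\tau}(x))|\le K\eps|V^\tau|\)-type control, or simply take \(\alpha\le 1/(8\Deg^{L+1})\). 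With the latter, the Lipschitz constant in \Cref{lem:cavg} is \(A=\alpha\), and the bound becomes
\[
  |\Log(R_a/R_b)|\le 2\alpha C/L+4\gamma.
\]
Choosing \(L\ge4C\) gives \(2\alpha C/L\le\alpha/2\). Note that \(L\) is chosen first, then \(\alpha\le1/(8\Deg^{L+1})\), then \(\eps\) with \(4\gamma(\eps)\le\alpha/2\), so there is no circularity. This closes the induction with bound \(\alpha\).

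The branch of the logarithm matches the one continued from \(z=x\), because of the convexity and continuity clause in \Cref{lem:cavg}, with \(h,\eta\) all vanishing at \(z=x\).

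The main obstacle I expect is this bookkeeping: ensuring that the range condition \eqref{eq:cavg-range} and the uniform interior bound \(\gamma\) hold with constants chosen in the order \(L\), then \(\alpha\), then \(\eps\). A secondary point is checking that the induction hypothesis legitimately applies to the exterior instances after re-pinning \(S\).
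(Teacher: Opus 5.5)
Your proposal is correct and is essentially the paper's proof. It uses the same shell selection from \eqref{eq:potts-shell-sum}, the same separator identity written as $\E_{\nu_c}[e^{h+\eta_c}]$, the inductive $\alpha$-Lipschitz bound on $h$, \Cref{lem:cavg}, and the same half-plane argument for the parent, with constants fixed in the order $L$, then $\alpha\lesssim 1/|S|$, then $\eps$; the paper takes $L=\lceil 64C\rceil$ and $\alpha\le\min\{\pi/16,1/(16s)\}$, so your ``snag'' is resolved exactly as there and the alternative fixes you float are unnecessary. One bookkeeping point: the bound on $\Log(R_a/R_{a_0})$ in the nonvanishing step is not the invariant applied to the children. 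It is the response bound for the current instance at $v$, which you prove in the same induction step from strictly smaller exterior instances, so do the separator step before the parent nonvanishing step.
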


\begin{proof}
\textbf{Local bounds.}\hspace{1em}Enlarging \(C\) if necessary, assume \(C\ge1\), and take
\(L=\lceil64C\rceil\).  Set
\[
  s:=\Deg(\Deg-1)^L,
  \qquad
  N:=1+\Deg\sum_{k=0}^{L-1}(\Deg-1)^k.
\]
For \(2\le r\le L+1\), breadth-first bounds give
\( |S_r(v)|\le\Deg(\Deg-1)^{r-1}\le s\) and
\( |B_{r-1}(v)|\le N\).  Choose
\[
  0<\alpha\le\min\left\{\frac\pi{16},\frac1{16s}\right\}.
\]
Here \(L\) provides enough disjoint shells to reduce the coupling cost,
\(s\) bounds a shell, \(N\) separates the bounded-component case, and
\(\alpha\) is the logarithmic and angular budget.
Up to relabelling, only finitely many interior polynomials \(D_{c,\xi}\)
and polynomials of free components with at most \(N\) vertices occur: each is
determined by a graph of maximum degree at most \(\Deg\) on at most
\(N+s\) vertices, the root, the shell, the colours \(c,\xi\), and
boundary counts in \(\{0,\ldots,\Deg\}\).  They are positive on the compact
interval \([\delta,1]\).  We may therefore choose \(\eps>0\), uniformly in
\(x\in[\delta,1]\), so that they remain nonzero for \(|z-x|<\eps\),
the normalized child quotients of components of size at most \(N\)
satisfy \eqref{eq:potts-response-invariant}, and
\begin{equation}\label{eq:potts-local-budget}
  \left|\Log\frac{D_{c,\xi}(z)}{D_{c,\xi}(x)}\right|
  \le\frac\alpha8,
  \qquad
  \left|\Im\Log\frac{z^m}{x^m}\right|+\alpha<\frac\pi2
  \quad(0\le m\le\Deg).
\end{equation}
All logarithms continue from \(z=x\).

\paragraph{Exterior response.}
Induct on the number of free vertices, proving zero-freeness and
\eqref{eq:potts-response-invariant} simultaneously.  Fix \(G,\tau,v,a,b\).
If the free component of \(v\) has size at most \(N\), cancel the other
components; the uniform local choice of \(\eps\) gives the response bound
directly.  If it is larger, \eqref{eq:potts-shell-sum} supplies a shell
\(S\) with
\[
  \WHam(\nu_a,\nu_b)\le C/L.
\]
Induction on the common exterior defines
\[
  h_\xi:=\Log
  \frac{\nZpin{G'}{\tau_\xi}(z)}
       {\nZpin{G'}{\tau_\xi}(x)}.
\]
We write \(h(\xi):=h_\xi\).  When \(\xi\) and \(\xi'\) differ at one shell vertex \(w\), put
\(\tau':=\tau\cup\xi|_{S\setminus\{w\}}\).  Then \(G'\) is an induced
subgraph of \(G\), so \(G'\in\mathcal G\); the instance \((G',\tau')\) has
\(|V^\tau|-|B_r(v)|+1<|V^\tau|\) free vertices, and its two one-vertex
pinnings at \(w\) are exactly \((G',\tau_\xi)\) and \((G',\tau_{\xi'})\);
pinned-only factors are omitted by \eqref{eq:potts-Z}.  Thus the
inductive response bound applies to \(w\) in \((G',\tau')\).  Both
\(h_\xi-h_{\xi'}\) and that response logarithm are analytic and vanish at
\(z=x\), so they are the same continued branch.  Summing along a Hamming
path therefore gives
\[
  |h_\xi-h_{\xi'}|
  \le\alpha\Ham(\xi,\xi')
  \le s\alpha
  \le\frac1{16}.
\]
\paragraph{Comparison of shell averages.}
The bounded interior has normalized change
\[
  \eta_c(\xi):=\Log\frac{D_{c,\xi}(z)}{D_{c,\xi}(x)},
  \qquad
  \|\eta_c\|_\infty\le\frac\alpha8.
\]
The separator identity is now exactly
\begin{equation}\label{eq:potts-separator-expectation}
  \frac{\nZpin{G}{\tau^{v=c}}(z)}
       {\nZpin{G}{\tau^{v=c}}(x)}
  =\E_{\nu_c}[e^{h+\eta_c}].
\end{equation}
The hypotheses of \Cref{lem:cavg}, with \(A=\alpha\) and
\(\gamma=\alpha/8\), hold at every \(z'\) with \(|z'-x|<\eps\), and
\(h,\eta_a,\eta_b\) vanish at \(z'=x\).  Composing the continuous branch
of \Cref{lem:cavg} with \(z'\mapsto(h,\eta_a,\eta_b)\) therefore gives a
continuous logarithm of the ratio in \eqref{eq:potts-response-invariant}
that vanishes at \(z'=x\), namely the logarithm continued from \(z=x\).
Thus \Cref{lem:cavg} gives
\[
  \left|\Log
  \frac{\nZpin{G}{\tau^{v=a}}(z)/\nZpin{G}{\tau^{v=a}}(x)}
       {\nZpin{G}{\tau^{v=b}}(z)/\nZpin{G}{\tau^{v=b}}(x)}\right|
  \le2\alpha C/L+\frac\alpha2<\alpha.
\]

\paragraph{Parent nonvanishing.}
To deduce nonvanishing of the parent function \(\nZpin{G}{\tau}\), put
\[
  R_c:=\frac{\nZpin{G}{\tau^{v=c}}(z)}
             {\nZpin{G}{\tau^{v=c}}(x)}
\]
and choose any \(a_0\in\colours\).  The colour recursion
\eqref{eq:potts-colour-recursion} gives
\[
  \nZpin{G}{\tau}(z)
  =R_{a_0}
   \sum_a x^{\bcount_v^\tau(a)}\nZpin{G}{\tau^{v=a}}(x)
   \exp\!\left(
      \Log\frac{z^{\bcount_v^\tau(a)}}{x^{\bcount_v^\tau(a)}}
      +\Log\frac{R_a}{R_{a_0}}\right).
\]
The coefficients are positive.  By \eqref{eq:potts-local-budget} and the
response bound, every exponential has positive real part,
so the sum lies in an open half-plane and is nonzero.  This closes the
induction.
\end{proof}

\subsection{Hard-endpoint estimates}
\label{sec:hard-endpoint}

At \(z=0\), a boundary weight may vanish.  Hard-feasible colourings contribute
positive weight at \(z=0\), while the term of every colouring that is not
hard-feasible is divisible by \(z\).

\begin{lemma}[One-vertex comparison at \texorpdfstring{\(z=0\)}{z=0}]\label{lem:potts-hard-comparison}
Assume \(q\ge\Deg+1\) and put \(H_*:=q^\Deg\).  For every
\(G\in\Gdeg\), every pinning \(\tau\), every \(y\in V^\tau\), and all
colours \(a,b\),
\begin{equation}\label{eq:potts-hard-comparison}
  \nZpin{G}{\tau^{y=a}}(0)
  \le H_*\nZpin{G}{\tau^{y=b}}(0).
\end{equation}
\end{lemma}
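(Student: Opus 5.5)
We will prove \eqref{eq:potts-hard-comparison} with an explicit injective-type map from hard-feasible colourings for the child \(\tau^{y=a}\) to those for \(\tau^{y=b}\), with bounded fibres. At \(z=0\), \(\nZpin{G}{\tau^{y=c}}(0)\) counts the hard-feasible colourings \(\varphi\) of the free graph \(G-(\Lambda\cup\{y\})\) under the pinning \(\tau\cup\{y\mapsto c\}\). Pinned-only edges, including those joining \(y\) to \(\Lambda\), are omitted in the normalized function. So \(\varphi\) must be proper on \(V^\tau\setminus\{y\}\) and avoid the colours of pinned neighbours, with \(y\) now counted as pinned with colour \(c\).

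\textbf{The map.} Let \(N:=N_{G^\tau}(y)\) be the set of free neighbours of \(y\), so \(|N|\le\Deg\). Given a hard-feasible \(\varphi\) for \(\tau^{y=a}\), keep \(\varphi\) outside \(N\), erase the colours on \(N\), and recolour \(N\) greedily in a fixed order. Each \(u\in N\) must avoid three sets of colours:
\begin{itemize}
\item \(b\), the colour of \(y\);
\item the colours of its pinned neighbours in \(\Lambda\);
\item the colours of its free neighbours other than \(y\), whether these lie outside \(N\) or are already-recoloured vertices of \(N\).
\end{itemize}
Since \(y\) is adjacent to \(u\), these constraints total at most \(1+(\deg_G(u)-1)\le\Deg<q\). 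Hence a colour is always available. Because \(q\ge\Deg+1\), we make the greedy choice deterministic (smallest available colour). The result is hard-feasible for \(\tau^{y=b}\): vertices outside \(N\) are not adjacent to \(y\), and all their other constraints are unchanged.

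\textbf{Counting preimages.} Two inputs with the same image agree outside \(N\). They can therefore differ only in the colours on \(N\), giving at most \(q^{|N|}\le q^\Deg=H_*\) preimages. Hence \(\nZpin{G}{\tau^{y=a}}(0)\le H_*\nZpin{G}{\tau^{y=b}}(0)\).

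\textbf{Degenerate cases and main obstacle.} If \(N=\emptyset\), the map is the identity on colourings of \(V^\tau\setminus\{y\}\), and both sides are equal. By \Cref{lem:hard-feas}, both sides are positive in any case, though positivity is not actually needed for the inequality. The only delicate point is the bookkeeping of the normalization. Edges between \(y\) and pinned vertices are pinned-only in both children and are correctly ignored. Edges from \(N\) to \(\Lambda\) impose the same constraints in both children. So the only constraint that changes between the two children is the colour of \(y\) as seen from \(N\), which is exactly what the recolouring repairs.
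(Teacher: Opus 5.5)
Your proposal is correct and is essentially the paper's argument. The paper conditions on the colouring of the shell \(S_2(y)\) (equivalently, on everything outside \(\{y\}\cup S_1(y)\), with \(S_1(y)=N_{G^\tau}(y)\)) and compares extensions to \(S_1(y)\) term by term: greedy list colouring gives at least one extension for root colour \(b\), and there are at most \(q^{\Deg}\) extensions for root colour \(a\). Your recolour-\(N\) map with fibres of size at most \(q^{|N|}\) is the same comparison phrased as a map.
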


\begin{proof}
Measure distance in \(G^\tau\), condition on \(S_2(y)\), and cancel free
components outside that of \(y\).  For each shell colouring compatible
with the exterior hard constraints, the exterior factor is the same for
the two colours at \(y\).  Only the at most \(\Deg\) vertices in
\(S_1(y)\) remain to be coloured.  Fix such a shell colouring \(\xi\)
and a root colour \(c\in\{a,b\}\).  For every \(u\in S_1(y)\), after
forbidding \(c\) and the colours used by its neighbours in \(S_2(y)\),
the remaining list has size at least
\[
\begin{aligned}
 |\hardlist{\tau}{u}|-1-|N_{G^\tau}(u)\cap S_2(y)|
 &\ge \deg_{G^\tau}(u)+1-1-|N_{G^\tau}(u)\cap S_2(y)|\\
 &=\deg_{G^\tau[S_1(y)]}(u)+1.
\end{aligned}
\]
Here the equality uses that every \(u\in S_1(y)\) is adjacent to \(y\),
and its other neighbours lie in \(S_1(y)\cup S_2(y)\).  Greedy
list-colouring of \(G^\tau[S_1(y)]\) therefore gives at least one extension
for each of the two root colours.  There are at most
\(q^{|S_1(y)|}\le q^\Deg\) extensions.  Comparing term by term over the
shell proves the claim.
\end{proof}

\begin{proposition}[Potts disc around \texorpdfstring{\(z=0\)}{z=0}]\label{prop:ep-disc}
Fix integers \(q,\Deg\) with \(\Deg\ge2\) and \(q\ge\Deg+1\), let
\(C_0\in[0,\infty)\), and let
\(\mathcal G\subseteq\Gdeg\) be closed under taking induced subgraphs and satisfy
\(C_0\)-coupling independence at \(x=0\).  There is
\(\rho_0=\rho_0(q,\Deg,C_0)>0\) such that
\[
  \nZpin{G}{\tau}(z)\ne0
  \qquad(|z|<\rho_0)
\]
for every \(G\in\mathcal G\) and every pinning \(\tau\).
\end{proposition}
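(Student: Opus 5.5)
We prove the proposition by running the separator induction of \Cref{lem:potts-positive-response} at the base point \(x=0\). The induction is on the number of free vertices, and it proves two statements together: nonvanishing of \(\nZpin{G}{\tau}(z)\), and a one-vertex response bound
\[
  \left|\Log\frac{\nZpin{G}{\tau^{v=a}}(z)/\nZpin{G}{\tau^{v=a}}(0)}{\nZpin{G}{\tau^{v=b}}(z)/\nZpin{G}{\tau^{v=b}}(0)}\right|\le\alpha
  \qquad(|z|<\rho_0).
\]
Both normalizers are positive by \Cref{lem:hard-feas}. We take \(L=\lceil64C_0\rceil\), and \(s,N,\alpha\) as in that lemma. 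Bounded components of size at most \(N\) are again finitely many polynomials that are positive at \(0\), so \(\rho_0\) can be chosen small for them.

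For a large component, \eqref{eq:potts-shell-sum} at \(x=0\) gives a shell \(S=S_r(v)\) with \(\WHam(\nu_a,\nu_b)\le C_0/L\). The difficulty is that the local perturbation \(\eta_c=\Log(D_{c,\xi}(z)/D_{c,\xi}(0))\) may be undefined, because \(D_{c,\xi}(0)\) vanishes whenever \(\xi\) is incompatible with \(c\) through the interior. We also need \(h_\xi\) only for those \(\xi\) with \(\nZpin{G'}{\tau_\xi}(0)>0\); this holds for every \(\xi\), since \((G',\tau_\xi)\) is a pinned instance in \(\mathcal G\).

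We therefore write the separator identity as
\[
  \frac{\nZpin{G}{\tau^{v=c}}(z)}{\nZpin{G}{\tau^{v=c}}(0)}
  =\E_{\nu_c}\bigl[e^{h+\eta_c}\bigr]+E_c(z),
\]
and split the shell configurations into two classes.
\begin{itemize}
  \item For \(\xi\) with \(D_{c,\xi}(0)>0\), we use the multiplicative form with \(\|\eta_c\|_\infty\le\alpha/8\), as in the positive case.
  \item For \(\xi\) with \(D_{c,\xi}(0)=0\), the contribution is placed in the additive error
  \[
    E_c(z)=\sum_{\xi:\,D_{c,\xi}(0)=0}\frac{D_{c,\xi}(z)\,\nZpin{G'}{\tau_\xi}(z)}{\nZpin{G}{\tau^{v=c}}(0)}.
  \]
\end{itemize}

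To bound \(E_c\), note that \(|D_{c,\xi}(z)|\le K|z|\) with \(K\) depending only on \((q,\Deg,L)\). For the exterior factor, \(|\nZpin{G'}{\tau_\xi}(z)|=\nZpin{G'}{\tau_\xi}(0)\,|e^{h_\xi}|\), and by the Lipschitz bound this is at most \(e^{1/16}\,\nZpin{G'}{\tau_{\xi_0}}(0)\,|e^{h_{\xi_0}}|\) up to a ratio \(\nZpin{G'}{\tau_\xi}(0)/\nZpin{G'}{\tau_{\xi_0}}(0)\). Iterating \Cref{lem:potts-hard-comparison} along a Hamming path bounds this ratio by \(H_*^{s}\). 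Here \(\xi_0\) is a shell colouring compatible with \(c\); one exists by the greedy argument in the proof of \Cref{lem:potts-hard-comparison}. The denominator \(\nZpin{G}{\tau^{v=c}}(0)\) is at least \(D_{c,\xi_0}(0)\,\nZpin{G'}{\tau_{\xi_0}}(0)\), and \(D_{c,\xi_0}(0)\ge1\) since it is a positive integer count. Dividing by the main term \(\E_{\nu_c}[e^{h+\eta_c}]\), whose modulus is comparable to \(|e^{h_{\xi_0}}|\) by \Cref{lem:cavg} with centering, gives a relative error of at most \(K' q^{s}H_*^{s}|z|\), where \(K'\) is uniform. Choosing \(\rho_0\) small makes this at most \(\alpha/16\).

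Combining with \Cref{lem:cavg}, we get
\[
  |\text{response}|\le 2\alpha C_0/L+\alpha/2+O(\alpha/16)<\alpha.
\]
Branch continuity is handled as before by continuation from \(z=0\).

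For the parent, the colour recursion \eqref{eq:potts-colour-recursion} at \(z\) near \(0\) is dominated by the colours \(a\) with \(\bcount^\tau_v(a)=0\); by \Cref{lem:hard-feas} at least one such colour exists. Those terms have positive coefficients and phases within \(\alpha\), so they lie in a sector. The blocked colours carry a factor \(z^{m}\) with \(m\ge1\). Their child normalizers at \(0\) are at most \(H_*\) times those of an unblocked colour, by \Cref{lem:potts-hard-comparison}, and the moduli of the \(R_a\) are comparable. These terms are therefore an additive \(O(|z|)\) relative perturbation, and shrinking \(\rho_0\) gives nonvanishing.

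The main obstacle is the additive-error step: ensuring that the bound on \(E_c\) relative to the main term uses only the uniform comparison \(H_*^{s}\) and the Lipschitz control of \(h\), with no lower bound on marginals. The danger is that the measure \(\nu_c\) gives the compatible set only tiny mass. We avoid that by comparing against the single reference \(\xi_0\) rather than against \(\nu_c\)-weights.
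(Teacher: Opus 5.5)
Your proposal is correct and follows essentially the same route as the paper: a separator induction at base point \(0\), control of incompatible shell colourings by comparing their exterior weights with a single hard-feasible reference \(\xi_0\) through \(H_*^{s}\) and the \(1/16\) oscillation bound on \(h\), \Cref{lem:cavg} for the main term, and the parent recursion with blocked colours treated as an \(O(|z|)\) perturbation. The only difference is bookkeeping. The paper puts the full deviation \(D_{c,\xi}(z)-D_{c,\xi}(0)\) of every shell colouring into \(E_c\) and applies \Cref{lem:cavg} with \(\eta=\eta'=0\); you keep a multiplicative \(\eta_c\) on compatible \(\xi\) (extended by \(0\) where \(\nu_c\) vanishes) and make only the incompatible ones additive, which closes with the same constants up to harmless changes. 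Also, \(\xi_0\) is most directly obtained from \Cref{lem:hard-feas} applied to \(\tau^{v=c}\).
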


\begin{proof}
\textbf{Setup and shell choice.}\hspace{1em}Enlarging \(C_0\) if necessary, assume \(C_0\ge1\), put
\(L=\lceil64C_0\rceil\), and set
\[
  s:=\Deg(\Deg-1)^L,
  \qquad
  N:=1+\Deg\sum_{k=0}^{L-1}(\Deg-1)^k.
\]
For every vertex \(v\) and \(2\le r\le L+1\), these quantities bound
\(|S_r(v)|\) and \(|B_{r-1}(v)|\), respectively.  Choose
\[
  0<\alpha\le\min\left\{\frac\pi{16},\frac1{16s}\right\}.
\]
These constants have the same roles as in the proof of
\Cref{lem:potts-positive-response}: \(L\) yields a low-cost
shell, \(s\) bounds its Hamming diameter, \(N\) isolates bounded
components, and \(\alpha\) is the logarithmic budget.  The only new step is
to separate the hard-feasible contribution from an additive error divisible
by \(z\).
Below \(K\) denotes a positive constant depending only on
\((q,\Deg,C_0)\), and may increase from line to line.  We prove, for a
sufficiently small uniform \(\rho_0\) and all \(|z|<\rho_0\),
zero-freeness and
\begin{equation}\label{eq:potts-hard-response}
  \left|\Log
  \frac{\nZpin{G}{\tau^{v=a}}(z)/\nZpin{G}{\tau^{v=a}}(0)}
       {\nZpin{G}{\tau^{v=b}}(z)/\nZpin{G}{\tau^{v=b}}(0)}\right|
  \le\alpha
\end{equation}
for every \(G\in\mathcal G\), every pinning \(\tau\), every
\(v\in V^\tau\), and all \(a,b\).  The denominators are positive by
\Cref{lem:hard-feas}.  Throughout this proof, endpoint logarithms are the
analytic branches continued from \(z=0\) and normalized there by
\(\Log 1=0\).  Induct on the number of free vertices.  In the current
inductive instance, fix \(G,\tau,v\), measure distance in \(G^\tau\), and
use the sets \(B_r(v),S_r(v)\) and the separator data
\(S,U,G',\tau_\xi,D_{c,\xi}\) from
\eqref{eq:potts-separator-defined}--\eqref{eq:potts-shell-marginal-defined},
now at the real base point \(x=0\).

If the free component containing \(v\) has at most \(N\) vertices, the
other components cancel in \eqref{eq:potts-hard-response}.  Each
remaining component factor \(F\) satisfies \(F(0)\ge1\) by
\Cref{lem:hard-feas} and \(|F(z)-F(0)|\le q^N|z|\) for \(|z|\le1\), so
\eqref{eq:potts-hard-response} holds, with these factors nonzero,
whenever \(|z|<\alpha/(4q^N)\); we take \(\rho_0\le\alpha/(4q^N)\).
For a larger component, choose a shell with
\(\WHam(\nu_a,\nu_b)\le C_0/L\).

\paragraph{Exterior response.}
Induction on the common exterior defines
\[
  h_\xi:=\Log
  \frac{\nZpin{G'}{\tau_\xi}(z)}
       {\nZpin{G'}{\tau_\xi}(0)}.
\]
For shell colourings \(\xi,\xi'\) that differ only at a shell vertex
\(w\), regard all other shell vertices as part of the base pinning and
temporarily unpin \(w\).
This realizes the two normalized exterior partition functions as the
two one-vertex pinnings
of the same induced instance; pinned-only factors are omitted by
\eqref{eq:potts-Z}.  Hence the inductive response bound
\eqref{eq:potts-hard-response} applies to this smaller instance.  Branch
uniqueness at \(z=0\) then identifies \(h_\xi-h_{\xi'}\) with the inductive
one-vertex response logarithm.  Summing along a Hamming path gives
\[
  |h_\xi-h_{\xi'}|
  \le\alpha\Ham(\xi,\xi')
  \le s\alpha
  \le\frac1{16}.
\]
\paragraph{Hard-feasible contribution and error control.}
Here \(\nu_c\) is the shell marginal at \(x=0\).  For each \(c\), choose a
hard-feasible colouring of the free vertices under \(\tau^{v=c}\), and let
\(\xi_c\) be its shell restriction.
This witness supplies at least one hard interior extension, so
\[
  D_{c,\xi_c}(0)\ge1,
  \qquad
  \nZpin{G}{\tau^{v=c}}(0)
  \ge \nZpin{G'}{\tau_{\xi_c}}(0).
\]
For a fixed interior colouring \(\varphi\), its contribution is
\(z^{m(\varphi)}\) for some \(m(\varphi)\ge0\).  After subtracting the value
at \(0\), this contribution is zero if \(m(\varphi)=0\) and is
\(z^{m(\varphi)}\) otherwise.  Since there are at most \(q^N\) interior
colourings,
\[
  |D_{c,\xi}(z)-D_{c,\xi}(0)|\le q^N|z|
  \qquad(|z|\le1/2).
\]

Splitting \(D_{c,\xi}(z)\) into its value at \(0\) and its error in the
separator identity gives
\[
  R_c:=\frac{\nZpin{G}{\tau^{v=c}}(z)}
             {\nZpin{G}{\tau^{v=c}}(0)}
      =M_c+E_c,
  \qquad
  M_c:=\E_{\nu_c}[e^h],
\]
where, explicitly,
\[
  E_c=
  \frac{
    \sum_\xi
      \bigl(D_{c,\xi}(z)-D_{c,\xi}(0)\bigr)
      \nZpin{G'}{\tau_\xi}(0)e^{h_\xi}
  }{
      \nZpin{G}{\tau^{v=c}}(0)
  }.
\]
Changing the shell colours one at a time and applying
\Cref{lem:potts-hard-comparison} at most \(s\) times gives
\[
  \nZpin{G'}{\tau_\xi}(0)
  \le H_*^s\nZpin{G'}{\tau_{\xi_c}}(0).
\]
There are at most \(q^s\) shell colourings, and
\(|h_\xi-h_{\xi_c}|\le1/16\).  The preceding bounds therefore give
\[
  |E_c|
  \le q^{N+s}H_*^s e^{1/16}|z|\,|e^{h_{\xi_c}}|.
\]
For the main term,
\[
\begin{aligned}
  \left|\frac{M_c}{e^{h_{\xi_c}}}-1\right|
  &\le \sum_\xi\nu_c(\xi)
      \left|e^{h_\xi-h_{\xi_c}}-1\right|\\
  &\le e^{1/16}-1<\frac14.
\end{aligned}
\]
Hence
\[
  |M_c|\ge(2-e^{1/16})|e^{h_{\xi_c}}|
  >\frac34|e^{h_{\xi_c}}|.
\]
These two estimates give
\(|E_c/M_c|\le K|z|\).  Shrink \(\rho_0\), if necessary, so that
\(\rho_0\le1/2\), \(K\rho_0\le1/2\), and
\(4K\rho_0<\alpha/2\).  Then
\[
  \left|\Log(1+E_c/M_c)\right|
  \le2K|z|,
\]
with the logarithm continued from \(z=0\).

Apply \Cref{lem:cavg} with \(A=\alpha\) and \(\eta=\eta'=0\).  Its
hypotheses hold at every \(z'\) with \(|z'|<\rho_0\), and \(h\) vanishes
at \(z'=0\); as in the proof of \Cref{lem:potts-positive-response}, the
branch in \Cref{lem:cavg} is therefore the logarithm of \(M_a/M_b\)
continued from \(z=0\).  The lemma gives
\[
  \left|\Log\frac{M_a}{M_b}\right|\le2\alpha C_0/L.
\]
We have
\[
  \Log\frac{R_a}{R_b}
  =\Log\frac{M_a}{M_b}
   +\Log(1+E_a/M_a)-\Log(1+E_b/M_b).
\]
Combining this identity with \(L\ge64C_0\) and
\(4K|z|<\alpha/2\), we find that the absolute value of its right-hand
side is at most \(\alpha/32+\alpha/2<\alpha\).  This proves
\eqref{eq:potts-hard-response}.

\paragraph{Parent nonvanishing.}
To close the induction, we use the parent recursion
\[
  \nZpin{G}{\tau}(z)
  =\sum_c z^{\bcount_v^\tau(c)}\nZpin{G}{\tau^{v=c}}(z).
\]
Choose a hard-feasible colouring \(\sigma\) and put \(a_0=\sigma_v\), so
\(\bcount_v^\tau(a_0)=0\).  After division by \(R_{a_0}\), the terms with
\(\bcount_v^\tau(c)=0\) have positive real part.  The \(a_0\)-term is
exactly \(\nZpin{G}{\tau^{v=a_0}}(0)\), so their sum has real part at
least this value.  The remaining terms are
divisible by \(z\).  Here
\(\abs{R_c/R_{a_0}}\le e^\alpha\), and \Cref{lem:potts-hard-comparison}
gives \(\nZpin{G}{\tau^{v=c}}(0)\le H_*\nZpin{G}{\tau^{v=a_0}}(0)\);
hence the at most \(q\) remaining terms have total modulus at most
\(K|z|\nZpin{G}{\tau^{v=a_0}}(0)\).  A final shrink of
\(\rho_0\) makes this error smaller than the hard-feasible contribution
and closes the induction.
\end{proof}

\subsection{Proof of the transfer theorem}
\label{sec:transfer-proof-end}

\begin{proof}[Proof of \Cref{thm:potts-transfer}]
Let \(\rho_0\) come from \Cref{prop:ep-disc}, set
\[
  \delta_0:=\min\{1/2,\rho_0/4\},
\]
and apply \Cref{lem:potts-positive-response} with
\(\delta=\delta_0\) and \(C=C_{\delta_0}\).  Let \(\eps_+\) be the
resulting uniform radius and put \(\eps:=\min\{\eps_+,\delta_0\}\).
Let \(z\in\mathcal U_\eps([0,1])\), and let \(x\in[0,1]\) be a nearest
point to \(z\), so that \(|z-x|<\eps\).  If \(x\le\delta_0\), then
\(|z|<2\delta_0<\rho_0\); otherwise \(x\in[\delta_0,1]\) and the
positive-activity induction applies.  The statement for \(\Zpin{G}{\tau}\)
then follows from \eqref{eq:intro-normalization}.
\end{proof}

\section{\texorpdfstring{Coupling independence for \(q\ge11\Deg/6\)}
{Coupling independence for q >= 11 Delta / 6}}
\label{sec:strict-potts}

This section supplies the coupling input to \Cref{thm:potts-transfer} and
proves \Cref{thm:intro-main}.  Above the threshold, it establishes the
first step of the implication
\[
  q>\frac{11}{6}\Deg
  \quad\Longrightarrow\quad
  \text{coupling independence on \([0,1]\)}
  \quad\Longrightarrow\quad
  \text{zero-freeness near \([0,1]\)},
\]
whose second step is \Cref{thm:potts-transfer}.  We obtain the first step
by extending Vigoda's flip coupling~\cite{Vigoda2000} to positive
activities, conditioning on random hard constraints.  At equality, we
combine the positive-activity coupling estimates with the hard-colouring
coupling-independence bound of~\cite[Theorem~20]{CFFGZZ2024}; the
normalization and pinning conventions are checked in
\Cref{cor:critical-line-input}.  Only the equality case \(q=11\Deg/6\)
of \Cref{thm:intro-main} uses this external input.

\begin{theorem}[Hamming coupling independence on \texorpdfstring{$[0,1]$}{[0,1]}]
\label{thm:strict-ci}
Let \(q,\Deg\) be positive integers with \(\Deg\ge2\) and
\(q>11\Deg/6\).  Put
\[
  \gapq:=\frac q\Deg-\frac{11}{6}>0.
\]
Then, for every \(G\in\Gdeg\), every pinning \(\tau\), every
\(x\in[0,1]\), every \(r\in V^\tau\), and all
\(a,b\in\colours\),
\[
  \WHam\!\left(
    \mupin{G}{\tau^{r=a}}{x},
    \mupin{G}{\tau^{r=b}}{x}
  \right)
  \le
  \frac{2(1-x)\Deg}{q-(11/6)(1-x)\Deg}
  \le\frac2{\gapq}.
\]
At \(x=0\) both laws are well defined by \Cref{lem:hard-feas}, since
the hypotheses imply \(q\ge\Deg+2\).
\end{theorem}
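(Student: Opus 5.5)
The plan is to prove the bound by the standard "contractive coupling of a Markov chain implies coupling independence" route, with the Markov chain being a soft version of Vigoda's flip dynamics run on list-colouring instances obtained by randomly resolving each soft edge.

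\textbf{Decomposition into random hard constraints.} Using the identity \(x\one{c=c'}+\one{c\ne c'}=x+(1-x)\one{c\ne c'}\), each free--free edge of \(G^\tau\) is independently declared \emph{hard} with probability \(1-x\) (and otherwise deleted). Each free--pinned edge \(up\) independently forbids \(\tau_p\) at \(u\) with probability \(1-x\). Conditional on the resulting random hard subgraph \(F\) and random lists, the pinned law is the uniform list colouring. Averaged over \((F,\text{lists})\) with the normalizing weights, it is exactly \(\mupin{G}{\tau^{r=c}}{x}\).

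The two child instances \(\tau^{r=a},\tau^{r=b}\) share the same random structure except at the edges incident to \(r\). There, a hard edge \(ru\) forbids \(a\) at \(u\) in one instance and \(b\) in the other. I would use the same random choices (\emph{edge present as a constraint}) in both, so that the two list-colouring instances differ only in the lists of neighbours of \(r\).

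Since the posterior on \((F,\text{lists})\) is tilted by the number of list colourings, and this number differs between the two instances, I will instead couple the joint law of (structure, colouring) directly. This is the step I expect to be delicate. It requires a Markov chain on pairs that keeps \(\mupin{G}{\tau^{r=c}}{x}\) stationary.

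\textbf{Soft flip dynamics.} The chain picks a vertex \(w\) and a colour \(c\), forms the Kempe component \(S_w(c)\) with respect to the currently \emph{active} hard edges, and flips it with Vigoda's probabilities \(p_\ell\) (\(p_1=1,p_2=13/42,\dots\), \(p_\ell=0\) for \(\ell\ge7\)). Between flips, the random constraints on edges whose endpoints currently disagree are resampled (weight \(1\) versus \(x\)). Equivalently, one can run heat-bath on the soft edge variables and note that an edge with equal colours must be inactive. This gives a reversible chain for the joint law.

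The key computation mirrors Vigoda's path-coupling analysis on a pair differing at one vertex \(v\). Each neighbour of \(v\) contributes a hard edge only with probability \(1-x\), so the effective degree entering the Vigoda case analysis is \((1-x)\Deg\). The expected change in Hamming distance per step is then at most \(-\tfrac1{qn}\bigl(q-\tfrac{11}{6}(1-x)\Deg\bigr)\). Here "\(n\)" is the normalization over free vertices, and flips touching pinned-blocked colours are handled by the list reduction, as in Vigoda's argument for list colourings with \(|L(u)|\ge q-\#\text{pinned neighbours}\).

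\textbf{From contraction to coupling independence.} Start the coupled chains from configurations in the two child instances that agree off the neighbourhood of \(r\). Initially the instances differ only in the lists of at most \(\Deg\) neighbours of \(r\), each of which is a hard edge with probability \(1-x\). The per-step creation of disagreements from this instance difference is at most \(2(1-x)\Deg/(qn)\) per step, arising from flips at neighbours of \(r\) involving \(a\) or \(b\). Contraction at rate \((q-\tfrac{11}{6}(1-x)\Deg)/(qn)\) then gives a stationary bound on the expected Hamming distance by the usual geometric-sum argument:
\[
\frac{2(1-x)\Deg}{q-\tfrac{11}{6}(1-x)\Deg}.
\]
The final inequality \(\le 2/\gapq\) follows because the expression is maximized at \(x=0\), where it equals \(2\Deg/(q-\tfrac{11}6\Deg)=2/\gapq\).

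\textbf{Main obstacle.} The main obstacle is verifying Vigoda's flip-coupling case analysis when the two copies may carry different active edge sets near the disagreement. The edge resampling must be coupled so that \emph{edge present in one copy, absent in the other} only occurs at edges incident to disagreeing vertices. Those cases must be absorbed into the \(\tfrac{11}{6}(1-x)\Deg\) term, which I would do by coupling the Bernoulli edge variables monotonically. This yields the \((1-x)\) factor throughout.
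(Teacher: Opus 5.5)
Your outline has the same architecture as the paper: random activation of satisfied constraints via the identity, a soft flip kernel that resamples activations from the current colouring before a Vigoda flip, common coins for two adjacent colourings, path coupling with drift \(\frac{11}{6}(1-x)\Delta-q\) per \(nq\), and a stationary comparison. The gap is the step you flag yourself: the drift bound, which carries the real content, is asserted rather than proved.

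It cannot be imported from Vigoda or from the list-colouring extension in \cite{CDMPP2019}. Those couple two colourings of one fixed instance. Here, under common coins, the hard instances \(F_X\) and \(F_Y\) differ at the disagreement vertex \(v\). An edge \(vu\) with \(X_u=Y_u=a\) is unsatisfied, hence inactive, on the \(X\)-side but may be active on the \(Y\)-side; the same holds for \(b\). The two root lists may also differ in \(a\) and \(b\). So the Kempe components through \(v\) for colour pairs meeting \(\{a,b\}\) are built from different graphs on the two sides. The common-coin coupling, which the paper also uses, only localizes this discrepancy to \(v\). Neither it nor the heuristic ``effective degree \((1-x)\Delta\)'' bounds the drift.

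What is needed is an estimate conditional on the activation outcome \(\mathcal A\) that is linear in the active neighbourhood:
\[
  nq\,\E[\Ham(X',Y')-1\mid\mathcal A]\le\frac{11}{6}\sum_c|N_c|-\ell,
\]
where \(N_c\) is the set of neighbours coloured \(c\) whose edge to \(v\) is active on at least one side, and \(\ell=|L_{X,v}\cap L_{Y,v}|\). The paper proves this with a new matching of moves.
\begin{itemize}
  \item For \(c\notin\{a,b\}\), the active edges from \(v\) to \(N_c\) are common to both sides. Each root move is matched to a largest off-root component on the other side. This fits because the profile is nonincreasing, and a root swap is feasible exactly when its new root colour is allowed and all its constituent off-root swaps are feasible.
  \item For the root colours, the \(Y\)-root of the pair \(\{a,b\}\) is \(v\) together with the components meeting \(N_a\). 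Its mass is matched to that component on the \(X\)-side when \(|N_a|=1\) and the component is off-root there, and to holding otherwise.
\end{itemize}
Each colour's charge is then checked against \(\frac{11}{6}|N_c|-\one{c\in L_{X,v}\cap L_{Y,v}}\).

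Linearity is what makes averaging over activations legitimate. Every constraint at \(v\) is satisfied on at least one side, so it is active on at least one side with probability exactly \(1-x\). This gives \(\E\sum_c|N_c|=(1-x)d_v\), and \(\ell\ge q-k\) with \(\E k=(1-x)d_\Lambda\). Here \(d_v\) and \(d_\Lambda\) are the numbers of free and pinned neighbours of \(v\), and \(k\) counts free--pinned edges at \(v\) active on at least one side. Since \(d_v+d_\Lambda\le\Delta\), your drift follows.

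The remaining steps are sound. You compare the two child kernels directly, with row bound \(2(1-x)\Delta/(qn)\). The paper instead compares each child with the middle system \((G-r,\tau)\) at row cost \((1-x)\Delta/(qn)\), then applies the triangle inequality to the stationary laws. The constants coincide, so your variant is harmless.

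You should, however, say how \(x=0\) is covered. The paper proves the bound for \(x\in(0,1]\), where the soft kernel is irreducible on all of \([q]^{V^\tau}\). It then passes to \(x=0\) by continuity of normalized laws on a finite space. If you run the chain at \(x=0\) directly, it must act on all colourings, activating only satisfied constraints. The two children have different hard-feasible sets, and path coupling with the Hamming metric on those sets alone is not available.
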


To compare the laws with two different root colours, we compare each to
the common system obtained by deleting the root.  Pinning the root adds at
most \(\Deg\) boundary constraints to this common system.  We need two estimates: how much these
constraints change one update, and how strongly the kernel contracts.
Stationary comparison converts them into a distance between stationary
laws; the triangle inequality then gives coupling independence.
\Cref{sec:stationary-comparison} records stationary comparison and the
endpoint continuity that covers \(x=0\);
\Cref{sec:soft-flip-interface,sec:boundary-sensitivity} construct the
kernels and bound the boundary effect; \Cref{sec:hard-one-step} proves
the conditional one-step estimate; \Cref{sec:activation-average} averages
it and applies stationary comparison, proving
\Cref{thm:strict-ci,cor:critical-line-input}; and \Cref{sec:main-proof}
proves \Cref{thm:intro-main}.

\subsection{Stationary comparison and endpoint continuity}
\label{sec:stationary-comparison}

For a Markov kernel \(P\), write \(\nu P\) for the law after one update
from \(\nu\).  We use the following variant of
\cite[Lemma~19]{CFFGZZ2024}, in which the contracting chain is called
\(P\), contraction is assumed only for adjacent states of a path metric,
and the disagreement is averaged under \(\nu_Q\) rather than bounded
pointwise.

\begin{lemma}[Stationary comparison]\label{lem:stationary-comparison}
Let \(d\) be the shortest-path metric of a finite connected state graph
with positive edge lengths.  Let \(P,Q\) be Markov kernels on the vertex
set of this graph, with stationary laws
\(\nu_P,\nu_Q\).  Suppose \(0<\kappa\le1\), \(C\ge0\), and
\[
  \Wd(P(\xi,\cdot),P(\xi',\cdot))\le(1-\kappa)d(\xi,\xi')
  \quad\text{for adjacent }\xi,\xi',
  \qquad
  \E_{\xi\sim\nu_Q}\Wd(P(\xi,\cdot),Q(\xi,\cdot))\le C.
\]
Then \(\Wd(\nu_P,\nu_Q)\le C/\kappa\).
\end{lemma}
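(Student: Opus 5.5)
The plan is the standard "one step of contraction plus a stationarity fixed point" argument, run along geodesics of the path metric.

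\emph{Step 1: local contraction becomes global contraction.} We first show that \(\Wd(P(\xi,\cdot),P(\xi',\cdot))\le(1-\kappa)d(\xi,\xi')\) for \emph{all} pairs \(\xi,\xi'\), not only adjacent ones. Take a shortest path \(\xi=\xi_0,\xi_1,\ldots,\xi_k=\xi'\) in the state graph, so that \(d(\xi,\xi')=\sum_i d(\xi_{i-1},\xi_i)\) and each consecutive pair is adjacent; here \(d(\xi_{i-1},\xi_i)\) equals the edge length, because a geodesic uses each edge at its shortest-path length. Since \(\Wd\) is a metric on probability measures, the triangle inequality gives
\[
\Wd(P(\xi,\cdot),P(\xi',\cdot))\le\sum_i(1-\kappa)d(\xi_{i-1},\xi_i)=(1-\kappa)d(\xi,\xi').
\]
Gluing the adjacent couplings would work equally well. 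Consequently, for any two laws \(\mu,\mu'\), taking an optimal coupling of \(\mu,\mu'\) and mixing the optimal one-step couplings \(\pi_{\xi,\xi'}\) over it yields \(\Wd(\mu P,\mu' P)\le(1-\kappa)\Wd(\mu,\mu')\). The mixture is a valid coupling of \(\mu P\) and \(\mu'P\) because the state space is finite, so no measurability issues arise.

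\emph{Step 2: one-step perturbation.} Next we show \(\Wd(\nu_Q P,\nu_Q Q)\le\E_{\xi\sim\nu_Q}\Wd(P(\xi,\cdot),Q(\xi,\cdot))\le C\). To see this, draw \(\xi\sim\nu_Q\) and then sample from an optimal coupling of \(P(\xi,\cdot)\) and \(Q(\xi,\cdot)\); this gives a coupling of \(\nu_QP\) and \(\nu_QQ\) with the stated expected cost. This is where the averaged, rather than pointwise, hypothesis enters, and it enters only through \(\nu_Q\).

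\emph{Step 3: combine by stationarity.} Using \(\nu_P=\nu_PP\) and \(\nu_Q=\nu_QQ\), the triangle inequality gives
\[
\Wd(\nu_P,\nu_Q)\le\Wd(\nu_PP,\nu_QP)+\Wd(\nu_QP,\nu_QQ)\le(1-\kappa)\Wd(\nu_P,\nu_Q)+C.
\]
The quantity \(\Wd(\nu_P,\nu_Q)\) is finite because the space is finite, so rearranging yields \(\kappa\,\Wd(\nu_P,\nu_Q)\le C\), which is the claim.

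The only mildly delicate point is Step 1: we need \(d\) to be exactly the geodesic sum of edge lengths and \(\Wd\) to satisfy the triangle inequality. Both hold on a finite metric space by the standard gluing lemma. If \(\nu_P\) or \(\nu_Q\) is not unique, the argument applies verbatim to any choice of stationary laws.
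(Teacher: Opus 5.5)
Your proposal is correct and follows the paper's proof step for step. You extend the adjacent contraction to all pairs along shortest paths, lift it to laws by coupling the initial states, bound \(\Wd(\nu_QP,\nu_QQ)\) by \(C\) by averaging the row couplings under \(\nu_Q\), and rearrange the stationarity triangle inequality; your remark that along a geodesic \(\sum_i d(\xi_{i-1},\xi_i)=d(\xi,\xi')\) correctly fills in the detail the paper's phrase ``summing along shortest paths'' leaves implicit.
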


\begin{proof}
Summing along shortest paths extends the first bound to all states;
coupling the initial states then gives
\(\Wd(\nu P,\nu' P)\le(1-\kappa)\Wd(\nu,\nu')\) for all laws \(\nu,\nu'\).
Stationarity and the second hypothesis yield
\[
  \Wd(\nu_P,\nu_Q)
  \le\Wd(\nu_PP,\nu_QP)+\Wd(\nu_QP,\nu_QQ)
  \le(1-\kappa)\Wd(\nu_P,\nu_Q)+C.
\]
Rearranging gives the claim.
\end{proof}

\begin{lemma}[Finite-state continuity of normalized laws]
\label{lem:finite-endpoint-closure}
On a finite space \(\Omega\), let two families of nonnegative weights be
continuous at \(x=0\), with positive total weights there.  Their normalized
laws \(\mu^{(0)}_x,\mu^{(1)}_x\) converge in total variation to
\(\mu^{(0)}_0,\mu^{(1)}_0\) as
\(x\downarrow0\), and for every metric \(d\) on \(\Omega\),
\[
  \Wd(\mu^{(0)}_x,\mu^{(1)}_x)\longrightarrow
  \Wd(\mu^{(0)}_0,\mu^{(1)}_0).
\]
Consequently, if \(\Wd(\mu^{(0)}_x,\mu^{(1)}_x)\le B(x)\) for all small
\(x>0\), then \(\Wd(\mu^{(0)}_0,\mu^{(1)}_0)\le\liminf_{x\downarrow0}B(x)\).
\end{lemma}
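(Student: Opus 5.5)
The plan is to prove both conclusions from elementary continuity on the finite space \(\Omega\). Write the two weight families as \(w^{(j)}_x\colon\Omega\to[0,\infty)\), \(j\in\{0,1\}\), with totals \(W^{(j)}_x:=\sum_{\omega}w^{(j)}_x(\omega)\). Since \(\Omega\) is finite and each \(w^{(j)}_x(\omega)\) is continuous at \(0\), each total \(W^{(j)}_x\) is continuous at \(0\). Because \(W^{(j)}_0>0\), it is bounded away from zero for all small \(x\). Hence each quotient \(\mu^{(j)}_x(\omega)=w^{(j)}_x(\omega)/W^{(j)}_x\) is well defined for small \(x\) and converges to \(\mu^{(j)}_0(\omega)\). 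Summing over the finitely many \(\omega\) gives convergence in total variation, \(\|\mu^{(j)}_x-\mu^{(j)}_0\|_{\mathrm{TV}}\to0\).

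Next I compare Wasserstein distances through a total-variation bound. Let \(D:=\max_{\omega,\omega'}d(\omega,\omega')<\infty\) be the diameter. For any two laws \(\nu,\nu'\) on \(\Omega\), the maximal coupling makes the two samples disagree with probability \(\|\nu-\nu'\|_{\mathrm{TV}}\). When they disagree, the distance is at most \(D\), so
\[
  \Wd(\nu,\nu')\le D\,\|\nu-\nu'\|_{\mathrm{TV}}.
\]
The quantity \(\Wd\) is a metric on laws, because it is an optimal-transport cost for a metric \(d\) (gluing of couplings gives the triangle inequality). The reverse triangle inequality therefore gives
\[
  \bigl|\Wd(\mu^{(0)}_x,\mu^{(1)}_x)-\Wd(\mu^{(0)}_0,\mu^{(1)}_0)\bigr|
  \le D\bigl(\|\mu^{(0)}_x-\mu^{(0)}_0\|_{\mathrm{TV}}
        +\|\mu^{(1)}_x-\mu^{(1)}_0\|_{\mathrm{TV}}\bigr),
\]
and the right-hand side tends to \(0\) by the first step.

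An alternative to the triangle inequality is to treat \(\Wd\) as the value of a finite linear program whose marginal constraints vary continuously. That route needs a lemma on continuity of linear-program values, so the triangle-inequality route is cleaner and I will use it.

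For the final statement, I take the \(\liminf\) as \(x\downarrow0\) in the inequality \(\Wd(\mu^{(0)}_x,\mu^{(1)}_x)\le B(x)\). The left-hand side converges to \(\Wd(\mu^{(0)}_0,\mu^{(1)}_0)\), which gives the bound. No step is a genuine obstacle. The only points needing care are to justify the triangle inequality for \(\Wd\), either by citing that it is a metric or by the gluing lemma, and to note that positivity of the totals at \(0\) is exactly what makes the normalization continuous.
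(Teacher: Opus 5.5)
Your proposal is correct and follows the paper's proof: pointwise (hence total-variation) convergence of the normalized weights, the bound \(\Wd(\nu,\nu')\le\operatorname{diam}_d(\Omega)\,\|\nu-\nu'\|_{\mathrm{TV}}\), and the triangle inequality for \(\Wd\). You also write out the details the paper leaves implicit: positivity of the totals near \(0\), the maximal coupling, and the final \(\liminf\) step.
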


\begin{proof}
The normalized weights converge pointwise.  Use
\(\Wd(\nu,\nu')\le\operatorname{diam}_d(\Omega)
\|\nu-\nu'\|_{\rm TV}\) and the triangle inequality.
\end{proof}

\subsection{Random hard constraints and the soft flip kernel}
\label{sec:soft-flip-interface}

Fix \(G\in\Gdeg\), a pinning \(\tau\), and \(x\in(0,1]\).
Let \(\mathcal E_\tau\) consist of the free edges and the individually
labelled free--pinned edges; pinned-only edges have already been removed
by normalization.  Each member imposes an inequality between its endpoint
colours.  The identity
\begin{equation}\label{eq:active-identity}
  x\one{c=c'}+\one{c\ne c'}=x+(1-x)\one{c\ne c'}
\end{equation}
gives the following representation.  Given \(\sigma\), activate each
satisfied inequality independently with probability \(1-x\).
For the active set \(A\), let \(E_A\) consist of its free edges and put
\[
  L_{A,u}=\colours\setminus
    \{\tau_p:up\in A\text{ is a free--pinned edge}\},
  \qquad F_A=(V^\tau,E_A,L_A).
\]
For any hard list instance \(F=(W,E,L)\), write \(\Omega_F\) for its
proper colourings respecting every list.

\begin{lemma}[Soft Potts as a random hard-colouring model]
\label{lem:active-rep}
If \(\sigma\sim\mupin{G}{\tau}{x}\) and \(A\) is sampled as above,
then \(\sigma\), conditional on \(A\), is uniform on \(\Omega_{F_A}\).
\end{lemma}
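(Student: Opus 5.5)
The plan is to compute the joint law of \((\sigma,A)\) explicitly and check that, for each fixed active set \(A\), its \(\sigma\)-section is constant on \(\Omega_{F_A}\) and zero off it. Write \(e\in\mathcal E_\tau\) for a member with endpoints \(u_e,v_e\), where for a free--pinned edge the second endpoint is the pinned vertex \(p\) and \((\tau\cup\sigma)_p=\tau_p\). Say \(e\) is \emph{satisfied} by \(\sigma\) when \((\tau\cup\sigma)_{u_e}\ne(\tau\cup\sigma)_{v_e}\). By the normalized form \eqref{eq:potts-Z}, the unnormalized weight of \(\sigma\) is \(\prod_{e\in\mathcal E_\tau}\bigl(x\one{e\text{ unsat}}+\one{e\text{ sat}}\bigr)\). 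The boundary factor \(x^{\bcount_u^\tau(\sigma_u)}\) is exactly this product over the labelled free--pinned edges at \(u\), counted separately even when several pinned neighbours share a colour.

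Next I apply the activation rule: given \(\sigma\), each satisfied member is active independently with probability \(1-x\), and unsatisfied members are never active. The joint weight of \((\sigma,A)\) is therefore
\[
  \prod_{e\ \text{unsat}}x\;\prod_{e\ \text{sat},\,e\in A}(1-x)\;\prod_{e\ \text{sat},\,e\notin A}x,
\]
provided \(A\) contains only satisfied members, and it is zero otherwise. When \(A\) consists of satisfied members, every member outside \(A\) contributes a factor \(x\) (satisfied or not), and every member in \(A\) contributes \(1-x\). So the joint weight equals \(x^{|\mathcal E_\tau\setminus A|}(1-x)^{|A|}\,\one{\text{every }e\in A\text{ is satisfied by }\sigma}\). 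This is the identity \eqref{eq:active-identity} expanded edge by edge.

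It remains to identify the indicator. It equals \(\one{\sigma\in\Omega_{F_A}}\): a free edge in \(A\) is satisfied exactly when \(\sigma\) is proper on that edge of \(E_A\), and a free--pinned member \(up\in A\) is satisfied exactly when \(\sigma_u\ne\tau_p\), i.e.\ \(\sigma_u\in L_{A,u}\). Since the prefactor does not depend on \(\sigma\), conditioning on \(A\) gives the uniform law on \(\Omega_{F_A}\). The conditioning is well defined whenever \(A\) has positive probability, because then some \(\sigma\) satisfies all of \(A\) and so \(\Omega_{F_A}\ne\emptyset\).

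The only point needing care is bookkeeping rather than analysis. Free--pinned edges must be individually labelled so that repeated pinned colours at a vertex give separate factors, and pinned-only edges must be excluded, as the normalization already does. The hypothesis \(x\in(0,1]\) keeps \(\mupin{G}{\tau}{x}\) well defined. The degenerate case \(x=1\) (\(A=\emptyset\) almost surely, uniform product law) is consistent with the formula.
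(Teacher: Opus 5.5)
Your proof is correct and follows the paper's argument: both obtain the joint weight \(x^{M-|A|}(1-x)^{|A|}\one{\sigma\in\Omega_{F_A}}\) and observe that, for fixed \(A\), it is constant on \(\Omega_{F_A}\) and zero off it. The only difference is direction: the paper states this joint weight and checks that summing over \(A\) recovers the Potts weight, whereas you derive it by multiplying the Potts weight by the activation probabilities.
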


\begin{proof}
For \(M=|\mathcal E_\tau|\), the joint unnormalized weight is
\begin{equation}\label{eq:joint-active-law}
  x^{M-|A|}(1-x)^{|A|}\one{\sigma\in\Omega_{F_A}}.
\end{equation}
Summing over \(A\) gives a factor \(1\) for each satisfied constraint
and \(x\) for each unsatisfied one, hence the Potts weight.  For fixed
\(A\), the displayed weight is constant on \(\Omega_{F_A}\).
\end{proof}

A \emph{flip profile} is a sequence \(\rho=(\rho_s)_{s\ge1}\) with
\(0\le\rho_s\le1\) for all \(s\).

\begin{definition}[Hard list-flip transition]\label{def:flip}
From \(X\in\Omega_F\), choose \((u,c)\) uniformly in \(W\times\colours\).
If \(c=X_u\), hold.  Otherwise, let \(S_X(u,c)\) be the
\(\{X_u,c\}\)-component of \(u\), that is, the component containing \(u\)
of the subgraph of \((W,E)\) induced by the vertices coloured \(X_u\) or
\(c\).  Swap its two colours with probability
\(\rho_{|S_X(u,c)|}/|S_X(u,c)|\), provided the swap respects every list;
otherwise hold.  Denote this kernel by \(\Phi_F^\rho\), taking the
identity kernel when \(W=\varnothing\).
\end{definition}

A feasible component swap of size \(s\) has \(s\) proposals, hence total
probability \(\rho_s/(|W|q)\).  The swap is an involution with the same
reverse rate, so \(\Phi_F^\rho\) is reversible for the uniform law.

\begin{definition}[Soft flip kernel]\label{def:soft-kernel}
The kernel \(K_{x,\rho}^{G,\tau}\) samples \(A\) conditional on the
current colouring, applies \(\Phi_{F_A}^\rho\) (\Cref{def:flip}), and
returns the updated
colouring to \(G\), keeping \(\tau\) fixed (\Cref{fig:soft-kernel-step}).
\end{definition}

\begin{figure}[t]
\centering
\begin{tikzpicture}[
  x=1cm,y=1cm,
  vertex/.style={
    circle,draw=black!55,line width=.65pt,minimum size=6.4mm,
    inner sep=0pt,font=\small\bfseries,text=black!85
  },
  one/.style={vertex,fill=blue!13},
  two/.style={vertex,fill=orange!24},
  three/.style={vertex,fill=teal!17},
  four/.style={vertex,fill=violet!17},
  full/.style={draw=black!48,line width=.7pt},
  active/.style={draw=teal!75!black,line width=1.1pt},
  activevertex/.style={draw=teal!75!black,line width=1.05pt},
  inactive/.style={
    draw=black!35,dash pattern=on 2pt off 2pt,line width=.55pt
  },
  component/.style={draw=teal!9,line width=9mm,line cap=round},
  flowarrow/.style={
    -{Latex[length=1.7mm,width=1.2mm]},draw=black!65,line width=.7pt
  },
  paneltitle/.style={font=\normalsize,anchor=base},
  flowlabel/.style={font=\small,text=black!75,inner sep=2pt},
  legend/.style={font=\footnotesize,text=black!70,anchor=west,inner sep=2pt},
  pics/softstate/.style n args={2}{code={
    \coordinate (a) at (-.82,.82);
    \coordinate (b) at ( .82,.82);
    \coordinate (c) at (0,0);
    \coordinate (d) at (-.82,-.82);
    \coordinate (e) at ( .82,-.82);
    \ifnum#1=1
      \draw[component] (a)--(c)--(e);
      \foreach \u/\v in {a/b,a/d,d/e,b/e,b/c,c/d}
        {\draw[inactive] (\u)--(\v);}
      \draw[active] (a)--(c)--(e);
    \else
      \foreach \u/\v in {a/b,a/d,d/e,b/e,a/c,b/c,c/d,c/e}
        {\draw[full] (\u)--(\v);}
    \fi
    \node[three] at (b) {3};
    \node[four] at (d) {4};
    \begin{scope}
      \ifnum#1=1
        \tikzset{vertex/.append style={activevertex}}
      \fi
      \ifnum#2=1
        \node[two] at (a) {2};
        \node[one] at (c) {1};
        \node[two] at (e) {2};
      \else
        \node[one] at (a) {1};
        \node[two] at (c) {2};
        \node[one] at (e) {1};
      \fi
    \end{scope}
  }}
]
  \pic at (0,0) {softstate={0}{0}};
  \pic at (4.1,0) {softstate={1}{0}};
  \pic at (8.2,0) {softstate={1}{1}};
  \pic at (12.3,0) {softstate={0}{1}};

  \node[paneltitle] at (0,1.43) {\((G,\sigma)\)};
  \node[paneltitle] at (4.1,1.43) {\((F_A,\sigma)\)};
  \node[paneltitle] at (8.2,1.43) {\((F_A,\sigma')\)};
  \node[paneltitle] at (12.3,1.43) {\((G,\sigma')\)};

  \draw[flowarrow] (1.35,0)--(2.75,0)
    node[midway,above=3pt,flowlabel] {sample \(A\)};
  \draw[flowarrow] (5.45,0)--(6.85,0)
    node[midway,above=3pt,flowlabel] {flip}
    node[midway,below=3pt,flowlabel] {\(1\leftrightarrow2\)};
  \draw[flowarrow] (9.55,0)--(10.95,0)
    node[midway,above=3pt,flowlabel] {forget \(A\)};

  \draw[active] (1.3,-.6)--(1.63,-.6)
    node[right=2pt,legend] {active};
  \draw[inactive] (1.3,-.96)--(1.63,-.96)
    node[right=2pt,legend] {inactive};
\end{tikzpicture}
\caption{One soft flip update (\Cref{def:soft-kernel}), without pins.
Given \(\sigma\), each satisfied edge is activated independently with
probability \(1-x\); inactive (dashed) edges are absent from the hard
instance \(F_A\).  The flip kernel swaps the highlighted
\(\{1,2\}\)-component of \(F_A\), and discarding \(A\) returns the
updated colouring \(\sigma'\) to \(G\).}
\label{fig:soft-kernel-step}
\end{figure}

\begin{lemma}[Soft flip kernels]\label{lem:soft-stationary}
For every profile \(\rho\) and \(x\in(0,1]\),
\(K_{x,\rho}^{G,\tau}\) is reversible with stationary law
\(\mupin{G}{\tau}{x}\).  If \(\rho_1>0\), it is irreducible on
\(\colours^{V^\tau}\).
\end{lemma}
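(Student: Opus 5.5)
The plan is to view \(K_{x,\rho}^{G,\tau}\) as the composition of two reversible steps on the joint space of pairs \((\sigma,A)\): a data-augmentation step followed by a hard flip step. By \Cref{lem:active-rep}, the joint law \(\pi(\sigma,A)\propto x^{M-|A|}(1-x)^{|A|}\one{\sigma\in\Omega_{F_A}}\) from \eqref{eq:joint-active-law} has \(\sigma\)-marginal \(\mupin{G}{\tau}{x}\). Sampling \(A\) conditional on \(\sigma\) is the Gibbs update \(\sigma\mapsto(\sigma,A)\) that resamples \(A\) from \(\pi(\cdot\mid\sigma)\). Applying \(\Phi_{F_A}^\rho\) with \(A\) held fixed preserves \(\pi(\cdot\mid A)\), which is uniform on \(\Omega_{F_A}\), because \(\Phi_{F_A}^\rho\) is reversible for the uniform law, as noted after \Cref{def:flip}. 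Forgetting \(A\) then projects back to \(\sigma\).

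\textbf{Reversibility.} I would write the transition explicitly as
\[
  K(\sigma,\sigma')=\sum_A \pi(A\mid\sigma)\,\Phi_{F_A}^\rho(\sigma,\sigma').
\]
Multiplying by \(\mu(\sigma)=\pi(\sigma)\) gives \(\sum_A\pi(\sigma,A)\Phi_{F_A}^\rho(\sigma,\sigma')\). For fixed \(A\), the weight \(\pi(\sigma,A)\) equals \(x^{M-|A|}(1-x)^{|A|}\) on \(\Omega_{F_A}\) and zero off it, so it is constant on \(\Omega_{F_A}\). The kernel \(\Phi_{F_A}^\rho\) is symmetric on \(\Omega_{F_A}\), since each feasible swap is an involution with equal forward and reverse probabilities \(\rho_s/(|W|q)\). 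It also never leaves \(\Omega_{F_A}\). Hence \(\pi(\sigma,A)\Phi_{F_A}^\rho(\sigma,\sigma')=\pi(\sigma',A)\Phi_{F_A}^\rho(\sigma',\sigma)\) term by term, and summing over \(A\) gives detailed balance for \(K\). Stationarity follows from reversibility.

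\textbf{Irreducibility when \(\rho_1>0\).} Since \(x>0\), the empty set \(A=\varnothing\) has conditional probability \(\prod(\text{prob. not activated})\ge x^{M}>0\) from every \(\sigma\). Under \(A=\varnothing\), the instance \(F_\varnothing\) has no edges and full lists, so every \(\{X_u,c\}\)-component of \(u\) is the singleton \(\{u\}\). The swap recolours \(u\) to \(c\) with probability \(\rho_1/(|V^\tau|q)>0\), and it respects all lists. Thus any single-site recolouring has positive probability, and any two configurations in \(\colours^{V^\tau}\) are connected by at most \(|V^\tau|\) such moves. The case \(V^\tau=\varnothing\) is trivial.

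The only delicate point is the bookkeeping that \(\pi(\cdot,A)\) is exactly constant on \(\Omega_{F_A}\). In particular, the free--pinned edges enter only through the lists \(L_A\), and pinned-only edges are absent, which \Cref{lem:active-rep} already handles. I do not expect a genuine obstacle here.
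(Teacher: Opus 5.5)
Your proof is correct and is the paper's argument: detailed balance of \(\Phi_{F_A}^\rho\) for the uniform law on \(\Omega_{F_A}\), weighted by the joint law \eqref{eq:joint-active-law} and summed over \(A\), gives detailed balance for \(\mupin{G}{\tau}{x}\). Irreducibility comes, as in the paper, from the event \(A=\varnothing\), which has positive probability since \(x>0\) and on which \(\rho_1>0\) permits every single-site recolouring; your checks that the flip kernel never leaves \(\Omega_{F_A}\) and that \(\pi(\cdot,A)\) is constant there make explicit what the paper leaves implicit.
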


\begin{proof}
By \Cref{lem:active-rep}, conditional detailed balance for each \(A\),
multiplied by its marginal
probability in \eqref{eq:joint-active-law} and summed over \(A\), gives
detailed balance for the spin marginal.  The empty active set has positive
probability when \(x>0\); on this event, \(\rho_1>0\) permits every
one-site recolouring.
\end{proof}

\subsection{Common activations and boundary sensitivity}
\label{sec:boundary-sensitivity}

For colourings \(X,Y\) at Hamming distance one (adjacent colourings),
differing at a vertex \(v\), which we call the \emph{root}, with
\(X_v=a\ne b=Y_v\), draw one independent Bernoulli\((1-x)\) coin per
labelled constraint.  Activate that constraint on either side exactly when
the coin is one and the inequality is satisfied there.  Write
\(\mathcal A=(A_X,A_Y)\) for the outcome and \(F_X,F_Y\) for the hard
instances.

\begin{lemma}[The two hard instances differ only near \(v\)]
\label{lem:coupled-activation-root-local}
The common-coin construction has the correct activation marginals.
The two instances have the same induced graph and lists off \(v\).  Their
sets of edges at \(v\) can differ only at neighbours of colour \(a\) or
\(b\), and their lists at \(v\) can differ only in those two colours.
\end{lemma}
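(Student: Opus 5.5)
The plan is to verify the three claims directly from the common-coin construction.

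\emph{Marginals.} For a fixed labelled constraint $e\in\mathcal E_\tau$, the event that $e\in A_X$ is the intersection of two events: its coin equals one, and the inequality for $e$ is satisfied under $X$. The coins are independent across constraints. Hence, conditional on $X$, each satisfied constraint is activated independently with probability $1-x$. This is exactly the sampling rule of \Cref{lem:active-rep} and \Cref{def:soft-kernel}. The same argument applies to $Y$. So each coordinate of $\mathcal A$ has the correct law, even though the two coordinates are coupled through the shared coins.

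\emph{Locality.} I would split $\mathcal E_\tau$ into constraints not incident to $v$ and constraints incident to $v$. Take a constraint not incident to $v$; it is either a free edge $uw$ or a free--pinned edge $up$ with $u,w\ne v$. Since $X$ and $Y$ agree off $v$, the satisfaction status of this constraint is the same on both sides. With the shared coin, it is therefore active in $A_X$ if and only if it is active in $A_Y$. Consequently the two instances contain the same free edges among $V^\tau\setminus\{v\}$, so the induced graphs $F_X-v$ and $F_Y-v$ coincide. Likewise, for every $u\ne v$ the pinned constraints contributing to $L_{A,u}$ agree, so the lists off $v$ are equal.

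For constraints incident to $v$, consider a free edge $vw$, where $X_w=Y_w=c$. If $c\notin\{a,b\}$, then $c\ne a$ and $c\ne b$, so the edge is satisfied on both sides and has the same activation. Thus a difference can only occur when $c\in\{a,b\}$. The same reasoning applies to a free--pinned edge $vp$ with $\tau_p=c$. If $c\notin\{a,b\}$, it is active in both instances or in neither, so colour $c$ is removed from both lists at $v$ or from neither. Hence the lists at $v$ can differ only in the colours $a$ and $b$.

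Almost all of this is bookkeeping, and I expect no real obstacle. The only point needing care is that free--pinned edges are labelled individually. Because of this, several pinned neighbours of the same colour carry separate coins, and the claim concerns which colours are removed from $L_{A,v}$, not the individual constraints. The statement only concerns colours, so this is consistent.
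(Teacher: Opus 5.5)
Your proposal is correct and follows essentially the same route as the paper, whose proof is the single observation that a labelled inequality can have different satisfaction status on the two sides only if it is incident to \(v\) and its other endpoint has colour \(a\) or \(b\). You spell out the same idea in more detail, including the check that the shared Bernoulli\((1-x)\) coins give each side the correct activation marginal.
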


\begin{proof}
An inequality has different satisfaction status only if it is incident to
\(v\) and its other endpoint has colour \(a\) or \(b\).
\end{proof}

\begin{lemma}[Effect of adding one labelled free--pinned edge]
\label{lem:boundary-sensitivity}
Let a \emph{minus} and a \emph{plus} pinned system have the same free
vertex set \(W\), of size \(m\ge1\), and the same labelled constraints
except for one additional free--pinned edge in the plus system.  For a
profile \(\rho\) with \(R_\rho:=\sup_{s\ge1}s\rho_s<\infty\) and
\(x\in(0,1]\), their soft kernels \(K_-\) and \(K_+\) satisfy
\begin{equation}\label{eq:one-boundary-row}
  \WHam(K_-(\sigma,\cdot),K_+(\sigma,\cdot))
  \le\frac{(1-x)R_\rho}{mq}
  \qquad(\sigma\in\colours^W).
\end{equation}
For hard list kernels, deleting one colour from one list gives the bound
\(R_\rho/(mq)\) on colourings feasible for the smaller list.
Adding \(k\) labelled free--pinned edges (for hard list kernels,
deleting \(k\) colours from lists) multiplies the respective bound
by \(k\).
\end{lemma}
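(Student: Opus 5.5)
We prove \eqref{eq:one-boundary-row} by a single coupling of the two soft updates from the same state \(\sigma\).  Let \(e=up\) be the extra labelled free--pinned edge, with \(u\in W\) and pinned colour \(c_0:=\tau_p\).  Use common coins for all shared labelled constraints, so the two active sets agree on shared constraints.  The only possible difference is that \(e\) is active in the plus system.  This requires the coin for \(e\) to be one, which happens with probability \(1-x\).  It also requires \(e\) to be satisfied, that is, \(\sigma_u\ne c_0\).  On the complementary event the two hard instances coincide, the flip steps can be coupled identically, and the cost is zero.  It therefore suffices to show the hard-list statement: for a hard instance \(F\) and the instance \(F^+\) obtained by deleting \(c_0\) from \(L_u\), with \(X\in\Omega_{F^+}\), we have
\[
  \WHam\bigl(\Phi_F^\rho(X,\cdot),\Phi_{F^+}^\rho(X,\cdot)\bigr)\le R_\rho/(mq).
\]
Multiplying by \(1-x\) then gives \eqref{eq:one-boundary-row}.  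Note that \(X=\sigma\) is feasible for the smaller list precisely because \(e\) was satisfied.

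For the hard step, couple the two list-flip kernels with the same proposal \((w,c)\in W\times\colours\) and the same uniform variable for the acceptance probability \(\rho_{|S|}/|S|\).  The Kempe component \(S_X(w,c)\) depends only on the graph and \(X\), so it is the same in both instances.  The two outcomes differ only when the swap respects every list of \(F\) but violates the list at \(u\) in \(F^+\).  This means \(u\in S_X(w,c)\) and the swap would give \(u\) the colour \(c_0\).  Because \(X_u\ne c_0\), the pair of colours must be \(\{X_u,c_0\}\).  Hence the component is the unique \(\{X_u,c_0\}\)-component \(S^*\) containing \(u\), say of size \(s\).

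The proposals that generate \(S^*\) are exactly the \(s\) pairs \((w,c)\) with \(w\in S^*\) and \(c\) the other colour of the pair.  The total proposal probability is therefore \(s/(mq)\).  The acceptance probability is \(\rho_s/s\), and on acceptance the two outcomes differ on all of \(S^*\), giving Hamming cost \(s\).  The expected cost is
\[
  \frac{s}{mq}\cdot\frac{\rho_s}{s}\cdot s=\frac{s\rho_s}{mq}\le\frac{R_\rho}{mq}.
\]
No other proposal causes a discrepancy, so this proves the hard-list bound.

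For \(k\) extra edges, or \(k\) deleted colours, interpolate through \(k+1\) systems, adding one edge (or deleting one colour) at a time.  Each intermediate step satisfies the one-edge bound, so the triangle inequality for \(\WHam\) gives the factor \(k\).  In the hard case, a colouring feasible for the smallest lists is feasible for every intermediate instance, so each step applies.

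The main point needing care is the first step.  Common coins must give the correct activation marginals in each system, which they do since each side activates a satisfied constraint exactly when its coin is one.  One must also check that, when \(e\) is active, the two hard instances differ exactly by the deletion of \(c_0\) from \(L_u\), with the edge sets unchanged.  This holds because \(e\) is free--pinned, so its activation affects only the list at \(u\).
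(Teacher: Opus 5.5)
Your proposal is correct and follows the paper's proof essentially step for step. Both use common activation coins and common flip proposals, and both observe that the extra constraint can only block the swap of the unique \(\{\sigma_u,\tau_p\}\)-component containing \(u\), at cost \(s\rho_s/(mq)\le R_\rho/(mq)\). The factor \(1-x\) then comes from averaging over the activation coin, and the triangle inequality handles \(k\) added constraints. You also make explicit checks the paper leaves implicit: the Kempe components are the same in both instances, and \(\sigma\) is feasible for the smaller list whenever the extra edge is active.
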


\begin{proof}
Couple the common activation coins and flip proposals.  Suppose the extra
constraint is the edge \(up\), which forbids the colour \(\tau_p\) at
\(u\).  It is inactive when \(\sigma_u=\tau_p\); otherwise it is active
with probability \(1-x\).  When active, it can change acceptance only for
the unique \(\{\sigma_u,\tau_p\}\)-component containing \(u\).  If this component has
size \(s\), couple its rejected plus move to its minus move: the cost is
at most \(s\rho_s/(mq)\le R_\rho/(mq)\).
Average over activation to prove \eqref{eq:one-boundary-row}; deterministic
list deletion omits the factor \(1-x\).  For several free--pinned edges
(or deleted colours), add them one at a time and use the triangle
inequality.
\end{proof}
\subsection{The conditional hard estimate}
\label{sec:hard-one-step}

Use Vigoda's profile~\cite{Vigoda2000}
\begin{equation}\label{eq:flip-params}
  p_1=1,\quad p_2=\frac{13}{42},\quad p_3=\frac16,\quad
  p_4=\frac2{21},\quad p_5=\frac1{21},\quad p_6=\frac1{84},\quad
  p_s=0\quad(s\ge7),
\end{equation}
and abbreviate \(\flip_F:=\Phi_F^{\mathbf p}\) and
\(\softK_x^{G,\tau}:=K_{x,\mathbf p}^{G,\tau}\).

Fix \(G\in\Gdeg\), a pinning \(\tau\), and \(x\in(0,1]\).
For colourings \(X,Y\) at Hamming distance one, with
\(X_v=a\ne b=Y_v\), condition on their common-coin activation outcome
\(\mathcal A\).  Write \(F_X,F_Y\) for the resulting hard instances,
\(L_{X,v}:=L_{A_X,v}\) and \(L_{Y,v}:=L_{A_Y,v}\) for their root lists,
\(n=|V^\tau|\), and \(G_0\) for their common induced graph off \(v\),
with its common colouring.  The two current colourings are feasible, so
\(a\in L_{X,v}\) and \(b\in L_{Y,v}\); the root lists agree on every
other colour.  Put
\[
  N_c=\{u\in N_{G^\tau}(v):X_u=Y_u=c,\;
    vu\text{ is active on at least one side}\},
  \qquad \ell=|L_{X,v}\cap L_{Y,v}|.
\]
The one-step estimate of Vigoda~\cite{Vigoda2000} and its list-colouring
extension in~\cite[Section~6 and Appendix~E]{CDMPP2019} couple two
colourings of one fixed instance that differ at a single vertex.  Here the
two chains run on the hard instances \(F_X\) and \(F_Y\), which may differ
in the edges at the root and in the root list
(\Cref{lem:coupled-activation-root-local}), so we prove the estimate
directly.

\begin{proposition}[Conditional hard estimate]\label{prop:hard-one-step}
The updates admit a coupling \((X',Y')\), with
\(X'\sim\flip_{F_X}(X,\cdot)\) and \(Y'\sim\flip_{F_Y}(Y,\cdot)\), such
that
\begin{equation}\label{eq:conditional-hard-estimate}
  nq\,\E[\Ham(X',Y')-1\mid\mathcal A]
  \le\frac{11}{6}\sum_c|N_c|-\ell.
\end{equation}
\end{proposition}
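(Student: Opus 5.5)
I would adapt Vigoda's path-coupling analysis directly to the pair of hard instances \((F_X,F_Y)\). The two instances share the graph \(G_0\) and all lists away from \(v\) (\Cref{lem:coupled-activation-root-local}), so the two chains use identical proposals \((u,c)\) except near the root.

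\textbf{Step 1: root moves.} For a proposal \((v,c)\) with \(c\in L_{X,v}\cap L_{Y,v}\) and \(c\) absent from the active neighbourhood of \(v\) on both sides, both chains recolour \(v\) to \(c\) with probability \(p_1=1\), and the disagreement disappears. Each such \(c\) saves one unit, at rate \(1/(nq)\). If \(c\) lies in the common list but some neighbour has colour \(c\), then \(c\notin\{a,b\}\) and \(N_c\ne\varnothing\). Charging these failures to \(\sum_c|N_c|\) is how the \(-\ell\) term should appear after normalization. I would count this conservatively: each colour \(c\) with \(N_c\neq\varnothing\) loses at most one good root move, and that loss is absorbed by a constant at most \(|N_c|\) inside the \(\tfrac{11}{6}|N_c|\) budget. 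In Vigoda's accounting this budget already includes the \(-1\) per blocked colour.

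\textbf{Step 2: flips at colour-\(a\)/\(b\) neighbours and list differences.} For each colour \(c\notin\{a,b\}\) with neighbours \(N_c\), the \(\{a,c\}\)-components in \(X\) and the \(\{b,c\}\)-components in \(Y\) containing vertices of \(N_c\) differ by the root. Vigoda's coupling pairs the flip of the large component on one side with the flips of the small components on the other, and bounds the expected increase per colour by \(\tfrac{11}{6}|N_c|\) using the profile \eqref{eq:flip-params}. I would reuse this per-colour linear-programming bound verbatim, checking that it uses only component sizes in the common graph \(G_0\) plus the root. The new feature is that root edges to neighbours of colour \(a\) or \(b\) may be active on one side only, and the root lists may differ in \(a,b\). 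A neighbour \(u\) with \(X_u=Y_u=b\) and \(vu\) active only in \(X\) contributes to \(N_b\). The \(\{a,b\}\)-flips through \(v\), and the proposals \((v,b)\) in \(X\) and \((v,a)\) in \(Y\), are then coupled as in the list-colouring extension of \cite{CDMPP2019}. Here a proposal that is infeasible on one side because of the list is held, which costs at most the same as a blocked flip. I would verify that the configurations with \(N_a,N_b\) built from one-sided edges fit Vigoda's worst-case component inequality, with the root treated as an endpoint whose incident edges are a subset of the full edges.

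\textbf{Main obstacle.} The hardest step is the \(\{a,b\}\) colour pair, where the component of \(v\) can differ in shape between \(F_X\) and \(F_Y\) because the one-sided edges change which neighbours are attached to it. I would first try to dominate this case by the symmetric case: adding the missing edges on the lighter side only merges components through \(v\). I would then argue that the pairing bound of Vigoda depends monotonically on the sizes, so it holds with the counts \(|N_a|,|N_b|\) defined by activity on at least one side. Summing the per-colour bounds and subtracting the \(\ell\) successful root recolourings should give \eqref{eq:conditional-hard-estimate}.
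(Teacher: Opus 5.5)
\paragraph{The gap: the colour pair $\{a,b\}$.}
You correctly name the pair $\{a,b\}$ as the main obstacle, but the proposal does not handle it. In Vigoda's argument and in the list extension of \cite{CDMPP2019}, both colourings are proper for one and the same instance. Since the neighbours of $v$ have the same colours in $X$ and $Y$, this forces $N_a=N_b=\varnothing$ and a single root list, so $a$ and $b$ contribute only singleton coalescing moves.

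Here the situation is different. An edge $vu$ with $X_u=Y_u=b$ can be active in $F_X$, where it is satisfied, but never in $F_Y$, where it is monochromatic; the same holds symmetrically for $a$. The $\{a,b\}$-component of $v$ is therefore $v$ together with the components of $G_0$ meeting $N_b$ in $X$, but those meeting $N_a$ in $Y$. Each side's root constituents reappear as off-root moves on the other side. The cited analyses never meet this configuration, so ``coupled as in \cite{CDMPP2019}'' has no content.

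Your domination step does not repair this. The ``symmetric case'' obtained by adding the missing edges on the other side is not a valid pair of states, because every added edge would be monochromatic there. Nor is Vigoda's per-colour bound a monotone function of sizes that can be transported; it is a worst case over a different configuration type.

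A triangle-inequality comparison with the instance lacking these edges is also too lossy. Take $|N_a|=1$ with singleton component $\{u\}$ and $a\in L_{Y,v}$. Inserting that one edge into $F_Y$ already moves the one-step law of $Y$ by $2/(nq)$ in $\WHam$: a merged swap of mass $p_2$ replaces two singleton swaps of mass $1$ each. The $-1$ that colour $a$ earns in the edge-free instance thus becomes $+1$, above the $\frac{11}{6}-1=\frac56$ it is allowed.

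\paragraph{What is needed.}
You need a direct coupling of these moves. Group the $Y$-root with the $X$-off-root components meeting $N_a$ but not $N_b$.
\begin{itemize}
\item If $|N_a|=1$ and its component $C$, of size $s$, is off-root in $X$, match the whole $Y$-root mass (at most $p_{s+1}\le p_s$) to the swap of $C$ in $X$. This pair coalesces, because the $Y$-root recolours $v$ to $a=X_v$. The residual of $C$ leaves charge at most $sp_s-(s+1)p_{s+1}\le\frac8{21}$. If the root swap is infeasible although $a\in L_{Y,v}$, the swap of $C$ is infeasible too, since lists off $v$ agree.
\item Otherwise match the $Y$-root to holding. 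A root of size $r$ then has drift $r-2$ and contributes at most $(r-2)p_r\le\frac4{21}$. Each remaining off-root component, including all of them when $a\notin L_{Y,v}$ blocks the root, costs at most $sp_s\le1$.
\end{itemize}
Checking $|N_a|=0$, $|N_a|=1$ and $|N_a|\ge2$ gives charge at most $\frac{11}{6}|N_a|-\one{a\in L_{Y,v}}$, and symmetrically for $b$. Since $a\in L_{X,v}$ automatically, this is exactly where the root-list difference enters $\ell$.

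For the regular colours your reduction to a single list instance is sound, but it is not verbatim. A blocked root move with unblocked constituents would already cost $1>\frac56$ for one neighbour with singleton components. What excludes this is that, when $c$ is allowed at $v$, the root swap is feasible exactly when all its constituent off-root swaps are.
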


\begin{proof}
We first match component moves and then bound the Hamming drift
separately for each colour.  Detailed size estimates are needed only when
\(c\notin\{a,b\}\) and \(|N_c|\in\{1,2\}\); the other cases admit coarse
bounds.

Scale all move probabilities by \(nq\).  Write \(p(M)\) for the scaled
mass of a component move \(M\): it is \(p_{|M|}\) if the swap is feasible and
zero otherwise.  A root component contains \(v\); an off-root component
does not.  Match every move that is off-root in both instances with the
identical move in the other instance.

Call \(c\notin\{a,b\}\) a regular colour.  Every active edge from
\(v\) to \(N_c\) is then common to both instances.  Let \(\mathcal C\)
be the family of \(\{a,c\}\)-components of \(G_0\) meeting \(N_c\),
and let \(\mathcal D\) be their \(\{b,c\}\)-counterparts.  Adding
\(v\) joins each family into a root component on one side; on the other
side its members remain off-root.  The remaining moves, which form
group~\(c\), are:
\begin{center}
\begin{tabular}{@{}lll@{}}
\toprule
colour pair & moves in \(X\) & moves in \(Y\)\\
\midrule
\(\{a,c\}\) & root \(R\) & off-root \(C\in\mathcal C\)\\
\(\{b,c\}\) & off-root \(D\in\mathcal D\) & root \(S\)\\
\bottomrule
\end{tabular}
\end{center}
Here
\[
  R=\{v\}\cup\bigcup_{C\in\mathcal C}C,
  \qquad S=\{v\}\cup\bigcup_{D\in\mathcal D}D.
\]
A root swap is feasible exactly when its new root colour is allowed and
all its constituent off-root swaps are feasible: these are precisely its
vertexwise list tests.  This equivalence will also handle blocked moves
in the estimates below.

\paragraph{Matching the moves.}
Write \(m=|N_c|\).  If \(m=0\), match the two singleton root moves,
of mass one, when \(c\) is in both root lists; otherwise both are blocked.
If \(m\ge1\), choose largest components \(C_\star,D_\star\), breaking
ties by the least-indexed neighbour in a fixed ordering of \(N_c\).
Match the whole mass \(p(R)\) of \(R\) to \(C_\star\), and the whole
mass \(p(S)\) of \(S\) to \(D_\star\).  These allocations fit: since
\(|R|>|C_\star|\) and \(p_s\) is nonincreasing for \(s\ge1\), we have
\(p_{|R|}\le p_{|C_\star|}\), and if the swap of \(R\) is feasible, so is
the swap of its constituent \(C_\star\), by the equivalence above.  Hence
\(p(R)\le p(C_\star)\), and likewise \(p(S)\le p(D_\star)\).  The residual
off-root masses are
\[
  \tilde p(C)=p(C)-\one{C=C_\star}p(R),
  \qquad \tilde p(D)=p(D)-\one{D=D_\star}p(S).
\]
Write \(N_c=\{u_1,\ldots,u_m\}\) in the fixed ordering.  For each \(i\),
let \(C_i\in\mathcal C\) and \(D_i\in\mathcal D\) be the components
containing \(u_i\), and match
an additional mass
\begin{equation}\label{eq:incidence-matrix-entry}
  \min\left\{\frac{\tilde p(C_i)}{|C_i\cap N_c|},
                  \frac{\tilde p(D_i)}{|D_i\cap N_c|}\right\}
\end{equation}
between those two off-root moves.  Dividing by the number of incidences
ensures that no component's residual mass is exceeded.
\Cref{fig:incidence-allocation} illustrates this division when a single
component meets two root neighbours.

\begin{figure}[tbp]
\centering
\begin{minipage}[t]{0.34\textwidth}
\vspace{0pt}
\centering
\small\textbf{Shared off-root component}\par\medskip
\begin{tikzpicture}[
  x=1cm,y=1cm,
  vertex/.style={circle,draw=black!65,fill=white,line width=.65pt,
    minimum size=6.5mm,inner sep=1pt,font=\small},
  edge/.style={draw=black!65,line width=.75pt},
  clabel/.style={font=\small,text=blue!65!black},
  dlabel/.style={font=\small,text=orange!70!black}
]
  \coordinate (v) at (0,1.5);
  \coordinate (u1) at (-1.3,0);
  \coordinate (u2) at (1.3,0);
  \coordinate (w) at (0,-1.25);
  \draw[draw=blue!13,line width=9mm,line cap=round,line join=round]
    (u1)--(w)--(u2);
  \draw[edge] (v)--(u1)--(w)--(u2)--(v);
  \draw[orange!70!black,dashed,line width=.75pt]
    (u1) circle[radius=.43cm] (u2) circle[radius=.43cm];
  \node[vertex] at (v) {\(a/b\)};
  \node[vertex] at (u1) {\(c\)};
  \node[vertex] at (u2) {\(c\)};
  \node[vertex] at (w) {\(a\)};
  \node[above=4mm,font=\small] at (v) {\(v\)};
  \node[above left=3mm,font=\small] at (u1) {\(u_1\)};
  \node[above right=3mm,font=\small] at (u2) {\(u_2\)};
  \node[below=4mm,font=\small] at (w) {\(w\)};
  \node[clabel] at (0,-.45) {\(C\)};
  \node[dlabel] at (-1.55,-.65) {\(D_1\)};
  \node[dlabel] at (1.55,-.65) {\(D_2\)};
\end{tikzpicture}
\end{minipage}\hfill
\begin{minipage}[t]{0.63\textwidth}
\vspace{0pt}
\centering
\small\textbf{Root matches and incidence allocation}\par\medskip
\(R_X\longleftrightarrow C_Y:\ 8,\qquad
  (D_1)_X\longleftrightarrow S_Y:\ 14.\)\par\medskip
\setlength{\tabcolsep}{5pt}
\renewcommand{\arraystretch}{1.2}
\begin{tabular}{@{}lrrrrr@{}}
\toprule
Move & Initial & \shortstack{After root\\matching}
     & \shortstack{Used at\\\(u_1\)}
     & \shortstack{Used at\\\(u_2\)} & Left\\
\midrule
\(C_Y\)     & 14 &  6 & 3 & 3 &  0\\
\((D_1)_X\)& 84 & 70 & 3 & 0 & 67\\
\((D_2)_X\)& 84 & 84 & 0 & 3 & 81\\
\bottomrule
\end{tabular}
\[
  u_1:\ \min\{6/2,70/1\}=3,\qquad
  u_2:\ \min\{6/2,84/1\}=3.
\]
\end{minipage}
\caption{Incidence matching on a four-cycle with full lists, \(q=4\),
and all four edges active.  The root has colour \(a\) in \(X\) and \(b\)
in \(Y\).  Off the root,
\(C=\{u_1,w,u_2\}\) and \(D_i=\{u_i\}\), so \(|R|=4\) and \(|S|=3\);
choose \(D_\star=D_1\).  Subscripts \(X,Y\) indicate the instance in
which a move is made.
All numbers are \(84\) times the scaled masses and hence represent actual
probabilities in units of \(1/(84nq)\).  By \eqref{eq:flip-params},
\(84p_4=8\) for \(R\), \(84p_3=14\) for \(C\) and \(S\), and \(84p_1=84\)
for \(D_1\) and \(D_2\).  The root moves \(R_X\) and \(S_Y\) are matched in
full, so they leave no residual and have no row.
Each of the two ``Used at'' columns records both sides of the same match.
The two incidences use \(3+3=6\), exactly the residual mass of \(C\);
without the division by \(|C\cap N_c|=2\), each would claim \(6\), twice
what \(C\) has left.}
\label{fig:incidence-allocation}
\end{figure}

For the pair \(\{a,b\}\), the \(Y\)-root consists of \(v\) and the
\(\{a,b\}\)-components of \(G_0\) meeting \(N_a\); the \(X\)-root consists
of \(v\) and those meeting \(N_b\).  A component meeting both sets belongs
to both roots and gives no off-root move.  Put the \(Y\)-root move,
together with the \(X\)-off-root moves meeting \(N_a\) but not \(N_b\),
into group~\(a\), and define group~\(b\) symmetrically.  When \(|N_a|=1\) and its component \(C\)
is off-root on the \(X\)-side, match the whole \(Y\)-root mass to \(C\).
Otherwise match it to holding.  Do the symmetric construction for \(b\).
The match with \(C\) fits: the \(Y\)-root is then \(\{v\}\cup C\), so its
mass is at most \(p_{|C|+1}\le p_{|C|}\), and if the \(Y\)-root swap is
feasible, so is the swap of \(C\) in \(X\), since the vertices of \(C\)
have the same lists in both instances
(\Cref{lem:coupled-activation-root-local}).
Each side has at least \(n\) units of scaled holding mass, contributed by
the \(n\) proposals in which a vertex proposes its current colour.  On
each side, at most one root move uses the holding mass, and its scaled
mass is at most \(p_1=1\), so these allocations fit.

These groups exhaust the remaining side-labelled moves without overlap:
colour pairs avoiding \(a,b\) were common, and the preceding description
classifies all pairs containing either root colour.  Complete the partial
coupling by pairing the residual masses of \(X\)-moves and \(Y\)-moves
proportionally: if they are \(\omega_M\) and \(\omega'_{M'}\), with common
total \(T>0\), add mass \(\omega_M\omega'_{M'}/T\) to the pair \((M,M')\);
if \(T=0\), nothing remains.
All allocations have the required marginals after division by \(nq\).

\paragraph{The charges.}
A pair of moves \(M,M'\) coupled by this completion increases the
Hamming distance by at most \(|M|+|M'|\).  Charge each residual move its size times its residual mass;
let \(Q_c\) be these charges plus the actual drift of the prescribed
matches in group \(c\).  Common matches have zero drift, and the residual
cost separates by side even when the completion joins different groups.
Consequently, the first bound below holds, and it suffices to prove the
second for every colour \(c\):
\begin{align}
  nq\,\E[\Ham(X',Y')-1\mid\mathcal A]&\le\sum_cQ_c,
    \label{eq:cert-ham}\\
  Q_c&\le\frac{11}{6}|N_c|-\one{c\in L_{X,v}\cap L_{Y,v}}.
    \label{eq:cert-bound}
\end{align}

For a regular colour, matching \(R\) to \(C_\star\) creates
disagreements only on the other constituents of \(R\); its drift is
\(|R|-|C_\star|-1\).  The analogous cost for \(S\) is
\(|S|-|D_\star|-1\).  Each off-root match in
\eqref{eq:incidence-matrix-entry} saves at least one unit because
\(C_i\cap D_i\) contains \(u_i\).  Adding root-match costs and residual
size charges, and subtracting these matching savings, gives, for \(m\ge1\),
\begin{equation}\label{eq:regular-colour-charge}
\begin{aligned}
 Q_c\le{}&(|R|-|C_\star|-1)p(R)+(|S|-|D_\star|-1)p(S)\\
 &+\sum_C|C|\tilde p(C)+\sum_D|D|\tilde p(D)
 -\sum_{i=1}^m
   \min\left\{\frac{\tilde p(C_i)}{|C_i\cap N_c|},
                    \frac{\tilde p(D_i)}{|D_i\cap N_c|}\right\}.
\end{aligned}
\end{equation}

The following bounds follow directly from the six nonzero profile values:
\begin{equation}\label{eq:cert-ineq-1}
  rp_r\le1,\qquad
  rp_r\le\frac{13}{21}\ (r\ge2),\qquad
  (r-1)p_r\le\frac13,\qquad
  (r-2)p_r\le\frac4{21}.
\end{equation}
For two off-root moves meeting the same neighbour, with sizes \(r,s\)
and masses \(\theta,\theta'\), write
\(g_{r,s}(\theta,\theta')=r\theta+s\theta'-\min\{\theta,\theta'\}\) for their size charges minus the
guaranteed matching saving.  This function is nondecreasing in each
argument for \(r,s\ge1\).  Since \(p_r\) is nonincreasing,
\(g_{r,s}(p_r,p_s)=rp_r+(s-1)p_s\) when \(r\le s\).
Using \eqref{eq:cert-ineq-1}, with \(r=1\) or \(r\ge2\), gives
\begin{equation}\label{eq:cert-ineq-2}
  g_{r,s}(\theta/j,\theta'/k)\le\frac43
  \quad(0\le\theta\le p_r,\ 0\le\theta'\le p_s,\ j,k\ge1).
\end{equation}
Splitting each component charge among its incidences in \(N_c\) now
bounds the last three terms of \eqref{eq:regular-colour-charge} by
\(4m/3\).  If \(c\) is blocked at the root, both root masses vanish,
proving \eqref{eq:cert-bound}.  If it is allowed and \(m\ge3\), the
root terms are each at most \(4/21\), giving
\(Q_c\le8/21+4m/3\le11m/6-1\).  For \(m=0\), the singleton match
gives \(Q_c=-\one{c\in L_{X,v}\cap L_{Y,v}}\).

\paragraph{Regular colours with one or two neighbours.}
It remains to treat an allowed regular colour with \(m\in\{1,2\}\).
For \(m=1\), let the two off-root sizes be \(r,s\) and put
\(d_r=p_r-p_{r+1}\).  Root feasibility equals off-root feasibility,
so the residual masses lie in \(\{0,d_r\}\) and \(\{0,d_s\}\), respectively.
Thus \eqref{eq:regular-colour-charge} and monotonicity of \(g\) give
\(Q_c\le g_{r,s}(d_r,d_s)\).  The profile gives
\[
  d_1=\frac{29}{42},\qquad
  rd_r\le\frac27\ (r\ge2),\qquad
  (r-1)d_r\le\frac17\ (r\ge2).
\]
If both sizes are at least two, the bound is at most \(4/7\).
If both are one, it is \(29/42\); otherwise it is at most
\(29/42+1/7=5/6\), as required.

For \(m=2\), first suppose that some off-root component contains both
neighbours.  Its size is at least three, since those neighbours have the
same colour.  Its root term in \eqref{eq:regular-colour-charge} is zero,
and its off-root charge is at most \(rp_r\le1/2\) for \(r\ge3\).
The other family contributes at most two in off-root charges.  Its root
term is zero for one component, and at most \(k'p_{1+k+k'}\le1/6\) for
two sizes \(k\ge k'\).  Hence \(Q_c\le1/2+2+1/6=8/3\).

Otherwise each family has two distinct components.  Index them by their
neighbours, with sizes \(r_1,r_2\) and \(s_1,s_2\), and feasibility
bits \(f_i,f'_i\in\{0,1\}\).  Let \(P=p(R)\) and \(P'=p(S)\).  Before
allocating mass to the root matches, the off-root bound is
\[
  H_0=\sum_{i=1}^2 g_{r_i,s_i}(f_ip_{r_i},f'_ip_{s_i}).
\]
For \(\mathcal C\), the root match costs \((\min_i r_i)P\), removes
size charge \((\max_i r_i)P\), and loses at most \(P\) of matching
saving.  These three contributions explain its correction to \(H_0\).
Adding the corresponding correction for \(\mathcal D\) gives
\begin{equation}\label{eq:two-neighbour-reduction}
 Q_c\le H_0+(1-\max r_i+\min r_i)P
             +(1-\max s_i+\min s_i)P'.
\end{equation}
A correction is nonpositive for unequal sizes.  For two equal sizes
\(k\ge2\), it is at most \(p_{2k+1}\), which is nonzero only for
\(k=2\) and then equals \(1/21\).  For two singletons it is at most
\(p_3=1/6\).

Thus positive corrections occur only within equal-size families.
Singletons also have the largest off-root mass, whereas
\(rp_r\le13/21\) for every nonsingleton.  This suggests classifying
the four sizes by their number \(h\) of singletons.  A pair of
singletons has \(g_{1,1}\le1\); a pair of nonsingletons has
\(g_{r,s}\le rp_r+sp_s\le26/21\) by \eqref{eq:cert-ineq-1}; and every
pair satisfies \(g_{r,s}\le4/3\) by \eqref{eq:cert-ineq-2}.  These bounds
give the following upper bounds in \eqref{eq:two-neighbour-reduction}:
\begin{center}
\begin{tabular}{@{}clccc@{}}
\toprule
\(h\) & arrangement & \(H_0\) & total correction & \(Q_c\)\\
\midrule
0 & any & \(52/21\) & \(2/21\) & \(18/7\)\\
1 & any & \(4/3+26/21\) & \(1/21\) & \(55/21\)\\
2 & one singleton in each family & \(2\cdot4/3\) & \(0\) & \(8/3\)\\
3 & any & \(1+4/3\) & \(1/6\) & \(5/2\)\\
4 & any & \(2\) & \(2\cdot1/6\) & \(7/3\)\\
\bottomrule
\end{tabular}
\end{center}
Every last-column entry is at most \(8/3\).

It remains to handle \(h=2\) with both singletons in one family, say
\(\mathcal C\).  Here the general correction bound is too coarse.
If \(P>0\), both singleton swaps are feasible and their residual masses
are at least \(1-p_3=5/6>p_2\).  Since both opposite masses are at most
\(p_2\), subtracting \(P\) loses no matching saving.  The root cost
\(P\) and removed size charge \(P\) cancel, so \(\mathcal C\) has
zero correction.  The \(\mathcal D\)-correction is nonpositive for
unequal sizes and zero for equal sizes at least three.
Only \((s_1,s_2)=(2,2)\) remains, where
\[
  Q_c\le2g_{1,2}(p_1,p_2)+p_5
       =2(1+p_2)+p_5=\frac83;
\]
otherwise \(H_0\le8/3\) suffices.  This proves the regular-colour bound
entirely by the displayed scalar inequalities.

\paragraph{The two root colours.}
Consider colour \(a\), writing \(m=|N_a|\); colour \(b\) is symmetric.
If \(a\notin L_{Y,v}\), its root move is blocked and its at most \(m\)
off-root components each cost at most \(sp_s\le1\).
If \(a\in L_{Y,v}\) and \(m=0\), its singleton root/holding match
coalesces, giving \(Q_a=-1\).  For \(m\ge2\), let \(r\) be the size of
the \(Y\)-root.  Its root/holding match has drift \(r-2\) and mass at
most \(p_r\), contributing at most \((r-2)p_r\le4/21\); the at most \(m\)
off-root components again cost at most one each, so
\(Q_a\le m+4/21\le11m/6-1\).

In the remaining case \(a\in L_{Y,v}\) and \(m=1\), its root is
\(\{v\}\cup C\), where \(C\) has size \(s\).
If the root swap is infeasible, the corresponding off-root swap, when
present, fails the same list test, giving zero charge.  If the root swap
is feasible and \(C\) also meets \(N_b\), there is no off-root move in
this group and the charge is at most \((s-1)p_{s+1}\le4/21\).
Otherwise the root/off-root match coalesces, leaving charge at most
\[
  -p_{s+1}+s(p_s-p_{s+1})
   =sp_s-(s+1)p_{s+1}\le\frac8{21}<\frac56.
\]
This proves \eqref{eq:cert-bound} for all colours; summing proves the
proposition.
\end{proof}

\paragraph{Arithmetic check.}
\ifpdf
\embedfile[
  filespec=verifier.py,
  mimetype=text/x-python,
  desc={Exact-rational check of the finite inequalities in the proof of
    the conditional hard estimate}
]{\detokenize{verifier.py}}
\fi
The script \nolinkurl{verifier.py}, attached to this PDF and available as
an ancillary file with the arXiv source, independently checks the bounds
\(rp_r\le1\) and \((r-2)p_r\le4/21\), \eqref{eq:cert-ineq-2}, the two
one-neighbour root-colour bounds, and the maximum of the right-hand side
of \eqref{eq:regular-colour-charge} for \(m\le2\).  Run it with
\texttt{python3 verifier.py}.  Using exact rational arithmetic, it
enumerates component partitions, component sizes capped at seven, and
off-root feasibility bits, deriving root feasibility from the list-test
equivalence above; the analytic proof itself needs no enumeration.

\subsection{Averaging and stationary comparison}
\label{sec:activation-average}

\begin{lemma}[Soft path contraction]\label{lem:soft-contraction}
For \(G\in\Gdeg\), a pinning \(\tau\), and \(x\in(0,1]\), put
\(t=1-x\) and \(n=|V^\tau|\ge1\).  Colourings \(X,Y\) at Hamming
distance one admit a coupling \((X',Y')\) of one step of
\(\softK_x^{G,\tau}\) satisfying
\begin{equation}\label{eq:soft-averaged-drift}
  nq\,\E[\Ham(X',Y')-1]\le\frac{11}{6}t\Deg-q.
\end{equation}
In particular, if \(q>(11/6)t\Deg\), the soft kernel contracts Hamming
distance by the factor \(1-\kappa_x/n\), where
\begin{equation}\label{eq:soft-kappa}
  \kappa_x=\frac{q-(11/6)t\Deg}{q}>0.
\end{equation}
\end{lemma}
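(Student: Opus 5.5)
The plan is to average the conditional hard estimate (\Cref{prop:hard-one-step}) over the common-coin activation outcome \(\mathcal A\) of \Cref{lem:coupled-activation-root-local}. Given \(\mathcal A\), couple the two soft steps by running \(\flip_{F_X}\) from \(X\) and \(\flip_{F_Y}\) from \(Y\) with the coupling of \Cref{prop:hard-one-step}. Since the common-coin construction has the correct activation marginals, this yields a coupling of \(\softK_x^{G,\tau}(X,\cdot)\) and \(\softK_x^{G,\tau}(Y,\cdot)\). Taking expectations in \eqref{eq:conditional-hard-estimate} gives
\[
  nq\,\E[\Ham(X',Y')-1]\le\frac{11}{6}\sum_c\E|N_c|-\E\ell,
\]
so it remains to bound \(\sum_c\E|N_c|\) from above and \(\E\ell\) from below in terms of \(t=1-x\).

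For the first sum, every neighbour \(u\in N_{G^\tau}(v)\) with \(X_u=Y_u=c\) lies in \(N_c\) only if its edge \(vu\) is active on at least one side. This requires the coin for \(vu\) to equal one, which has probability \(t\). Hence
\[
  \sum_c\E|N_c|\le t\deg_{G^\tau}(v).
\]

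For \(\ell\), a colour \(c\) fails to lie in \(L_{X,v}\cap L_{Y,v}\) only if some labelled free--pinned edge \(vp\) with \(\tau_p=c\) is active on at least one side. Such an edge is active only when its coin is one, again with probability \(t\). Hence
\[
  \E\ell\ge q-t\,\abs{N_G(v)\cap\Lambda}.
\]
A union bound suffices here. One subtlety is that \(a\) or \(b\) might be excluded even without active pinned edges, since \(a\in L_{X,v}\) but perhaps \(a\notin L_{Y,v}\). However, \(a\notin L_{Y,v}\) still requires an active edge \(vp\) with \(\tau_p=a\) on the \(Y\)-side, so the same count applies.

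Combining the two estimates,
\[
  nq\,\E[\Ham-1]\le\frac{11}{6}t\deg_{G^\tau}(v)-q+t\,\abs{N_G(v)\cap\Lambda}
  \le\frac{11}{6}t\Deg-q,
\]
because \(\deg_{G^\tau}(v)+\abs{N_G(v)\cap\Lambda}\le\Deg\) and \(1\le 11/6\). The contraction statement then follows by rewriting \(\E\Ham(X',Y')\le1-(q-(11/6)t\Deg)/(nq)=1-\kappa_x/n\).

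The main obstacle is the bookkeeping for \(\ell\), which must correctly account for the labelled pinned edges incident to \(v\), including multiple pinned neighbours sharing a colour. The union bound handles this directly.
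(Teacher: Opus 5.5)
Your proposal is correct and follows essentially the paper's proof. You average \Cref{prop:hard-one-step} over the common-coin outcome, use \(\E\sum_c|N_c|\le t\deg_{G^\tau}(v)\), and use \(\ell\ge q-k\), where \(k\) is the number of free--pinned edges at \(v\) active on at least one side, so \(\E k\le t\,|N_G(v)\cap\Lambda|\); you then combine these with \(\deg_{G^\tau}(v)+|N_G(v)\cap\Lambda|\le\Deg\). The paper has equalities in the first two bounds, since every constraint at \(v\) is satisfied on at least one side. The only step you leave implicit is the paper's final appeal to path coupling along shortest paths, which extends the adjacent-state contraction to all pairs.
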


\begin{proof}
At the disagreement vertex \(v\), let \(d_v:=\deg_{G^\tau}(v)\) be its
free degree, \(d_\Lambda:=|N_G(v)\cap\Lambda|\) its number of pinned
neighbours, and \(k\) the number of free--pinned edges at \(v\) that are
active on at least one side.  Each constraint at \(v\) is satisfied on at
least one side, so it is active on at least one side exactly when its
Bernoulli\((t)\) coin is one.  Thus
\[
  \E\sum_c|N_c|=td_v,\qquad \E k=td_\Lambda,
  \qquad \ell\ge q-k.
\]
Averaging \Cref{prop:hard-one-step} gives, since
\(d_v+d_\Lambda=\deg_G(v)\le\Deg\),
\[
  nq\,\E[\Ham(X',Y')-1]
  \le\frac{11}{6}td_v+td_\Lambda-q
  \le\frac{11}{6}t\Deg-q.
\]
Divide by \(nq\); path coupling~\cite{BubleyDyer1997} along shortest
paths extends the adjacent-state bound to all pairs.
\end{proof}

\begin{proposition}[Vigoda-profile coupling for newly pinned laws]
\label{prop:full-interval-hci}
Let \(q,\Deg\) be positive integers with \(\Deg\ge2\).  For every
\(G\in\Gdeg\), every pinning \(\tau\), every \(r\in V^\tau\), all
\(a,b\in\colours\), and every \(x\in(0,1]\) satisfying
\(q>(11/6)(1-x)\Deg\),
\begin{equation}\label{eq:vigoda-master-ci}
  \WHam\!\left(
    \mupin{G}{\tau^{r=a}}{x},\mupin{G}{\tau^{r=b}}{x}
  \right)
  \le\frac{2(1-x)\Deg}{q-(11/6)(1-x)\Deg}.
\end{equation}
If \(q>11\Deg/6\), the bound \eqref{eq:vigoda-master-ci} also holds at
\(x=0\).
\end{proposition}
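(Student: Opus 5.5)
We compare each newly pinned law with the common system obtained by deleting the root, as announced before \Cref{sec:stationary-comparison}. Fix \(x\in(0,1]\) with \(q>(11/6)t\Deg\), where \(t=1-x\).

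\textbf{Setup.} Let \(\tau\) be the given pinning and \(r\in V^\tau\). Put \(W=V^\tau\setminus\{r\}\) and \(m=|W|\); if \(m=0\) there is nothing to prove. Consider the \emph{minus} system \((G-r,\tau)\) on \(W\). Its labelled constraints are those of \((G,\tau^{r=a})\) minus the free--pinned edges from \(r\) to \(W\). Equivalently, \((G,\tau^{r=a})\) is the minus system plus at most \(\deg_{G^\tau}(r)\le\Deg\) free--pinned edges, all pinned to colour \(a\). Edges from \(r\) to \(\Lambda\) are pinned-only in the child and play no role. The same description holds for colour \(b\).

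\textbf{Stationary comparison.} Let \(P\) be the soft kernel \(\softK_x^{G-r,\tau}\) of the minus system, and let \(Q\) be the soft kernel \(\softK_x^{G,\tau^{r=a}}\). By \Cref{lem:soft-stationary}, their stationary laws are \(\mupin{G-r}{\tau}{x}\) and \(\mupin{G}{\tau^{r=a}}{x}\); irreducibility holds since \(p_1=1\). Since \(G-r\in\Gdeg\), \Cref{lem:soft-contraction} applied to the minus system gives contraction \(1-\kappa_x/m\) on adjacent states of the Hamming graph, with \(\kappa_x=(q-(11/6)t\Deg)/q\). By \Cref{lem:boundary-sensitivity}, with \(R_{\mathbf p}=\sup_s sp_s=1\) (indeed \(sp_s\le1\) for Vigoda's profile) and at most \(\Deg\) added edges, every row satisfies
\[
\WHam(P(\sigma,\cdot),Q(\sigma,\cdot))\le\frac{t\Deg}{mq}.
\]
In particular this bound holds on average under \(\nu_Q\). \Cref{lem:stationary-comparison} then gives
\[
\WHam\bigl(\mupin{G-r}{\tau}{x},\mupin{G}{\tau^{r=a}}{x}\bigr)\le\frac{t\Deg/(mq)}{\kappa_x/m}=\frac{t\Deg}{q-(11/6)t\Deg}.
\]
The same bound holds for \(b\), and the triangle inequality yields \eqref{eq:vigoda-master-ci}. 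One point to check is that both kernels live on the same state space \(\colours^W\). They do: soft pinnings forbid nothing, so \(Q\) is defined on all of \(\colours^W\).

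\textbf{Endpoint \(x=0\).} When \(q>11\Deg/6\), the bound at \(x=0\) follows from \Cref{lem:finite-endpoint-closure}. The unnormalized child weights are polynomials in \(x\), hence continuous at \(0\). Their totals at \(0\) are \(\nZpin{G}{\tau^{r=c}}(0)>0\) by \Cref{lem:hard-feas}, which applies since \(q\ge\Deg+1\). Moreover, the right-hand side of \eqref{eq:vigoda-master-ci} is continuous in \(x\), and its limit is \(2\Deg/(q-11\Deg/6)\). The limiting laws are exactly the normalized hard pinned laws, matching \Cref{def:potts-ci}.

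\textbf{Main obstacle.} The key step is verifying that \Cref{lem:boundary-sensitivity} applies with the correct count of added edges and with \(R_{\mathbf p}\le1\). The contraction rate \(\kappa_x/m\) and the per-step perturbation \(t\Deg/(mq)\) both scale as \(1/m\), and it is this matching that makes the bound uniform in graph size.
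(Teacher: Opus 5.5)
Your proof is correct and follows the paper's argument. You compare each child system with the root-deleted middle system \((G-r,\tau)\) via the row bound \((1-x)\Deg/(mq)\), then apply stationary comparison and the triangle inequality, and pass to \(x=0\) by finite-state continuity; the only difference is that you let the middle kernel be the contracting chain where the paper uses the child kernel, which changes nothing because the row bound holds pointwise and both kernels contract at rate \(\kappa_x/m\).
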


\begin{proof}
The cases \(a=b\) or \(m:=|V^\tau\setminus\{r\}|=0\) are immediate.
Otherwise compare each newly pinned system to the common middle system
\((G-r,\tau)\), whose Gibbs law \(\mu_{\rm mid}:=\mupin{G-r}{\tau}{x}\) is
stationary for \(\softK_x^{G-r,\tau}\) (\Cref{lem:soft-stationary}).
The child of colour \(c\) adds at most \(\Deg\) labelled free--pinned
edges to this middle system.  Since \(\sup_s sp_s=1\),
\Cref{lem:boundary-sensitivity} gives, for every common state \(\sigma\),
\begin{equation}\label{eq:soft-child-middle-row}
  \WHam\bigl(\softK_x^{G,\tau^{r=c}}(\sigma,\cdot),
              \softK_x^{G-r,\tau}(\sigma,\cdot)\bigr)
  \le\frac{(1-x)\Deg}{mq}.
\end{equation}
By \Cref{lem:soft-contraction}, applied to the child system with its
\(m\) free vertices, the child kernel contracts by the factor
\(1-\kappa_x/m\).
Apply \Cref{lem:stationary-comparison} with \(\kappa=\kappa_x/m\) and
\(C=(1-x)\Deg/(mq)\), using the stationary laws from
\Cref{lem:soft-stationary}, and then the triangle inequality:
\[
  \WHam(\mupin{G}{\tau^{r=c}}{x},\mu_{\rm mid})
  \le\frac{(1-x)\Deg}{q\kappa_x},
  \qquad
  \WHam(\mupin{G}{\tau^{r=a}}{x},\mupin{G}{\tau^{r=b}}{x})
  \le\frac{2(1-x)\Deg}{q\kappa_x}.
\]
This is \eqref{eq:vigoda-master-ci}.  If \(q>11\Deg/6\), both hard
pinned laws are well defined, with positive total weights, by
\Cref{lem:hard-feas}; their weights are polynomial in \(x\) and the
right-hand side of \eqref{eq:vigoda-master-ci} is continuous at \(x=0\),
so \Cref{lem:finite-endpoint-closure} passes the bound to \(x=0\).
The right-hand side of \eqref{eq:vigoda-master-ci} is largest at
\(x=0\), where it equals \(2/\gapq\); this proves \Cref{thm:strict-ci}.
\end{proof}

\begin{corollary}[Coupling input at \texorpdfstring{\(q=11\Deg/6\)}{q=11Delta/6}]
\label{cor:critical-line-input}
Let \(q,\Deg\) be integers with \(\Deg\ge2\) and \(q=11\Deg/6\).
Then \(\Gdeg\) satisfies coupling independence at \(x=0\) with a finite
constant, and for every \(\delta\in(0,1]\) it satisfies coupling
independence on \([\delta,1]\) with constant \(12/(11\delta)\).
\end{corollary}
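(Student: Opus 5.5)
The corollary has two parts, and I will treat them separately.

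\emph{Positive activities.} Fix \(\delta\in(0,1]\) and \(x\in[\delta,1]\). Since \(q=11\Deg/6\), we have
\[
  q-\tfrac{11}{6}(1-x)\Deg=\tfrac{11}{6}\Deg x>0,
\]
so the hypothesis \(q>(11/6)(1-x)\Deg\) of \Cref{prop:full-interval-hci} holds for every \(x\in(0,1]\). Inserting this into \eqref{eq:vigoda-master-ci} gives
\[
  \WHam\bigl(\mupin{G}{\tau^{r=a}}{x},\mupin{G}{\tau^{r=b}}{x}\bigr)
  \le\frac{2(1-x)\Deg}{(11/6)\Deg x}
  =\frac{12(1-x)}{11x}
  \le\frac{12}{11\delta}.
\]
This holds for every \(G\in\Gdeg\), every pinning, every root, and all colours. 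It is exactly coupling independence on \([\delta,1]\) with constant \(12/(11\delta)\). The pinned laws are well defined because \(x>0\) makes every weight positive.

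\emph{Hard endpoint.} Here the bound above blows up as \(x\downarrow0\), so \Cref{lem:finite-endpoint-closure} cannot be used. Instead I would invoke \cite[Theorem~20]{CFFGZZ2024}, which gives a coupling-independence bound for uniform proper (list) colourings at \(q\ge11\Deg/6\) (or its stated threshold) with a constant depending only on \(\Deg\) and \(q\).

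The main work is translating conventions. Our law \(\mupin{G}{\tau^{r=a}}{0}\) is the normalized hard pinned law. It is uniform over hard-feasible colourings of \(G^{\tau^{r=a}}\), meaning proper colourings of the free graph with lists \(\hardlist{\tau^{r=a}}{u}\). It ignores pinned-only edges, so it is meaningful even when \(a\) is blocked at \(r\) or \(\tau\) is improper.

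So I will recast the instance as a list-colouring instance on the free graph \(G^\tau-r\), with lists obtained from \(\hardlist{\tau}{u}\) by deleting \(a\) (resp.\ \(b\)) at neighbours of \(r\). By \Cref{lem:hard-feas}, each list has size at least the free degree plus one, and \(q\ge\Deg+1\) holds since \(\Deg\ge2\) and \(q=11\Deg/6\) forces \(\Deg\) to be a multiple of \(6\). Changing the root colour then corresponds to a one-vertex pinning change in the list-colouring system on \(G^\tau\) with lists \(\hardlist{\tau}{\cdot}\), possibly with \(r\) assigned a colour outside its list.

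I expect this step to be the main obstacle. The cited theorem is likely stated for pinnings consistent with a feasible configuration, or for list instances satisfying a degree-versus-list-size condition of the form \(|L(u)|\ge\frac{11}{6}\deg(u)\) or \(q\ge\frac{11}{6}\Deg\) with pinned neighbours counted. I must check that our instances, where pinned neighbours remove colours and the degree drops correspondingly, satisfy its hypothesis. For a pinning \(r=a\) with \(a\) blocked at \(r\), I would argue that the child law depends only on the deletions at the neighbours of \(r\). Hence it coincides with the law for the pinning of \(r\) to \(a\) in the instance where the edges from \(r\) to \(\Lambda\) are removed. That modified instance is a subgraph still in \(\Gdeg\), with the root-colour change being a legal pinning change.

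If the cited theorem bounds only pinnings of feasible colours, I would reduce to it by the triangle inequality through the middle system \((G-r,\tau)\). Each child differs from it by at most \(\Deg\) list deletions. Alternatively, I would compare via an intermediate feasible colour \(c\in\hardlist{\tau}{r}\), which exists by \Cref{lem:hard-feas}; this at most doubles the constant. Either way the resulting finite constant depends only on \((q,\Deg)\), proving the \(x=0\) part.
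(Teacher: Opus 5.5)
Your proposal is correct and follows the paper's proof. The bound on \([\delta,1]\) is exactly \Cref{prop:full-interval-hci} specialized to \(q=11\Deg/6\). At \(x=0\) the paper likewise invokes \cite[Theorem~20 and Proposition~22]{CFFGZZ2024}, which apply at equality because integrality forces \(6\mid\Deg\), and then projects the coupling onto \(V^\tau\setminus\{r\}\).

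The only difference is the convention check. The paper settles it by noting that the conditional weights of \cite[Section~2]{CFFGZZ2024} already omit pinned-only edges and allow improper pinnings. Your deletion of the edges from \(r\) to \(\Lambda\) (and, if needed, those inside \(\Lambda\)) is a valid self-contained substitute: it keeps the graph in \(\Gdeg\) and leaves both child laws unchanged. This makes your two further fallbacks unnecessary; the intermediate-colour one would not by itself remove a blocked root colour anyway.
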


\begin{proof}
For \(x\in[\delta,1]\), \Cref{prop:full-interval-hci} gives
\[
  \WHam\!\left(
    \mupin{G}{\tau^{r=a}}{x},
    \mupin{G}{\tau^{r=b}}{x}
  \right)
  \le \frac{12(1-x)}{11x}
  \le \frac{12}{11\delta}.
\]
At \(x=0\), integrality forces \(\Deg\) to be a positive multiple of \(6\),
so \cite[Theorem~20 and Proposition~22]{CFFGZZ2024} apply.
The conditional weights in \cite[Section~2]{CFFGZZ2024} omit pinned-only
edges and allow improper pinnings, agreeing with our normalization.
Apply the coupling bound of \cite[Theorem~20]{CFFGZZ2024} to
\(\tau^{r=a}\) and \(\tau^{r=b}\), then
project onto \(V^\tau\setminus\{r\}\), to obtain the bound required by
\Cref{def:potts-ci}.  As noted in~\cite{CFFGZZ2024}, this hard-colouring
bound is implicit in the coupling arguments
of~\cite{Liu2021Coupling,BlancaEtAl2022Coupling}.  Its list-contraction
input, \cite[Proposition~22]{CFFGZZ2024}, is derived there from the
list-colouring analysis of \cite[Section~6 and Appendix~E]{CDMPP2019},
which treats lists of equal size; we use \cite[Theorem~20]{CFFGZZ2024}
as stated.
\end{proof}

\subsection{Proof of the Potts zero-free theorem}
\label{sec:main-proof}

\begin{proof}[Proof of \Cref{thm:intro-main}]
The hypotheses imply \(q\ge\Deg+2\).  If \(q>11\Deg/6\), then
\Cref{thm:strict-ci} gives one coupling-independence constant on all of
\([0,1]\).  If \(q=11\Deg/6\), then \Cref{cor:critical-line-input}
supplies the two coupling inputs required by \Cref{thm:potts-transfer}.
Applying \Cref{thm:potts-transfer} to \(\Gdeg\) gives the normalized
zero-free statement.  The assertion for the unnormalized pinned
polynomial follows from \eqref{eq:intro-normalization}.
\end{proof}

\section{Zero-freeness for colour fields and Holant problems}
\label{sec:other-applications}

This section applies the separator induction of
\Cref{sec:transfer-principle} to two further families.
\Cref{sec:lee-yang} perturbs the colour fields of proper colourings
around the uniform field; only coupling independence at \(x=0\) is
needed.  \Cref{sec:holant-consequence} treats log-concave Boolean Holant
problems, where pinned instances carry residual signatures and the
coupling input comes from~\cite{ChenGu2023}.

\subsection{Colour fields}
\label{sec:lee-yang}

\subsubsection{Vertex-colour fields}

In the spirit of the Lee--Yang theorem~\cite{LeeYang1952}, we now
consider zeros in complex colour fields.
We use the same letter \(Z\) for both partition functions: an argument
\(z\) denotes the edge-activity polynomial, while an argument \(\lambda\)
denotes the field polynomial defined below.
For \(G=(V,E)\in\Gdeg\), a pinning \(\tau\colon\Lambda\to\colours\), and field
variables \(\lambda=(\lambda_{u,c})_{(u,c)\in V\times\colours}\),
define the normalized pinned field
polynomial
\begin{equation}\label{eq:field-Z}
  \nZpin{G}{\tau}(\lambda)
  :=\sum_{\varphi\colon V^\tau\to\colours}
     \prod_{u\in V^\tau}\lambda_{u,\varphi_u}
     \prod_{\substack{uv\in E\\\{u,v\}\not\subseteq\Lambda}}
       \one{(\tau\cup\varphi)_u\ne(\tau\cup\varphi)_v}.
\end{equation}
At the all-ones vector \(\lambda=\mathbf1\) on \(V\times\colours\), this
equals the hard-colouring value \(\nZpin{G}{\tau}(z)|_{z=0}\) of
\eqref{eq:potts-Z}, which is positive when \(q\ge\Deg+1\)
(\Cref{lem:hard-feas}).  The unpinned field
polynomial is \(\Zpart_G(\lambda)=\nZpin{G}{\varnothing}(\lambda)\).
The pinned field polynomial is
\[
  \Zpin{G}{\tau}(\lambda)
  :=\sum_{\substack{\sigma\colon V\to\colours\ \mathrm{proper}\\
                    \sigma|_\Lambda=\tau}}
       \prod_{v\in V}\lambda_{v,\sigma_v}.
\]
If \(\tau\) is proper on \(G[\Lambda]\), then
\begin{equation}\label{eq:full-field-normalization}
  \Zpin{G}{\tau}(\lambda)
  =\left(\prod_{v\in\Lambda}\lambda_{v,\tau_v}\right)
   \nZpin{G}{\tau}(\lambda|_{V^\tau\times\colours});
\end{equation}
otherwise \(\Zpin{G}{\tau}(\lambda)\equiv0\).

For \(v\in V^\tau\), conditioning on its colour gives
\begin{equation}\label{eq:field-recursion}
  \nZpin{G}{\tau}(\lambda)
  =\sum_{a:\bcount_v^\tau(a)=0}
     \lambda_{v,a}\,
     \nZpin{G}{\tau^{v=a}}
       (\lambda|_{(V^\tau\setminus\{v\})\times\colours}).
\end{equation}
The normalized child \(\nZpin{G}{\tau^{v=a}}\) remains defined when \(a\)
is blocked at \(v\);
such a colour simply does not occur in \eqref{eq:field-recursion}.

The following transfer needs coupling independence only at \(x=0\).

\begin{proposition}[Field transfer]
\label{prop:field-transfer}
Fix integers \(q,\Deg\) with \(\Deg\ge2\) and \(q\ge\Deg+1\), let
\(C_0\in[0,\infty)\), and let \(\mathcal G\subseteq\Gdeg\) be closed
under taking induced subgraphs and satisfy \(C_0\)-coupling independence
at \(x=0\).  There is \(\theta=\theta(q,\Deg,C_0)\in(0,1/2]\) such that,
for every \(G\in\mathcal G\) and every pinning \(\tau\),
\[
  \nZpin{G}{\tau}(\lambda)\ne0
  \qquad\text{whenever }|\lambda_{u,c}-1|<\theta
  \quad(u\in V^\tau,\ c\in\colours).
\]
\end{proposition}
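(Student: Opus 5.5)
We follow the separator induction of \Cref{lem:potts-positive-response}, now at the real base point \(\lambda=\mathbf1\), where all weights are hard-colouring weights and are positive by \Cref{lem:hard-feas}. Fix \(L=\lceil64C_0\rceil\) (after assuming \(C_0\ge1\)), and define \(s\), \(N\) and \(\alpha\le\min\{\pi/16,1/(16s)\}\) as before. By induction on the number of free vertices, we prove two statements simultaneously for all \(\lambda\) in the polydisc \(|\lambda_{u,c}-1|<\theta\). The first is \(\nZpin{G}{\tau}(\lambda)\ne0\). The second is the response bound
\[
  \left|\Log\frac{\nZpin{G}{\tau^{v=a}}(\lambda)/\nZpin{G}{\tau^{v=a}}(\mathbf1)}
                 {\nZpin{G}{\tau^{v=b}}(\lambda)/\nZpin{G}{\tau^{v=b}}(\mathbf1)}\right|\le\alpha
\]
for every free \(v\) and all colours \(a,b\), including blocked ones, since normalized children stay defined. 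The logarithm is continued along the segment from \(\mathbf1\) to \(\lambda\). The polydisc is convex, so the branch-uniqueness arguments carry over verbatim, with the segment \(t\mapsto\mathbf1+t(\lambda-\mathbf1)\) in place of continuation in \(z\).

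\textbf{Bounded components.} If the free component of \(v\) has at most \(N\) vertices, the other components cancel. Each remaining factor \(F\) is a multilinear polynomial in at most \(Nq\) variables, with \(F(\mathbf1)\ge1\) and at most \(q^N\) monomials, each of modulus at most \((1+\theta)^N\). Hence \(|F(\lambda)-F(\mathbf1)|\le q^N((1+\theta)^N-1)\le K\theta\), which gives the response bound once \(\theta\) is small.

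\textbf{Large components.} Otherwise, \eqref{eq:potts-shell-sum} together with coupling independence at \(x=0\) supplies a shell \(S\) with \(\WHam(\nu_a,\nu_b)\le C_0/L\). The separator identity \eqref{eq:potts-separator-defined} holds with field-weighted interior polynomials \(D_{c,\xi}(\lambda)\). These polynomials depend only on the variables in \(U\cup S\), with the shell field factors \(\prod_{w\in S}\lambda_{w,\xi_w}\) put into \(D\). The exterior \(\nZpin{G'}{\tau_\xi}\) depends only on the exterior variables. The interior weights at \(\mathbf1\) are nonnegative integers, and they may vanish. This is the same situation as at the hard endpoint, so we use the additive-error scheme of \Cref{prop:ep-disc} rather than a multiplicative \(\eta\). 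Write \(R_c=M_c+E_c\), where \(M_c=\E_{\nu_c}[e^h]\) and \(h_\xi\) is the exterior log-response. Induction on \((G',\tau\cup\xi|_{S\setminus\{w\}})\) gives \(|h_\xi-h_{\xi'}|\le\alpha\Ham(\xi,\xi')\le1/16\). Every \(D_{c,\xi}\) is a sum of at most \(q^{N+s}\) monomials. Each monomial has the same \(\{0,1\}\) coefficient at \(\mathbf1\) and at \(\lambda\), and is multiplied by \((1+O(\theta))\). So \(|D_{c,\xi}(\lambda)-D_{c,\xi}(\mathbf1)|\le K\theta\,\max(1,D_{c,\xi}(\mathbf1))\). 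In fact this is better than the edge-activity case, because no new monomials appear: the error is relative on each weight. Consequently \(E_c\) is bounded by \(K\theta\) times \(\E_{\nu_c}|e^{h}|\)-type sums. Alternatively, one can run exactly the comparison of \Cref{prop:ep-disc}, using \Cref{lem:potts-hard-comparison} to compare \(\nZpin{G'}{\tau_\xi}(\mathbf1)\) across shell colourings. Either way, \(|E_c/M_c|\le K\theta\). \Cref{lem:cavg} with \(\eta=\eta'=0\) gives \(|\Log(M_a/M_b)|\le2\alpha C_0/L\le\alpha/32\). Choosing \(\theta\) with \(4K\theta<\alpha/2\) then closes the response bound.

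\textbf{Parent nonvanishing.} For \eqref{eq:field-recursion}, divide by \(R_{a_0}\) for an unblocked \(a_0\). Each term becomes \(\lambda_{v,a}\nZpin{G}{\tau^{v=a}}(\mathbf1)\,R_a/R_{a_0}\), which is a positive number times a complex factor of argument at most \(\alpha+\arcsin\theta<\pi/2\). The sum therefore lies in an open half-plane and is nonzero. This step is simpler than in \Cref{prop:ep-disc}, since blocked colours do not appear at all.

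\textbf{Main obstacle.} The delicate step is the uniform control of \(E_c/M_c\). The constants must depend only on \((q,\Deg,C_0)\), even though shell colourings \(\xi\) with \(\nu_c(\xi)=0\) but \(D_{c,\xi}(\mathbf1)>0\) cannot occur. I would handle this by noting that each interior monomial perturbs multiplicatively, so \(D_{c,\xi}(\lambda)=D_{c,\xi}(\mathbf1)(1+O(\theta))\) whenever \(D_{c,\xi}(\mathbf1)>0\), and \(D_{c,\xi}(\lambda)=0\) otherwise. In that case one could even take \(\eta_c=\Log(D_{c,\xi}(\lambda)/D_{c,\xi}(\mathbf1))\) on the support of \(\nu_c\) and apply \Cref{lem:cavg} directly with \(\gamma=K\theta\), restricting \(h\) and the laws to the union of the supports. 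This bypasses the hard comparison lemma entirely.
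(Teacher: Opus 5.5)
Your proposal is correct and is essentially the paper's argument. It uses the same separator induction at base field $\mathbf 1$, \Cref{lem:cavg} for the shell comparison, and the half-plane argument for the parent, and the paper takes the option in your final paragraph: multiplicative $\eta_c$ with $\gamma=\alpha/8$, and $\eta_c:=0$ on the identically zero $D_{c,\xi}$, which carry no $\nu_c$-mass. Your additive-error main route also closes once you use the relative bound $|D_{c,\xi}(\lambda)-D_{c,\xi}(\mathbf 1)|\le K\theta\,D_{c,\xi}(\mathbf 1)$; however, its motivation, ``the same situation as at the hard endpoint'', is misplaced, since field perturbations create no new monomials, so neither the additive splitting nor \Cref{lem:potts-hard-comparison} is needed.
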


\begin{proof}
We use the separator induction of \Cref{lem:potts-positive-response},
with base field \(\mathbf1\) and coupling constant \(C_0\).  Choose
\(L,s,N,\alpha\) as in that proof.  Up to relabelling, only finitely
many local polynomials occur below (bounded components and the interior
polynomials \(D_{c,\xi}\)), each in at most \((N+s)q\) field variables;
fix \(\theta\in(0,1/2]\) so small that the bounds on them used below
hold on the polydisc \(\max_{u,c}|\lambda_{u,c}-1|<\theta\).  Induct on
the number of free vertices, proving nonvanishing and the response bound
\begin{equation}\label{eq:field-response}
  \left|\Log
  \frac{\nZpin{G}{\tau^{v=a}}(\lambda)/
        \nZpin{G}{\tau^{v=a}}(\mathbf1)}
       {\nZpin{G}{\tau^{v=b}}(\lambda)/
        \nZpin{G}{\tau^{v=b}}(\mathbf1)}\right|\le\alpha.
\end{equation}
All normalized pinned base values are positive by \Cref{lem:hard-feas},
and logarithms continue from \(\lambda=\mathbf1\) along the segment to
\(\lambda\), which stays in the polydisc.  Bounded components are handled
by continuity over finitely many local instances.

For a larger component, choose a shell \(S\) with
\(\WHam(\nu_a,\nu_b)\le C_0/L\).  Include the interior and shell field
weights in the local polynomial \(D_{c,\xi}(\lambda)\), so that
\[
  \nZpin{G}{\tau^{v=c}}(\lambda)
  =\sum_\xi D_{c,\xi}(\lambda)\nZpin{G'}{\tau_\xi}(\lambda).
\]
Field perturbations preserve the hard constraints.  Thus each local
polynomial is either identically zero or has positive value at
\(\mathbf1\).  By the choice of \(\theta\), every nonzero local
polynomial satisfies
\[
  \eta_c(\xi):=\Log\frac{D_{c,\xi}(\lambda)}{D_{c,\xi}(\mathbf1)},
  \qquad |\eta_c(\xi)|\le\alpha/8.
\]
Set \(\eta_c(\xi)=0\) when \(D_{c,\xi}\equiv0\); such shell assignments
\(\xi\) have zero \(\nu_c\)-mass.  The exterior response
\[
  h_\xi:=\Log\frac{\nZpin{G'}{\tau_\xi}(\lambda)}
                    {\nZpin{G'}{\tau_\xi}(\mathbf1)}
\]
is \(\alpha\)-Lipschitz in Hamming distance by induction, applied as in
the proof of \Cref{lem:potts-positive-response} to the instance obtained
by unpinning one shell vertex, and has oscillation at most
\(s\alpha\le1/16\).  The separator identity becomes
\[
  R_c:=\frac{\nZpin{G}{\tau^{v=c}}(\lambda)}
             {\nZpin{G}{\tau^{v=c}}(\mathbf1)}
      =\E_{\nu_c}[e^{h+\eta_c}].
\]
The hypotheses of \Cref{lem:cavg}, with \(A=\alpha\) and
\(\gamma=\alpha/8\), hold along the segment from \(\mathbf1\) to
\(\lambda\), so, as in the proof of \Cref{lem:potts-positive-response},
the branch in \Cref{lem:cavg} is the logarithm continued from
\(\mathbf1\).  The lemma gives
\(|\Log(R_a/R_b)|\le2\alpha C_0/L+\alpha/2<\alpha\), proving
\eqref{eq:field-response}.

Finally, take a colour \(a_0\in\hardlist{\tau}{v}\), which exists by
\Cref{lem:hard-feas}.  After division by \(R_{a_0}\), the parent
recursion is
\[
  \frac{\nZpin{G}{\tau}(\lambda)}{R_{a_0}}
  =\sum_{a:\bcount_v^\tau(a)=0}
      \nZpin{G}{\tau^{v=a}}(\mathbf1)
      \lambda_{v,a}\frac{R_a}{R_{a_0}}.
\]
Each summand has positive real part: its first factor is positive,
\(|\arg\lambda_{v,a}|<\pi/6\) because \(|\lambda_{v,a}-1|<1/2\), and
\(|\Im\Log(R_a/R_{a_0})|\le\alpha\), so its argument has absolute value
less than \(\pi/6+\alpha<\pi/2\).  This closes the induction.
\end{proof}

\begin{theorem}[Lee--Yang polydisc]
\label{thm:lee-yang}
Let \(q,\Deg\) be integers with \(\Deg\ge2\).  Each of the following
parameter assumptions yields the conclusion below:
\begin{enumerate}[label=\textup{(\roman*)},leftmargin=2.2em]
  \item \(q\ge(11/6-1/84000)\Deg\);
  \item \(\Deg\ge125\) and \(q\ge1.809\Deg\);
  \item \(\Deg\ge3\) and \(q\ge\Deg+3\).
\end{enumerate}
In cases~\textup{(i)} and~\textup{(ii)}, put \(\mathcal G=\Gdeg\).
In case~\textup{(iii)}, there is an integer
\(g_{\rm LY}=g_{\rm LY}(q,\Deg)\ge3\) for which we put
\(\mathcal G=\Gdegg{g_{\rm LY}}\).  In each case there is
\(\theta=\theta(q,\Deg)\in(0,1/2]\) such that, for every
\(G\in\mathcal G\) and every pinning \(\tau\),
\[
  \nZpin{G}{\tau}(\lambda)\ne0
  \qquad\text{whenever }|\lambda_{u,c}-1|<\theta
  \quad(u\in V^\tau,\ c\in\colours).
\]
For a proper pinning, the corresponding statement for
\(\Zpin{G}{\tau}(\lambda)\) follows from
\eqref{eq:full-field-normalization} provided
\(\lambda_{v,\tau_v}\ne0\) for every \(v\in\Lambda\).  In particular, it
holds when the same polydisc condition is imposed at every vertex and
colour.
\end{theorem}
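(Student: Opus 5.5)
The plan is to reduce the theorem to \Cref{prop:field-transfer}. That proposition needs only three inputs: \(q\ge\Deg+1\), closure of \(\mathcal G\) under induced subgraphs, and \(C_0\)-coupling independence at \(x=0\) in the sense of \Cref{def:potts-ci}. Both \(\Gdeg\) and \(\Gdegg{g}\) are closed under induced subgraphs. Each of (i)--(iii) implies \(q\ge\Deg+1\): in (i) because \((11/6-1/84000)\Deg>\Deg+1\) for \(\Deg\ge2\) after rounding up to an integer, in (ii) trivially, and in (iii) by assumption. So the whole proof reduces to supplying a finite constant \(C_0(q,\Deg)\) for hard pinned colourings in each case. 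The polydisc then comes with \(\theta=\theta(q,\Deg,C_0)\).

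The final sentence about \(\Zpin{G}{\tau}(\lambda)\) follows from \eqref{eq:full-field-normalization}. The prefactor \(\prod_{v\in\Lambda}\lambda_{v,\tau_v}\) is nonzero under the stated hypothesis. When the polydisc condition with \(\theta\le1/2\) is imposed at every vertex, each \(|\lambda_{v,\tau_v}|\ge1/2\), so the prefactor is nonzero automatically.

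For the coupling inputs I would argue case by case.

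\emph{Case (i).} I would invoke the coupling-independence theorem of~\cite{CFFGZZ2024} (Theorem~20 with its list-contraction input Proposition~22). This rests on the \((11/6-\varepsilon_0)\Deg\) list-colouring contraction of~\cite{CDMPP2019}, with \(\varepsilon_0=1/84000\).

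\emph{Case (ii).} I would invoke the analogous bound derived from the Carlson--Vigoda chain~\cite{CarlsonVigoda2024}, as collected in the companion paper~\cite{ShaoShi2026FurtherPotts} and in \Cref{app:additional-potts}.

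\emph{Case (iii).} I would use a large-girth hard-colouring coupling for \(q\ge\Deg+3\) from the same companion source. This input requires girth at least some \(g_{\rm LY}(q,\Deg)\), which fixes the integer in the statement.

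In each case the conventions must be checked exactly as in \Cref{cor:critical-line-input}. The cited conditional laws must omit pinned-only edges and allow improper pinnings, so that they agree with our normalized hard laws \(\mupin{G}{\tau^{r=a}}{0}\). The bound is then projected onto \(V^\tau\setminus\{r\}\). One must also confirm that the cited results allow lists of unequal sizes induced by arbitrary pinnings.

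The main obstacle is this last compatibility step, especially in cases (i) and (iii). The external contraction analyses are often stated for lists of equal size, or for unpinned graphs of bounded girth. Pinning can shorten lists unevenly, and a pinning in a large-girth graph must still produce an instance within the scope of the cited theorem. My first attempt would be to use the formulation in~\cite{CFFGZZ2024} (Theorem~20) as stated, which handles arbitrary pinnings. I would also note that pinned vertices are deleted from the free graph, so the free graph remains of girth at least \(g_{\rm LY}\) and still satisfies \(|\hardlist{\tau}{u}|\ge\deg_{G^\tau}(u)+1\), with slack \(q-\Deg\). If a cited bound is stated only with a margin of the form \(|L(u)|\ge\deg(u)+\beta\), this slack is what I would rely on to meet it.
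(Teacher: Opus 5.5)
Your proof is correct, and its overall structure matches the paper's. Both reduce to \Cref{prop:field-transfer}, check $q\ge\Deg+1$, and verify the conventions: pinned-only edges are omitted, improper pinnings are allowed so a blocked root colour is harmless, the bound is projected onto $V^\tau\setminus\{r\}$, free lists have size at least $\deg_{G^\tau}(u)+q-\Deg$, and the girth of $G^\tau$ is at least that of $G$. Your treatment of $\Zpin{G}{\tau}(\lambda)$ also agrees with the paper's.

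The genuine difference is in case~(i). You cite \cite[Theorem~20]{CFFGZZ2024} throughout the $(11/6-\varepsilon_0)\Deg$ regime, so you rely on the CDMPP list-colouring contraction for every $\Deg$. The paper observes that this improvement is never needed. The hypothesis reads $6q\ge11\Deg-\Deg/14000$.
\begin{itemize}
\item For $\Deg<125$, integrality gives $q\ge11\Deg/6$. This is handled by \Cref{thm:strict-ci} when strict and by \Cref{cor:critical-line-input} at equality.
\item For $\Deg\ge125$, the hypothesis implies $q>1.809\Deg$, which is case~(ii).
\end{itemize}

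This is exactly what defuses the obstacle you flagged: the CDMPP analysis treats lists of equal size. After the reduction, the paper invokes \cite[Theorem~20]{CFFGZZ2024} in its CDMPP-based form only at the twenty pairs $(q,\Deg)=(11j,6j)$, $j\le20$. The rest of case~(i) comes from the paper's own soft-flip coupling, which is valid for all $\Deg\ge2$. Your route is shorter, but it carries the equal-list caveat, and any side hypotheses of the cited theorem, across the whole case.

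In cases~(ii) and~(iii) the inputs agree in substance. At $x=0$ the Carlson--Vigoda bound is \cite[Theorem~20]{CFFGZZ2024}. For large girth, the paper goes directly to \cite[Theorem~5.10 and Lemmas~5.13,~5.14]{CLMM2023}, as recorded in \cite[Section~4]{CFFGZZ2024}. This yields the explicit $C_0=2\Deg^R$ and $g_{\rm LY}=\max\{g,3\}$. It is a more direct source than the companion paper, whose appendix row attributes its $x=0$ input to this same argument.
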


\begin{proof}
Each case implies \(q\ge\Deg+1\), and in each case \(\mathcal G\) is
closed under taking induced subgraphs.  By \Cref{prop:field-transfer},
it suffices to show that \(\mathcal G\) satisfies coupling independence
at \(x=0\) with a finite constant \(C_0=C_0(q,\Deg)\).  At \(x=0\), the
two laws in \Cref{def:potts-ci} are uniform distributions on list
colourings: in the instance on \(G^\tau\) in which \(r\) has list \(\colours\) and every
other free vertex \(u\) has list \(\hardlist{\tau}{u}\), pin \(r\) to
\(a\) and to \(b\).  This is the convention
of~\cite[Section~2]{CFFGZZ2024}, which allows improper pinnings (see the
proof of \Cref{cor:critical-line-input}).  As in the proof of
\Cref{lem:hard-feas}, every vertex \(u\) of this instance has a list of
size at least \(\deg_{G^\tau}(u)+q-\Deg\); in particular every colour
of \(\colours\) is feasible at \(r\).

In case~\textup{(ii)}, the input is \cite[Theorem~20]{CFFGZZ2024},
based on the flip contraction of~\cite{CarlsonVigoda2024}; projecting
its coupling onto \(V^\tau\setminus\{r\}\) gives the bound
\eqref{eq:potts-ci} at \(x=0\).  Case~\textup{(i)} reduces to the other inputs by
integrality.  If \(2\le\Deg<125\), then
\(6q\ge11\Deg-\Deg/14000>11\Deg-1\); since \(6q\) and \(11\Deg\) are
integers, this implies \(q\ge11\Deg/6\).  Apply \Cref{thm:strict-ci} at
\(x=0\) when the inequality is strict, and
\Cref{cor:critical-line-input} when equality holds.  If \(\Deg\ge125\),
then \(q\ge(11/6-1/84000)\Deg>1.809\Deg\), so case~\textup{(ii)}
applies.  The constant \(1/84000\) is the value of
\(\eps_0\) fixed in the proof of~\cite[Theorem~5.1]{CDMPP2019}.

In case~\textup{(iii)}, \cite[Theorem~5.10 and Lemmas~5.13
and~5.14]{CLMM2023} provide an integer \(g=g(q,\Deg)\) and an
influence-decay radius \(R=R(q,\Deg)\) with the
following property: for list colourings of a graph of maximum degree at
most \(\Deg\) and girth at least \(g\), with lists of size at most \(q\)
and at least the degree plus \(3\), pinning one vertex to two different
colours of its list changes the law of the remaining vertices by at most
\(2\Deg^{R}\) in the distance \(\WHam\).  This is recorded as
coupling independence in~\cite[Section~4]{CFFGZZ2024}.  Put
\(g_{\rm LY}:=\max\{g,3\}\).  For \(G\in\Gdegg{g_{\rm LY}}\), the graph
\(G^\tau\) has maximum degree at most \(\Deg\) and girth at least
\(g\), and, since \(q-\Deg\ge3\), the instance above satisfies these
hypotheses.  Hence \(C_0:=2\Deg^{R}\) works.
\end{proof}

\begin{remark}[Related results]
\label{rem:lee-yang-related}
For colour-class fields \(\lambda_{u,c}=\lambda_c\), Kuchukova, Perkins,
and Povill showed that \(\Zpart_G(\lambda)\ne0\) whenever \(q\ge2\Deg\)
and \(\max_c|\lambda_c-1|\) is at most an explicit radius of order
\(\Deg^{-4}\)~\cite[Theorem~1.14 and
Corollary~1.15]{KuchukovaPerkinsPovill2026}.  \Cref{thm:lee-yang} allows
vertex-dependent fields and applies below \(2\Deg\), but its radius is
not explicit.  For hypergraph colourings in the local lemma regime, Liu
and Yu obtained a Lee--Yang zero-free strip around
\([0,1]\)~\cite{LiuYu2026}.
\end{remark}

\subsubsection{Edge-colour fields}

For a finite simple graph \(G=(V,E)\), let \(L(G)\) be its line graph,
whose vertex set is \(E\).  Proper edge colourings of \(G\) are proper
vertex colourings of \(L(G)\).  An edge-colour pinning is a map
\(\tau\colon\Lambda_E\to\colours\) for some \(\Lambda_E\subseteq E\); it is
proper when adjacent pinned edges have distinct colours.  With field
variables \(\lambda=(\lambda_{e,c})_{e\in E,c\in\colours}\), define
\begin{equation}\label{eq:edge-field-Z}
  \widehat Z_G^{\mathrm{edge},\tau}(\lambda)
  :=\nZpin{L(G)}{\tau}(\lambda),
  \qquad
  Z_G^{\mathrm{edge},\tau}(\lambda)
  :=\Zpin{L(G)}{\tau}(\lambda).
\end{equation}
We write \(Z_G^{\mathrm{edge}}:=Z_G^{\mathrm{edge},\varnothing}\).  If
\(\tau\) is proper, then
\begin{equation}\label{eq:edge-full-field-normalization}
  Z_G^{\mathrm{edge},\tau}(\lambda)
  =\left(\prod_{e\in\Lambda_E}\lambda_{e,\tau_e}\right)
    \widehat Z_G^{\mathrm{edge},\tau}(\lambda);
\end{equation}
otherwise \(Z_G^{\mathrm{edge},\tau}(\lambda)\equiv0\).

\begin{corollary}[Edge-colour Lee--Yang polydisc]
\label{cor:edge-lee-yang}
Let \(q,\Deg\) be integers with \(\Deg\ge2\) and \(q\ge3\Deg\).
There exists \(\theta=\theta(q,\Deg)\in(0,1/2]\) such that, for every
finite simple graph \(G=(V,E)\) of maximum degree at most \(\Deg\) and
every edge-colour pinning \(\tau\),
\[
  \widehat Z_G^{\mathrm{edge},\tau}(\lambda)\ne0
  \qquad\text{whenever}\qquad
  \abs{\lambda_{e,c}-1}<\theta
  \quad(e\in E\setminus\Lambda_E,\ c\in\colours).
\]
In particular, \(Z_G^{\mathrm{edge}}(\lambda)\ne0\) in the same
polydisc.  For a proper pinning,
\(Z_G^{\mathrm{edge},\tau}(\lambda)\ne0\) when the same condition is
imposed at every edge-colour pair.
\end{corollary}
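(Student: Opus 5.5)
The plan is to reduce the statement to \Cref{prop:field-transfer} applied to the class of line graphs. Let \(\mathcal G_L\) be the class of all graphs \(H\) that are induced subgraphs of line graphs \(L(G)\) with \(G\) finite, simple and of maximum degree at most \(\Deg\).

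\textbf{Properties of \(\mathcal G_L\).} An induced subgraph of \(L(G)\) on an edge set \(F\subseteq E\) is exactly \(L(G[F])\), where \(G[F]\) is the subgraph with edge set \(F\). Hence \(\mathcal G_L\) is simply the class of line graphs of such \(G\), and it is closed under taking induced subgraphs. An edge \(uv\) of \(G\) meets at most \(2(\Deg-1)\) other edges, so \(\mathcal G_L\subseteq\mathcal G_{2\Deg-2}\), the analogue of \(\Gdeg\) with degree bound \(\Deg_L:=2\Deg-2\). With this bound, \(q\ge3\Deg\ge\Deg_L+1\), and \(\Deg_L\ge2\) since \(\Deg\ge2\).

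\textbf{Applying the field transfer.} By \eqref{eq:edge-field-Z}, \(\widehat Z_G^{\mathrm{edge},\tau}\) is the vertex-field polynomial \(\nZpin{L(G)}{\tau}\). It therefore suffices to verify the hypothesis of \Cref{prop:field-transfer} for \(\mathcal G_L\) with degree bound \(\Deg_L\): coupling independence at \(x=0\) with a finite constant \(C_0=C_0(q,\Deg)\). Once this holds, the proposition yields \(\theta\). The unpinned statement and the proper-pinning statement then follow from \eqref{eq:edge-full-field-normalization}, because \(|\lambda_{e,c}-1|<\theta\le1/2\) forces \(\lambda_{e,\tau_e}\ne0\).

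\textbf{Main obstacle: coupling independence at \(x=0\).} As in the proof of \Cref{thm:lee-yang}, the two pinned laws are uniform list edge-colourings. The root edge \(r\) is fixed to \(a\) or \(b\), and every other free edge \(e\) has list \(\hardlist{\tau}{e}\), with \(|\hardlist{\tau}{e}|\ge\deg_{L(G)^\tau}(e)+q-\Deg_L\).

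The first thing I would try is to use the vertex-colouring machinery already established. Since \(q\ge3\Deg>\tfrac{11}{6}(2\Deg-2)\) holds for every \(\Deg\ge2\), \Cref{thm:strict-ci} at \(x=0\) applies directly to every graph in \(\mathcal G_{\Deg_L}\), and in particular to the line graphs in \(\mathcal G_L\). It gives \(\WHam\le2/\gamma_0\) with \(\gamma_0=q/\Deg_L-11/6>0\).

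\Cref{thm:strict-ci} was stated for \(\Gdeg\) with an arbitrary \(\Deg\), so it can be invoked with \(\Deg_L\) in place of \(\Deg\) without change. Its pinning conventions match \Cref{def:potts-ci}, which is exactly what \Cref{prop:field-transfer} requires. Thus no edge-colouring-specific coupling, such as Glauber dynamics on line graphs with \(q\ge3\Deg\)-type bounds, should be needed. That route would be the fallback if the degree arithmetic failed, and I expect it not to.
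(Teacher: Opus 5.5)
Your reduction to \Cref{prop:field-transfer} on the class of line graphs, with degree bound $2\Deg-2$, is the same as the paper's, and your final deductions via \eqref{eq:edge-full-field-normalization} are fine. The gap is in the coupling input.

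The inequality $3\Deg>\tfrac{11}{6}(2\Deg-2)$ does not hold for every $\Deg\ge2$. The right-hand side is $\tfrac{11}{3}(\Deg-1)$, and $3\Deg>\tfrac{11}{3}(\Deg-1)$ is equivalent to $2\Deg<11$, i.e.\ $\Deg\le5$. For example, at $\Deg=6$ and $q=18$ one has $\tfrac{11}{6}(2\Deg-2)=\tfrac{55}{3}>18$, and for large $\Deg$ the threshold is about $3.67\Deg$.

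So \Cref{thm:strict-ci}, applied with $2\Deg-2$ in place of $\Deg$, covers only $\Deg\le5$. It leaves the whole range $3\Deg\le q\le\tfrac{11}{3}(\Deg-1)$ open for every $\Deg\ge6$. This cannot be fixed by sharpening constants. Any bound that treats $L(G)$ merely as a graph of maximum degree $2\Deg-2$ and needs $q\ge\beta(2\Deg-2)$ with $\beta>3/2$ fails to reach $q=3\Deg$ for large $\Deg$. Even the $1.809$ threshold has $\beta>3/2$.

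The missing idea is a coupling-independence bound specific to list edge colourings, one that uses the line-graph structure. The paper works with the list-edge-colouring instance in which the root $r$ has list $\colours$ and every other free edge $e$ has $\colours$ minus the colours of its pinned neighbours. Each list then has size at least $\deg(e)+q-(2\Deg-2)\ge\deg(e)+\Deg+2$, so the instance is $((1+\gamma)\Deg+1)$-extra in the sense of \cite{CWZZ2025}, with $\gamma=1/\Deg$. \cite[Theorem~15]{CWZZ2025} then bounds $\WHam$ between the two pinned laws by $1+2\Deg$ on all free edges. Discarding $r$, where every coupling disagrees, gives $C_0=2\Deg$.

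The edge-colouring Glauber fallback you mention would have to supply exactly such an estimate, and your proposal does not provide it. With this input in place of \Cref{thm:strict-ci}, the rest of your argument goes through.
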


\begin{proof}
Put \(\Deg':=2\Deg-2\) and let \(\mathcal H_\Deg\) be the class of line
graphs \(L(G)\) with \(\Delta(G)\le\Deg\).  Every graph in
\(\mathcal H_\Deg\) has maximum degree at most \(\Deg'\), the class is
closed under taking induced subgraphs, and \(q\ge3\Deg\ge\Deg'+1\).  By
\Cref{prop:field-transfer}, applied with degree bound \(\Deg'\), it
suffices to show that \(\mathcal H_\Deg\) satisfies coupling
independence at \(x=0\) with constant \(C_0=2\Deg\).

Fix an edge-colour pinning \(\tau\) of \(G\), a free edge \(r\), and
colours \(a,b\).  As in the proof of \Cref{thm:lee-yang}, the two laws in
\Cref{def:potts-ci} arise from the list-edge-colouring instance on the
free edges of \(G\) in which \(r\) has list \(\colours\) and every other
free edge \(e\) has list \(\colours\) minus the colours of its pinned
neighbours, by pinning \(r\) to \(a\) and to \(b\).  Let \(\deg(e)\) be
the number of free edges adjacent to \(e\).  At most \(2\Deg-2\) edges
of \(G\) are adjacent to \(e\), and each pinned one removes at most one
colour from the list of \(e\), so every list, including that of \(r\),
has size at least \(\deg(e)+q-(2\Deg-2)\ge\deg(e)+\Deg+2\).  Hence the
instance is \(((1+\gamma)\Deg+1)\)-extra in the sense
of~\cite{CWZZ2025}, with \(\gamma=1/\Deg\), and
\cite[Theorem~15]{CWZZ2025} bounds \(\WHam\) between the two pinned laws,
on all free edges, by \(1+2/\gamma=1+2\Deg\).
Every coupling of the two laws disagrees at \(r\) when \(a\ne b\), so the
distance on the free edges other than \(r\) is at most \(2\Deg\).  The
conclusion for \(Z_G^{\mathrm{edge},\tau}\) follows from
\eqref{eq:edge-full-field-normalization}.
\end{proof}

\subsection{Log-concave Boolean Holant problems}
\label{sec:holant-consequence}

We next apply the method to log-concave Boolean Holant problems, using only
the notation needed here.  A symmetric Boolean signature of arity \(d\) is specified by
\[
  f\colon\{0,\ldots,d\}\longrightarrow[0,\infty),
\]
where \(f(k)\) is the value assigned when exactly \(k\) incident edges are
selected.  Such a signature is \emph{log-concave} if its positive support
is an interval and
\[
  f(k)^2\ge f(k-1)f(k+1)
  \qquad(1\le k<d).
\]
For a graph \(G=(V,E)\), signatures
\(\mathbf f=(f_v)_{v\in V}\) with
\(\operatorname{arity}(f_v)=\deg_G(v)\), and edge activities
\(\mathbf z=(z_e)_{e\in E}\), put
\begin{equation}
  Z_{G,\mathbf f}(\mathbf z)
  :=\sum_{S\subseteq E}
       \prod_{v\in V}f_v\bigl(\deg_S(v)\bigr)
       \prod_{e\in S}z_e ,
  \label{eq:holant-short-Z}
\end{equation}
where \(\deg_S(v)\) counts the selected edges incident to \(v\).
Zero locations and approximation algorithms for Holant polynomials were
studied in \cite{GLLZ2019,Casel2019}.  For log-concave signatures satisfying
\(f_v(0)>0\), Chen and Gu established spectral independence and rapid
sampling through recursive couplings~\cite{ChenGu2023}, while
He, Li, Qiu, and Zhang obtained an FPTAS~\cite{HLQZ2025}.
Chen, Chen, and Zhang proved a spectral-independence bound for these
measures that does not depend on the maximum degree~\cite{ChenChenZhang2026};
our argument uses the degree-dependent coupling input
of~\cite{ChenGu2023}, and \Cref{rem:matching-degree-dependence} shows
that complex zero-free widths must depend on \(\Deg\).

\subsubsection{Residual signatures and coupling independence}

\begin{lemma}[Residual Holant input]
\label{lem:holant-residual-input}
Fix an integer \(\Deg\ge2\), a finite family \(\mathcal F\) of log-concave
symmetric Boolean signatures satisfying \(f(0)>0\) for every
\(f\in\mathcal F\), and \(R>0\).  Let \(\mathcal F_{\rm res}\) consist
of all normalized residual signatures obtained as follows: for every
\(f\in\mathcal F\) and integers \(j,r\ge0\) with
\(j+r\le\operatorname{arity}(f)\) and \(f(j)>0\), take
\[
  g(k):=\frac{f(j+k)}{f(j)}
  \qquad(0\le k\le r)
\]
in \(\mathcal F_{\rm res}\).  Here \(j\) is the number of incident edges
fixed to \(1\), and \(r\) is the number left free; the remaining incident
edges are fixed to \(0\).  This is a finite family of log-concave
signatures, each with initial-interval support and \(g(0)=1\).

There is \(C=C(\Deg,\mathcal F,R)<\infty\) with the following property.
Let \(H=(W,F)\) be a finite simple graph of maximum degree at most
\(\Deg\), carrying at each vertex \(w\) a signature
\(g_w\in\mathcal F_{\rm res}\) of arity \(\deg_H(w)\), and let
\(\mathbf x\in[0,R]^F\).  For an edge \(e=uv\in F\), its normalized
\(0\)-child is obtained by deleting \(e\) and restricting the signature
of each endpoint \(w\in\{u,v\}\) to its first \(\deg_{H-e}(w)+1\) entries.  If
\(g_u(1)g_v(1)>0\), its normalized \(1\)-child instead has endpoint
signatures
\[
  k\longmapsto\frac{g_w(k+1)}{g_w(1)}
  \qquad(0\le k\le\deg_{H-e}(w)),\quad w\in\{u,v\}.
\]
All other signatures are unchanged, and every child signature belongs
to \(\mathcal F_{\rm res}\).  The normalization removes the fixed-edge
factor \(z_e g_u(1)g_v(1)\), so the \(1\)-child remains defined when
\(x_e=0\).  If \(g_u(1)g_v(1)>0\), let
\(\mu_{\mathbf x}^{e=0}\) and \(\mu_{\mathbf x}^{e=1}\) be the Gibbs laws
of the two normalized children on \(\{0,1\}^{F\setminus\{e\}}\).  Then
\[
  \WHam\bigl(\mu_{\mathbf x}^{e=0},
             \mu_{\mathbf x}^{e=1}\bigr)\le C.
\]
If \(Z^{e=0}\) and \(Z^{e=1}\) denote the corresponding normalized child
partition functions, then
\begin{equation}\label{eq:holant-child-monotonicity}
  Z^{e=1}(\mathbf x)\le Z^{e=0}(\mathbf x).
\end{equation}
When \(g_u(1)g_v(1)=0\), the unnormalized \(1\)-child is identically zero.
\end{lemma}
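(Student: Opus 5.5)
The lemma has four parts: the structural facts about \(\mathcal F_{\rm res}\) and the children, the coupling bound, the monotonicity inequality \eqref{eq:holant-child-monotonicity}, and the degenerate case. I would treat them in that order, saving the coupling bound for last.

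\textbf{Structure of \(\mathcal F_{\rm res}\) and the children.}
Finiteness holds because \(\mathcal F\) is finite and \((j,r)\) ranges over finitely many pairs for each \(f\).

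Log-concavity passes to residuals. The restriction \(k\mapsto f(j+k)\) of a log-concave sequence with interval support is again log-concave, and dividing by the positive constant \(f(j)\) preserves this.

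The support of \(g\) is an initial interval. Indeed, \(g(0)=1>0\), and the support of \(f\) is an interval containing \(j\), so the support of \(g\) is \(\{0,\dots,k_0\}\cap\{0,\dots,r\}\) for some \(k_0\ge0\).

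Closure under children is a composition of residual operations. Suppose \(g(k)=f(j+k)/f(j)\) with \(r\) free entries.
\begin{itemize}[label=--]
\item Restricting \(g\) to its first \(r\) entries gives the residual with parameters \((j,r-1)\).
\item Shifting gives \(k\mapsto g(k+1)/g(1)=f(j+1+k)/f(j+1)\). This is the residual with parameters \((j+1,r-1)\), and it is legitimate because \(g(1)>0\) means \(f(j+1)>0\).
\end{itemize}

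\textbf{Degenerate case.}
If \(g_u(1)g_v(1)=0\), then every term with \(e\) selected contains the factor \(g_u(\deg_S(u))g_v(\deg_S(v))\). By the initial-interval support, \(g_w(1)=0\) forces \(g_w(k)=0\) for all \(k\ge1\). So every such term vanishes, and the unnormalized \(1\)-child is identically zero.

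\textbf{Monotonicity.}
For \eqref{eq:holant-child-monotonicity}, compare the two children term by term over \(S\subseteq F\setminus\{e\}\). They differ only at \(u\) and \(v\):
\begin{itemize}[label=--]
\item in the \(1\)-child the endpoint factor is \(g_w(k+1)/g_w(1)\);
\item in the \(0\)-child it is \(g_w(k)=g_w(k)/g_w(0)\), using \(g_w(0)=1\).
\end{itemize}
Log-concavity with \(g_w(0)>0\) and initial-interval support makes the ratio \(g_w(k+1)/g_w(k)\) nonincreasing on the support, and it is zero past the support. Hence \(g_w(k+1)/g_w(1)\le g_w(k)/g_w(0)\) for every \(k\). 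Multiplying the inequalities for \(w=u\) and \(w=v\) gives the claim termwise, since all weights and activities \(x_f\ge0\) are nonnegative.

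\textbf{Coupling bound (main obstacle).}
I would invoke the recursive coupling of Chen and Gu~\cite{ChenGu2023}. For log-concave symmetric signatures with \(f(0)>0\) on graphs of maximum degree at most \(\Deg\), with activities in \([0,R]\), it gives a coupling of the two edge-pinned laws with expected Hamming distance bounded by a constant depending on \(\Deg\), \(R\) and the signatures. Chen and Gu derive spectral independence from exactly such an \(\ell_1\) influence bound.

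The delicate points are the following.
\begin{itemize}[label=--]
\item \emph{Matching the hypotheses.} Their hypotheses must hold for the residual family \(\mathcal F_{\rm res}\). It is finite, log-concave and has \(g(0)=1\), so I would apply their theorem to \(\mathcal F_{\rm res}\) in place of \(\mathcal F\). Finiteness makes every constant uniform, giving \(C=C(\Deg,\mathcal F,R)\).
\item \emph{Pinning conventions.} Their pinnings fix edges to \(0\) or \(1\), while here the children are normalized. Normalization only rescales each child's total weight, so it leaves both Gibbs laws unchanged.
\item \emph{Activities at \(0\).} If their theorem requires strictly positive activities, I would obtain the bound on \([0,R]\) by the finite-state continuity \Cref{lem:finite-endpoint-closure}. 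This works because both child partition functions have positive value: the empty configuration has weight \(1\).
\end{itemize}
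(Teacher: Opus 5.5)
Your proposal is correct and follows the paper's proof essentially step for step. The shared steps are the \((j,r)\)-parameter closure of \(\mathcal F_{\rm res}\) under restriction and normalized shifts, the termwise monotonicity from nonincreasing successive ratios with \(g(0)=1\), the Chen--Gu recursive coupling applied to the residual family, and a limiting argument for vanishing activities that uses the weight-one all-zero configuration.

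The one point the paper makes explicit, and you only assert, is that the Chen--Gu constant does not depend on a positive lower bound for the activities. That uniformity is what makes your appeal to \Cref{lem:finite-endpoint-closure} give a finite bound. The paper obtains the constant as \(2\bigl((1+A^2R)^\Deg-1\bigr)\), where \(A\) is the largest \(g(1)\) over signatures in \(\mathcal F_{\rm res}\) of positive arity. It comes from two successive endpoint-shift comparisons whose intermediate signatures stay in \(\mathcal F_{\rm res}\).
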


\begin{proof}
The claims about \(\mathcal F_{\rm res}\) follow because restriction and
multiplication by a positive scalar preserve log-concavity; the assumption
\(f(0)>0\) makes every resulting support an initial interval.  If
\(g(k)=f(j+k)/f(j)\) has parameters \((j,r)\), then its restriction to
\(\{0,\ldots,r'\}\) with \(r'\le r\) has parameters \((j,r')\), and for
\(g(t)>0\) the shift \(k\mapsto g(t+k)/g(t)=f(j+t+k)/f(j+t)\) has
parameters \((j+t,r-t)\).  Hence \(\mathcal F_{\rm res}\) is closed under
restriction and normalized shifts, so every child signature belongs to
\(\mathcal F_{\rm res}\).  Let
\[
  A:=\max\bigl(\{0\}\cup
       \{g(1):g\in\mathcal F_{\rm res},\
                  \operatorname{arity}(g)\ge1\}\bigr).
\]
When all free activities are positive, apply the two endpoint shifts
successively, shifting each original sequence before restricting it to
the remaining degree.  An isolated endpoint changes only a removed
scalar and requires no coupling.
By \cite[Proposition~17 and Lemma~21]{ChenGu2023}, applying the two
endpoint comparisons as in the proof of Proposition~14 there gives the
Wasserstein bound with
\[
  C:=2\bigl((1+A^2R)^\Deg-1\bigr).
\]
Here the intermediate shifted signatures remain in
\(\mathcal F_{\rm res}\), so each of positive arity has \(g(1)\le A\), and
every positive free activity is at most \(R\).  Thus each endpoint
comparison contributes at most \((1+A^2R)^\Deg-1\), and the triangle
inequality gives the displayed value of \(C\).  This bound does not
depend on a positive lower bound for the free activities.
If some activities vanish, replace only those
coordinates by positive values tending to zero.  Each normalized child
has the all-zero configuration with weight one, so both Gibbs laws
converge on the finite common state space.  Compactness of the coupling
polytope gives a limiting coupling with the same bound.

To prove \eqref{eq:holant-child-monotonicity}, note that the successive
ratios of a log-concave
signature \(g\) are nonincreasing on its positive support.  Since
\(g(0)=1\), multiplying these ratios gives
\begin{equation}\label{eq:holant-shift-monotonicity}
  \frac{g(t+k)}{g(t)}\le g(k)
  \qquad(g(t)>0,\quad t+k\le\operatorname{arity}(g)).
\end{equation}
If the numerator vanishes, the inequality is immediate.
Taking \(t=1\), each normalized \(1\)-child endpoint signature is
pointwise at most its \(0\)-child counterpart.  The termwise comparison over
subsets of \(F\setminus\{e\}\) proves
\eqref{eq:holant-child-monotonicity}.
\end{proof}

\subsubsection{Zero-freeness near the nonnegative orthant}

\begin{theorem}[Log-concave Holant polytube]
\label{thm:holant-box}
Fix an integer \(\Deg\ge2\) and a finite family \(\mathcal F\) of log-concave
symmetric Boolean signatures satisfying \(f(0)>0\) for every
\(f\in\mathcal F\).  For every \(R>0\), there is
\(\eps=\eps(\Deg,\mathcal F,R)>0\) such that the following holds.
For every finite simple graph \(G=(V,E)\) of maximum degree at most \(\Deg\),
every assignment \((f_v)_{v\in V}\) with \(f_v\in\mathcal F\) of arity
\(\deg_G(v)\), and every \(\mathbf z\in\C^E\),
\[
  Z_{G,\mathbf f}(\mathbf z)\ne0
  \qquad\text{whenever}\qquad
  \mathbf z\in\mathcal U_\eps([0,R])^E.
\]
The activities may be perturbed independently.  In particular, the union
\(\bigcup_{R>0}\mathcal U_{\eps(\Deg,\mathcal F,R)}([0,R])^E\) of these
polytubes, each with one \(R\) for all coordinates, is an open zero-free
neighbourhood of \([0,\infty)^E\).
On the diagonal \(z_e=z\), the univariate specialization
\(Z_{G,\mathbf f}(z\mathbf1)\) is nonzero throughout the open set
\[
  U_{\Deg,\mathcal F}
  :=\bigcup_{R>0}\mathcal U_{\eps(\Deg,\mathcal F,R)}([0,R])
  \supset[0,\infty),
\]
which depends only on \((\Deg,\mathcal F)\) and is common to all the
graphs and signature assignments above.
\end{theorem}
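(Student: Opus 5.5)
We run the separator induction of \Cref{lem:potts-positive-response} and \Cref{prop:ep-disc} on edges of \(G\) rather than vertices, using \Cref{lem:holant-residual-input} as the coupling input. The induction is over residual instances \((H,\mathbf g,\mathbf z)\) with \(\Delta(H)\le\Deg\) and \(g_w\in\mathcal F_{\rm res}\); these are closed under deleting/pinning edges, so children and exteriors stay in the class. The base point is a real \(\mathbf x\in[0,R]^F\) with \(|z_e-x_e|<\eps\) coordinatewise, where each \(x_e\) is a nearest point of \([0,R]\) to \(z_e\). We prove simultaneously that \(Z_{H,\mathbf g}(\mathbf z)\ne0\) and that, for every edge \(e\) with \(g_u(1)g_v(1)>0\),
\[
  \left|\Log\frac{Z^{e=1}(\mathbf z)/Z^{e=1}(\mathbf x)}{Z^{e=0}(\mathbf z)/Z^{e=0}(\mathbf x)}\right|\le\alpha .
\]
Here the logarithm is continued along the segment from \(\mathbf x\) to \(\mathbf z\). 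All base values are at least \(1\), because the empty edge set has weight \(\prod g_w(0)=1\). This replaces \Cref{lem:hard-feas}, and the additive-error treatment at \(0\) is unnecessary.

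\textbf{Response step.} For the separator, measure distance in the line graph of \(H\), which has maximum degree at most \(2\Deg-2\). Applying the shell-averaging bound \eqref{eq:potts-shell-sum} to the coupling of \Cref{lem:holant-residual-input} yields a shell \(S\) of edges at line-graph distance \(r\in[2,L+1]\) from \(e\), with \(\WHam(\nu_0,\nu_1)\le C/L\), where \(L=\lceil 64C\rceil\).

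Conditioning on \(\xi\in\{0,1\}^S\) gives a separator identity \(Z^{e=c}=\sum_\xi D_{c,\xi}\,Z_{\rm ext}^{\xi}\). The exterior is the instance on edges beyond \(S\), with residual signatures shifted/restricted according to \(\xi\) at the endpoints of shell edges. Two choices of this exterior that differ in one shell edge \(f\) are exactly the normalized \(0\)- and \(1\)-children of a smaller instance in which \(f\) is left free. The induction hypothesis therefore makes \(h_\xi=\Log(Z^\xi_{\rm ext}(\mathbf z)/Z^\xi_{\rm ext}(\mathbf x))\) \(\alpha\)-Lipschitz, with oscillation at most \(s\alpha\le1/16\).

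There is one caveat: if \(g_u(1)g_v(1)=0\) for \(f\), then \(\xi_f=1\) has zero \(\nu_c\)-mass. I would handle this by defining \(h\) only on the support, where any two points are joined by a support path; interior constraints are the same for both laws, so the support is common. Alternatively, extend \(h\) by a Lipschitz extension.

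The interior polynomials \(D_{c,\xi}\) range over finitely many instances, since \(\mathcal F_{\rm res}\) is finite and sizes are bounded. They are positive on the compact box \([0,R]^{\le N}\), because the empty interior selection forces \(D\ge\) a positive constant whenever \(D\not\equiv0\). Uniform continuity then gives \(\eps\) with \(|\eta_c|\le\alpha/8\). \Cref{lem:cavg} now gives \(2\alpha C/L+\alpha/2<\alpha\), exactly as before. Bounded line-graph components are handled by the same compactness argument.

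\textbf{Parent nonvanishing.} This is the step I expect to be the main obstacle. The recursion is
\[
  Z=Z^{e=0}+z_e\,g_u(1)g_v(1)\,Z^{e=1}.
\]
Away from \(x_e=0\) the argument of \(z_e\) is small, so both terms lie in a half-plane after division by \(R_0\), as in \Cref{lem:potts-positive-response}. Near \(x_e=0\), however, \(z_e\) can point in any direction. There I would use \eqref{eq:holant-child-monotonicity}, which gives \(Z^{e=1}(\mathbf x)\le Z^{e=0}(\mathbf x)\). Together with \(|R_1/R_0|\le e^\alpha\), this bounds the second term, after division by \(R_0\), by
\[
  |z_e|\,A^2e^\alpha\,Z^{e=0}(\mathbf x),
\]
where \(A\) is as in \Cref{lem:holant-residual-input}. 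The first term has real part equal to \(Z^{e=0}(\mathbf x)\).

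So I would split into two cases. If \(x_e\ge\delta_0\), the argument of \(z_e\) is at most about \(\eps/\delta_0\), and the half-plane argument applies. If \(x_e<\delta_0\), then \(|z_e|<2\delta_0\), and choosing \(\delta_0\) with \(2\delta_0A^2e^\alpha<1\) makes the second term dominated.

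All constants \(\alpha,L,s,N,\delta_0,\eps\) depend only on \((\Deg,\mathcal F,R)\). For the case \(R'\) versus \(R\) one uses \(C(\Deg,\mathcal F,R)\) with \(R\) enlarged, which is harmless. The statements about the union of polytubes and the diagonal specialization \(z_e=z\) then follow immediately.
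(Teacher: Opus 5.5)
There is a genuine gap in the response step, at the claim that the local polynomials \(D_{c,\xi}\) are positive on the compact box and therefore satisfy \(|\eta_c|\le\alpha/8\).

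Your exterior is the normalized instance on the edges beyond \(S\), so the activities of the selected shell edges must sit in the local factor: \(D_{c,\xi}(\mathbf z)\) contains the monomial \(\prod_{f\in S:\,\xi_f=1}z_f\). The empty interior selection gives a positive lower bound only for \(\xi=\mathbf 0\). For \(\xi\ne\mathbf 0\), \(D_{c,\xi}(\mathbf x)=0\) as soon as a selected shell edge has \(x_f=0\). When \(0<x_f\ll\eps\), the ratio \(D_{c,\xi}(\mathbf z)/D_{c,\xi}(\mathbf x)\) contains the factor \(z_f/x_f=1+it/x_f\), since your nearest-point base forces \(z_f-x_f=it\) with \(|t|<\eps\); its logarithm is unbounded. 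The box \([0,R]\) contains \(0\) in every coordinate and positivity fails on the faces \(\{x_f=0\}\), so compactness cannot give a uniform \(\eps\).

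Worse, when \(x_f=0\) the identity \(Z^{e=c}(\mathbf z)/Z^{e=c}(\mathbf x)=\E_{\nu_c}[e^{h+\eta_c}]\) itself fails. Shell assignments with \(\xi_f=1\) have \(\nu_c(\xi)=0\) but contribute \(D_{c,\xi}(\mathbf z)Z_\xi(\mathbf z)\ne0\), and these terms are simply lost. So your remark that the additive-error treatment at \(0\) is unnecessary is backwards. In the Holant setting it is needed at every base point, coordinatewise.

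The paper's fix has three parts.
\begin{enumerate}
  \item Define \(h_\xi\) on the set \(\Xi\) of structurally feasible shell assignments, not on the support at \(\mathbf x\). This set is downward closed, so Hamming paths through \(\xi\wedge\xi'\) stay inside it.
  \item Split the ratio additively as
  \[
    \frac{Z^{e=i}(\mathbf z)}{Z^{e=i}(\mathbf x)}=M_i+E_i,
    \qquad M_i=\E_{\nu_i}[e^h],
    \qquad E_i=\frac{\sum_{\xi}\bigl(D_{i,\xi}(\mathbf z)-D_{i,\xi}(\mathbf x)\bigr)Z_\xi(\mathbf x)e^{h_\xi}}{Z^{e=i}(\mathbf x)}.
  \]
  The bound \(|D_{i,\xi}(\mathbf z)-D_{i,\xi}(\mathbf x)|\le K\eps\) is uniform over the finite local family on the box, whether or not \(D_{i,\xi}(\mathbf x)\) vanishes.
  \item Compare \(E_i\) with \(M_i\) using two inequalities. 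First, \(Z^{e=i}(\mathbf x)\ge D_{i,\mathbf0}(\mathbf x)Z_{\mathbf0}(\mathbf x)\ge Z_{\mathbf0}(\mathbf x)\). Second, \(Z_\xi(\mathbf x)\le Z_{\mathbf 0}(\mathbf x)\), which follows termwise from the shift monotonicity \eqref{eq:holant-shift-monotonicity} of normalized log-concave signatures. This second inequality is the Holant substitute for \Cref{lem:potts-hard-comparison}, and your proposal never uses it.
\end{enumerate}
With \(|h_\xi-h_{\mathbf0}|\le1/16\), these give \(|E_i|\le K\eps|e^{h_{\mathbf0}}|\) and \(|M_i|\ge\frac34|e^{h_{\mathbf0}}|\). \Cref{lem:cavg} is then applied with \(\eta=\eta'=0\) to \(M_1/M_0\) only, exactly as in \Cref{prop:ep-disc}.

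The parent step you flagged as the main obstacle is fine, and your case split on whether \(x_e\ge\delta_0\) works. The paper avoids the split by writing
\[
  \frac{Z(\mathbf z)}{Z^{e=0}(\mathbf z)}=1+a_ex_ere^{\ell}+a_e(z_e-x_e)re^{\ell},
\]
with \(r\le1\) from \eqref{eq:holant-child-monotonicity}.
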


\begin{proof}
\textbf{Residual instances.}\hspace{1em}Let \(\mathcal F_{\rm res}\) be the residual family constructed in
\Cref{lem:holant-residual-input} for the present \(\Deg,\mathcal F\).
We prove the stronger assertion for every normalized instance on a graph
of maximum degree at most \(\Deg\) whose signatures belong to
\(\mathcal F_{\rm res}\).  Call an assignment of edge states
\emph{structurally feasible} if its signature product is nonzero, the
activities being treated as indeterminates.  This class of normalized
instances is closed under
fixing edges to structurally feasible states: removing the fixed
activity factors and normalizing each shifted signature by its value
at \(0\) again gives an instance with signatures in \(\mathcal F_{\rm res}\),
as in \Cref{lem:holant-residual-input}.
Dividing each original \(f_v\) by \(f_v(0)>0\) puts the instance in this
class and only multiplies its partition function by a positive constant.

\paragraph{Shell decomposition and exterior response.}
The interaction graph of the edge variables, namely the line graph of the
underlying graph, has maximum degree at most \(\Deg':=2\Deg-2\).  Let
\(C\) be the constant from \Cref{lem:holant-residual-input}, take an
integer \(L\ge\max\{1,64C\}\), and set
\[
  s:=\Deg'(\Deg'-1)^{L-1},
  \qquad
  N:=1+\Deg'\sum_{k=0}^{L-1}(\Deg'-1)^k.
\]
Choose \(0<\alpha<\pi/4\) with \(s\alpha\le1/16\).  Up to isomorphism
and apart from their activities, there are only finitely many connected
normalized instances with at most \(N\) edges.  Compactness over
\([0,R]^N\) therefore gives a common local radius on which their
partition functions are nonzero and the normalized child-response
quotients in \eqref{eq:holant-response} below have logarithms of modulus
at most \(\alpha\).

We now prove simultaneously, by strong induction on the number of free
edges, the following two assertions for every normalized instance
\(H=(W,F)\), every \(\mathbf x\in[0,R]^F\), and every \(\mathbf z\in\C^F\)
with \(\max_{e'\in F}|z_{e'}-x_{e'}|<\eps\): the partition function of
\(H\) is nonzero at \(\mathbf z\), and every edge \(e\in F\) whose
unnormalized \(1\)-child is not identically zero satisfies
\begin{equation}\label{eq:holant-response}
  \left|
  \Log
  \frac{Z^{e=1}(\mathbf z)/Z^{e=1}(\mathbf x)}
       {Z^{e=0}(\mathbf z)/Z^{e=0}(\mathbf x)}
  \right|\le\alpha.
\end{equation}
The logarithm is analytically continued from
\(\mathbf z=\mathbf x\).  Both assertions are immediate when there is no
free edge.

Fix an instance \(H=(W,F)\) with \(n:=|F|\ge1\), and fix a root edge
\(e\in F\) whose \(1\)-child survives, that is, whose unnormalized
\(1\)-child is not identically zero.  We first prove the response bound.
Components of the interaction graph not containing \(e\) give
the same factors in the two children; the induction hypothesis makes
them nonzero, so they cancel.  For the response bound we may therefore
replace \(H\) by the
component containing \(e\), relabel it as \(H=(W,F)\), reset
\(n:=|F|\), and write \(Z\) for its partition function.  Writing
\(\Gamma\) for its interaction graph, let
\[
  S_j(e):=\{e'\in F:\dist_{\Gamma}(e,e')=j\},
  \qquad 1\le j\le L.
\]
If some \(S_j(e)\) is empty, this component has at most \(N\) edges and
the local compactness choice applies.  Otherwise, project a coupling of
the two real child laws onto these \(L\) disjoint shells.  Their projected
costs sum to at most \(C\), so one shell \(S=S_{j_*}(e)\) has cost at
most \(C/L\); also \(|S|\le s\).  Put
\[
  I:=\{e'\in F\setminus\{e\}:1\le\dist_{\Gamma}(e,e')<j_*\},
  \qquad
  O:=\{e'\in F:\dist_{\Gamma}(e,e')>j_*\}.
\]

For \(i\in\{0,1\}\), let \(\Xi_i\subseteq\{0,1\}^S\) be the shell
assignments admitting a structurally feasible completion in the
\(i\)-child.
Initial-interval support makes each \(\Xi_i\) downward closed and
ensures that it contains \(\mathbf0\).  The termwise comparison in
\Cref{lem:holant-residual-input} also gives \(\Xi_1\subseteq\Xi_0\);
we use the common domain \(\Xi:=\Xi_0\).

To identify the exterior explicitly, let \(W_O\) be the vertices incident
to \(O\), and let \(t_w(\xi)\) count selected shell edges at \(w\).
No vertex in \(W_O\) is incident to \(e\) or \(I\), by the definition
of the line-graph shells.  For \(\xi\in\Xi\), the normalized exterior
signatures are therefore
\[
  g_{w,\xi}(k)=\frac{g_w(t_w(\xi)+k)}{g_w(t_w(\xi))}
  \qquad(w\in W_O,\quad 0\le k\le\deg_O(w)).
\]
Structural feasibility and initial-interval support make every denominator
positive.  These signatures belong to \(\mathcal F_{\rm res}\) and are
independent of the root state.  Write \(Z_\xi\) for their partition
function on \(O\).  Sum over \(I\), absorbing the shell monomial and
the removed factors \(g_w(t_w(\xi))\) into \(D_{i,\xi}\), to obtain
\[
  Z^{e=i}(\mathbf z)
  =\sum_{\xi\in\Xi_i}D_{i,\xi}(\mathbf z)Z_\xi(\mathbf z),
  \qquad i\in\{0,1\}.
\]
Structurally infeasible
assignments are absent from the sum; an assignment selecting an edge with
zero base activity remains in
the sum and merely has zero real weight.

By induction, \(Z_\xi(\mathbf z)\ne0\), so define
\[
  h_\xi:=\Log\frac{Z_\xi(\mathbf z)}{Z_\xi(\mathbf x)}
  \qquad(\xi\in\Xi).
\]
Write \(h(\xi):=h_\xi\).  If
\(\xi,\xi'\in\Xi\) differ in one coordinate, fix the root and every
edge of \(I\) to zero, and
all other shell edges to their common states.  Leaving only the changed
shell edge and the exterior free gives a normalized instance with
\(|O|+1<n\) free edges.  Both assignments remain structurally feasible
when the interior is set to zero, by initial-interval support; the two
normalized children of this instance at the changed shell edge are
exactly \(Z_\xi,Z_{\xi'}\).
Both logarithm branches vanish at \(\mathbf z=\mathbf x\), so uniqueness
of analytic continuation identifies \(h_\xi-h_{\xi'}\) with the
inductive response logarithm.  Any two elements \(\xi,\xi'\in\Xi\) can be
joined through their coordinatewise minimum \(\xi\wedge\xi'\) by a Hamming path contained in
\(\Xi\).  Hence
\[
  |h_\xi-h_{\xi'}|\le\alpha\Ham(\xi,\xi'),
  \qquad
  \max_{\xi,\xi'\in\Xi}|h_\xi-h_{\xi'}|
     \le s\alpha\le\frac1{16}.
\]

\paragraph{Error control.}
Extend the real shell marginals by zero from \(\Xi_i\) to \(\Xi\):
\[
  \nu_i(\xi):=
  \frac{D_{i,\xi}(\mathbf x)Z_\xi(\mathbf x)}
       {Z^{e=i}(\mathbf x)}
  \quad(\xi\in\Xi_i),
  \qquad
  \nu_i(\xi):=0\quad(\xi\notin\Xi_i).
\]
Then \(\WHam(\nu_0,\nu_1)\le C/L\).  Put
\(M_i:=\E_{\nu_i}[e^h]\).  Adding and subtracting the local factors at
\(\mathbf x\) gives
\[
  \frac{Z^{e=i}(\mathbf z)}{Z^{e=i}(\mathbf x)}=M_i+E_i,
\]
\[
  E_i=
  \frac{\displaystyle\sum_{\xi\in\Xi_i}
    \bigl(D_{i,\xi}(\mathbf z)-D_{i,\xi}(\mathbf x)\bigr)
    Z_\xi(\mathbf x)e^{h_\xi}}
       {Z^{e=i}(\mathbf x)}.
\]
Each \(D_{i,\xi}\) involves at most \(|I\cup S|\le N-1\) variables.
Only vertices touching \(S\) contribute nontrivial exterior normalization
constants, so there are at most \(2s\) such factors.  All local data
thus range over a finite family.  After requiring \(\eps\le1\), there is
a graph-independent \(K\ge1\) such that
\[
  |D_{i,\xi}(\mathbf z)-D_{i,\xi}(\mathbf x)|\le K\eps
  \qquad(i\in\{0,1\},\ \xi\in\Xi_i).
\]
The all-zero interior assignment gives \(D_{i,\mathbf0}(\mathbf x)\ge1\), whence
\(Z^{e=i}(\mathbf x)\ge Z_{\mathbf0}(\mathbf x)\).  Repeated pointwise
use of \eqref{eq:holant-shift-monotonicity} on the exterior gives
\(Z_\xi(\mathbf x)\le Z_{\mathbf0}(\mathbf x)\).  Enlarging \(K\) to
absorb \(|\Xi|\) and \(e^{1/16}\), we obtain
\[
  |E_i|\le K\eps|e^{h_{\mathbf0}}|,
  \qquad
  \left|\frac{M_i}{e^{h_{\mathbf0}}}-1\right|
       \le e^{1/16}-1,
  \qquad
  |M_i|\ge\frac34|e^{h_{\mathbf0}}|.
\]
These bounds use only structural feasibility, not positivity of the
base activities, and are therefore uniform on the closed box
\([0,R]^F\).

Apply \Cref{lem:cavg} on \(\Xi\), with \(A=\alpha\),
\(d=\Ham\), and no local perturbation.  It gives
\[
  \left|\Log\frac{M_1}{M_0}\right|
  \le\frac{2\alpha C}{L}\le\frac{\alpha}{32}.
\]
The preceding disc estimate fixes the same logarithm branch as
continuation from \(\mathbf z=\mathbf x\).  For
\(\zeta_i:=E_i/M_i\), we have \(|\zeta_i|\le4K\eps/3\).  Require
\(\eps<\min\{3/(8K),3\alpha/(32K)\}\).  Then \(|\zeta_i|<1/2\), and
\(|\Log(1+\zeta_i)|\le2|\zeta_i|\) bounds the two corrections in
\[
  \Log\frac{M_1+E_1}{M_0+E_0}
  =\Log\frac{M_1}{M_0}+\Log(1+\zeta_1)-\Log(1+\zeta_0)
\]
by a total of at most \(16K\eps/3<\alpha/2\).  This proves
\eqref{eq:holant-response} with room to spare.

\paragraph{Parent nonvanishing.}
Having proved the response bound for every surviving child pair, choose
any edge \(e=uv\) to prove nonvanishing for the current instance, and put
\(a_e:=g_u(1)g_v(1)\).  If \(a_e=0\), the unnormalized \(1\)-child is
identically zero and \(Z(\mathbf z)=Z^{e=0}(\mathbf z)\ne0\) by
induction.  Otherwise,
\[
  Z(\mathbf z)=Z^{e=0}(\mathbf z)
       +a_ez_eZ^{e=1}(\mathbf z).
\]
The possible positive values of \(a_e\) form a finite set, and
\[
  r:=\frac{Z^{e=1}(\mathbf x)}{Z^{e=0}(\mathbf x)}\le1
\]
by \eqref{eq:holant-child-monotonicity}.  With \(\ell\) denoting the
logarithm in \eqref{eq:holant-response},
\[
  \frac{Z(\mathbf z)}{Z^{e=0}(\mathbf z)}
  =1+a_e x_e r e^\ell+a_e(z_e-x_e)r e^\ell.
\]
The sum of the first two terms has real part at least one, whereas the
last term has modulus at most \(a_e e^\alpha\eps\).  A final uniform reduction of
\(\eps\) makes this modulus smaller than one and closes the induction.

For a point in the stated polytube, choose
\(x_e\in[0,R]\) with \(|z_e-x_e|<\eps\) for every edge and apply the
induction.  Taking the union over \(R>0\) gives the asserted open
neighbourhood of \([0,\infty)^E\).
On the diagonal, a single such \(R\) works for all coordinates, giving
the asserted univariate domain.
\end{proof}

\subsubsection{\texorpdfstring{\(b\)-matchings and \(b\)-edge covers}
{b-matchings and b-edge covers}}

\begin{corollary}[\(b\)-matchings]
\label{cor:bmatching-short}
Fix an integer \(\Deg\ge2\) and \(R>0\).  There is
\(\eps=\eps(\Deg,R)>0\) such that, for every finite simple graph \(G=(V,E)\) of
maximum degree at most \(\Deg\), every integer vector
\(\mathbf b=(b_v)_{v\in V}\) with \(0\le b_v\le\deg_G(v)\), and every
\(\mathbf z\in\C^E\),
\[
  Z^{\mathrm{match}}_{G,\mathbf b}(\mathbf z)
  :=\sum_{\substack{S\subseteq E\\
                    \deg_S(v)\le b_v\ \forall v}}
       \prod_{e\in S}z_e
  \ne0
\]
whenever \(\mathbf z\in\mathcal U_\eps([0,R])^E\).
Thus \(\bigcup_{R>0}\mathcal U_{\eps(\Deg,R)}([0,R])^E\)
is an open zero-free neighbourhood of \([0,\infty)^E\).
\end{corollary}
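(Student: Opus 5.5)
The plan is to realize \(Z^{\mathrm{match}}_{G,\mathbf b}\) as a Holant polynomial \(Z_{G,\mathbf f}\) of the form \eqref{eq:holant-short-Z}, with signatures drawn from a fixed finite family of log-concave symmetric signatures, and then to invoke \Cref{thm:holant-box} directly. For each vertex \(v\), define
\[
  f_v(k):=\one{k\le b_v}\qquad(0\le k\le\deg_G(v)).
\]
With this choice \(\prod_v f_v(\deg_S(v))\) is precisely the indicator that \(S\) is a \(b\)-matching, so \(Z_{G,\mathbf f}(\mathbf z)=Z^{\mathrm{match}}_{G,\mathbf b}(\mathbf z)\) identically in \(\mathbf z\).

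Next I would check the hypotheses on the signatures. Each \(f_v\) satisfies \(f_v(0)=1>0\), because \(b_v\ge0\). Its positive support is \(\{0,\ldots,b_v\}\), which is an interval. Log-concavity \(f(k)^2\ge f(k-1)f(k+1)\) holds for a \(0/1\) indicator of an interval: if the right-hand side equals \(1\), then both \(k-1\) and \(k+1\) lie in the support, hence so does \(k\). The main point needing care is that \(\mathbf b\) ranges over all admissible vectors, whereas \Cref{thm:holant-box} requires one fixed finite family \(\mathcal F\). This is handled by taking
\[
  \mathcal F:=\bigl\{\,k\mapsto\one{k\le\beta}\text{ on }\{0,\ldots,d\}\;:\;0\le d\le\Deg,\ 0\le\beta\le d\,\bigr\}.
\]
This family is finite, has at most \((\Deg+1)(\Deg+2)/2\) members, and depends only on \(\Deg\). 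Every assignment \((f_v)\) arising from an admissible \(\mathbf b\) draws its signatures from \(\mathcal F\), each with arity \(\deg_G(v)\), as the theorem requires.

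Applying \Cref{thm:holant-box} with this \(\mathcal F\) and the given \(R\) yields \(\eps=\eps(\Deg,\mathcal F,R)=\eps(\Deg,R)\). The resulting nonvanishing holds on \(\mathcal U_\eps([0,R])^E\), uniformly over \(G\) and \(\mathbf b\). The final claim, that the union over \(R>0\) is an open zero-free neighbourhood of \([0,\infty)^E\), is then exactly the corresponding clause of that theorem. I do not expect any genuine obstacle: the only step needing attention is the uniformity of \(\mathcal F\) in \(\mathbf b\), and the bound \(b_v\le\deg_G(v)\le\Deg\) settles it.
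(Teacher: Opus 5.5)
Your proposal is correct and matches the paper's proof: it applies \Cref{thm:holant-box} to the finite family \(f_{d,b}(k)=\one{k\le b}\) with \(0\le b\le d\le\Deg\). Your explicit checks of log-concavity, of \(f(0)>0\), and of the fact that \(\mathcal F\) depends only on \(\Deg\) are exactly the verifications this one-line reduction leaves implicit.
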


\begin{proof}
Apply \Cref{thm:holant-box} to the finite family
\(f_{d,b}(k)=\one{k\le b}\) for integers
\(0\le b\le d\le\Deg\) and \(0\le k\le d\).
\end{proof}

\begin{remark}[Dependence on the maximum degree]
\label{rem:matching-degree-dependence}
For ordinary matchings on the star \(K_{1,\Deg}\), the diagonal
partition function is \(1+\Deg z\).  Its zero at \(-1/\Deg\) shows
that the width in \Cref{cor:bmatching-short} must satisfy
\(\eps(\Deg,R)\le1/\Deg\).  Thus dependence on the maximum degree
is unavoidable for these neighbourhoods of \([0,R]\).  For ordinary
matchings much more is known: by the Heilmann--Lieb
theorem~\cite{HeilmannLieb1972}, the diagonal matching partition
function \(\sum_{M}z^{|M|}\), summed over the matchings \(M\) of a finite graph \(G\),
has only real nonpositive zeros, and Wagner~\cite{Wagner2009} developed a general
method for zero-free regions of Heilmann--Lieb type for spanning
subgraphs with degree constraints.  \Cref{cor:bmatching-short} allows
arbitrary capacities \(b_v\) and gives a neighbourhood of the whole box
\([0,R]^E\), including the interpolation anchor \(\mathbf0\).
\end{remark}

\begin{corollary}[\(b\)-edge covers]
\label{cor:bcover-short}
Fix an integer \(\Deg\ge2\) and
\(0<\lambda_-\le\lambda_+<\infty\).  There is
\(\eps=\eps(\Deg,\lambda_-)>0\) such that, for every finite simple graph
\(G=(V,E)\) of maximum degree at most \(\Deg\), every integer vector
\(\mathbf b=(b_v)_{v\in V}\) with
\(0\le b_v\le\deg_G(v)\), and every \(\mathbf z\in\C^E\),
\[
  Z^{\mathrm{cover}}_{G,\mathbf b}(\mathbf z)
  :=\sum_{\substack{S\subseteq E\\
                    \deg_S(v)\ge b_v\ \forall v}}
       \prod_{e\in S}z_e
  \ne0
\]
whenever \(\mathbf z\in\mathcal U_\eps([\lambda_-,\lambda_+])^E\).
Thus
\(\bigcup_{0<\lambda_-\le\lambda_+<\infty}
  \mathcal U_{\eps(\Deg,\lambda_-)}([\lambda_-,\lambda_+])^E\)
is an open zero-free neighbourhood of \((0,\infty)^E\).
\end{corollary}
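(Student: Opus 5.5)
The plan is to reduce \Cref{cor:bcover-short} to \Cref{cor:bmatching-short} by complementing the selected edge set. \Cref{thm:holant-box} cannot be applied directly. The \(b\)-edge-cover signature \(k\mapsto\one{k\ge b_v}\) is log-concave, since its support is the interval \(\{b_v,\ldots,\deg_G(v)\}\). However, it violates \(f(0)>0\) whenever \(b_v\ge1\).

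\textbf{Complementation.} Substitute \(T:=E\setminus S\). Then \(\deg_S(v)\ge b_v\) if and only if \(\deg_T(v)\le b'_v:=\deg_G(v)-b_v\), and \(0\le b'_v\le\deg_G(v)\). For \(\mathbf z\in(\C\setminus\{0\})^E\) this gives
\[
  Z^{\mathrm{cover}}_{G,\mathbf b}(\mathbf z)
  =\sum_{\substack{T\subseteq E\\ \deg_T(v)\le b'_v\ \forall v}}
     \prod_{e\notin T}z_e
  =\Bigl(\prod_{e\in E}z_e\Bigr)\,
   Z^{\mathrm{match}}_{G,\mathbf b'}(\mathbf w),
  \qquad w_e:=1/z_e .
\]
It therefore suffices to check two things: every \(z_e\) is nonzero, and \(\mathbf w\) lies in a zero-free polytube from \Cref{cor:bmatching-short}.

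\textbf{Choosing the radius.} Put \(R:=1/\lambda_-\), and let \(\eps_m:=\eps(\Deg,R)\) be the width given by \Cref{cor:bmatching-short}. Choose
\[
  \eps:=\min\Bigl\{\frac{\lambda_-}{2},\ \frac{\lambda_-^2\eps_m}{4}\Bigr\},
\]
which depends only on \((\Deg,\lambda_-)\), as the statement requires. Suppose \(|z_e-x_e|<\eps\) with \(x_e\in[\lambda_-,\lambda_+]\). Then \(|z_e|>\lambda_-/2>0\), so \(z_e\ne0\). Moreover \(1/x_e\in(0,1/\lambda_-]\subseteq[0,R]\), and
\[
  \Bigl|\frac1{z_e}-\frac1{x_e}\Bigr|
  =\frac{|z_e-x_e|}{|z_e|\,|x_e|}
  <\frac{\eps}{(\lambda_-/2)\lambda_-}\le\frac{\eps_m}{2}<\eps_m .
\]
Hence \(\mathbf w\in\mathcal U_{\eps_m}([0,R])^E\). \Cref{cor:bmatching-short}, applied with capacities \(\mathbf b'\), gives \(Z^{\mathrm{match}}_{G,\mathbf b'}(\mathbf w)\ne0\), and the product of the \(z_e\) is nonzero, so \(Z^{\mathrm{cover}}_{G,\mathbf b}(\mathbf z)\ne0\). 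Notice that \(\lambda_+\) plays no role in \(\eps\). This is because \(1/x_e\) stays in the fixed box \([0,1/\lambda_-]\) however large \(\lambda_+\) is, and this box contains the endpoint \(0\) that \Cref{cor:bmatching-short} covers.

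\textbf{The open-neighbourhood claim.} Each polytube is open and contains \([\lambda_-,\lambda_+]^E\). Every point of \((0,\infty)^E\) lies in such a box, obtained by taking \(\lambda_-\) to be the minimum coordinate and \(\lambda_+\) the maximum. So the union is an open zero-free neighbourhood of \((0,\infty)^E\). The only delicate point is that \(\eps\) must be uniform in \(\lambda_+\), and this is handled by choosing \(R=1/\lambda_-\) rather than anything depending on \(\lambda_+\). Everything else is routine.
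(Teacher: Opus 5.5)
Your proposal is correct and follows essentially the paper's own proof: complementing the selected edge set turns the \(b\)-edge-cover polynomial into \(\bigl(\prod_e z_e\bigr)Z^{\mathrm{match}}_{G,\mathbf b'}\) at the reciprocal activities. You then apply \Cref{cor:bmatching-short} with \(R=1/\lambda_-\) and a width of the form \(\min\{\lambda_-/2,\ c\,\lambda_-^2\eps_m\}\). The only difference is that you take \(c=1/4\) where the paper takes \(c=1/2\), which is immaterial.
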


\begin{proof}
For \(b'_v:=\deg_G(v)-b_v\), complementation of the selected edges gives
\[
  Z^{\mathrm{cover}}_{G,\mathbf b}(\mathbf z)
  =\left(\prod_{e\in E}z_e\right)
    Z^{\mathrm{match}}_{G,\mathbf b'}
       \bigl((z_e^{-1})_{e\in E}\bigr).
\]
Thus \(\boldsymbol\infty\) is the natural interpolation anchor, corresponding
to \(\mathbf0\) in the reciprocal variables.
Let \(\eps_1\) be the width \(\eps(\Deg,R)\) of
\Cref{cor:bmatching-short} with \(R=\lambda_-^{-1}\); it depends only on
\((\Deg,\lambda_-)\).  Put
\(\eps:=\min\{\lambda_-/2,\ \lambda_-^2\eps_1/2\}\).  If
\(|z_e-x_e|<\eps\) with \(x_e\in[\lambda_-,\lambda_+]\), then
\(|z_e|\ge x_e/2>0\) and
\[
  |z_e^{-1}-x_e^{-1}|=\frac{|z_e-x_e|}{|z_e|\,x_e}
  <\frac{2\eps}{\lambda_-^2}\le\eps_1,
\]
while \(x_e^{-1}\in[\lambda_+^{-1},\lambda_-^{-1}]\subseteq
[0,\lambda_-^{-1}]\).  Hence \(\prod_ez_e\ne0\) and
\((z_e^{-1})_{e\in E}\in\mathcal U_{\eps_1}([0,\lambda_-^{-1}])^E\), so
\Cref{cor:bmatching-short} applies; the width depends neither on \(G\)
nor on \(\lambda_+\).
\end{proof}

The same diagonal specialization in
\Cref{cor:bmatching-short,cor:bcover-short} gives common open univariate
zero-free neighbourhoods of \([0,\infty)\) and \((0,\infty)\), respectively,
depending only on \(\Deg\).

\clearpage
\appendix
\crefalias{section}{appendix}
\numberwithin{equation}{section}
\numberwithin{table}{section}
\renewcommand*{\theHequation}{\theHsection.\arabic{equation}}

\section{Further zero-free regimes for the Potts model}
\label{app:additional-potts}

This appendix records further Potts zero-free regimes obtained by
combining the transfer results of the main text with additional
coupling-independence bounds.  The source of each coupling input is
specified below.

The high-temperature row of \Cref{tab:appendix-extensions} follows from
\Cref{prop:full-interval-hci} and
\Cref{lem:potts-positive-response} of this paper, and the near-Vigoda row
reduces by integrality to \Cref{thm:intro-main} and the Carlson--Vigoda
row.  The remaining coupling-independence inputs, which extend known
hard-colouring couplings to positive activities or establish new ones,
are proved in the companion article~\cite{ShaoShi2026FurtherPotts}.

The companion proofs of these inputs were produced by AI models ---
GPT-5.6 Sol, except the girth-five results, which were proved by GPT-6
Astra.  A Lean formalization, developed with assistance from GPT-6
Astra and Claude Opus 5.5, proves the
coupling-independence and zero-free statements for these regimes~\cite{ShaoShi2026FurtherPotts};
their public entry points retain only the displayed mathematical hypotheses.
It also proves, in the form the proofs use, the results cited in the
three girth rows, \cite[Lemmas~5.13 and~8.7 and Equation~(10)]{CLMM2023}
and \cite[Proposition~2.6(i) and Theorem~2.5]{BencsBerrekkalRegts2025Trees}.
In some places the Lean proof follows a different route from the written
one.  At the twenty near-Vigoda pairs \((q,\Deg)=(11j,6j)\), \(j\le20\),
where the written proof cites \cite[Theorem~20]{CFFGZZ2024}, it extends
the Carlson--Vigoda contraction to \(q\ge11\Deg/6\) for every
\(\Deg\ge6\); and it treats the cases \(\Deg=q+2\) of the BBR row
(Bencs, Berrekkal, and Regts~\cite{BencsBerrekkalRegts2025Trees})
directly.  More generally, the formalization proves every numbered
theorem, lemma, proposition, and corollary of this article and
of~\cite{ShaoShi2026FurtherPotts}, some in an equivalent form, among them
\Cref{thm:intro-main,thm:potts-transfer}, the case \(q=11\Deg/6\)
included, and the colour-field and Holant results of
\Cref{sec:lee-yang,sec:holant-consequence}, all without external input.
Only citation-level claims and the algorithmic remarks of
\Cref{sec:intro} are not formalized.

\Cref{tab:appendix-extensions} gives the additional zero-free regions.
The coupling input of each row originates as follows.
\begin{itemize}[leftmargin=1.5em,itemsep=1pt]
  \item Near-Vigoda: by
        integrality~\cite[Lemma~4.3]{ShaoShi2026FurtherPotts}, the union
        of \Cref{thm:intro-main} (with \cite[Theorem~20]{CFFGZZ2024} in
        the equality case \(q=11\Deg/6\), that is,
        \((q,\Deg)=(11j,6j)\) with \(j\ge1\)) and the Carlson--Vigoda
        row.  The constant
        \(1/84000\) is the value of \(\eps_0\) fixed in the proof
        of~\cite[Theorem~5.1]{CDMPP2019}.
  \item Carlson--Vigoda: the contraction of~\cite{CarlsonVigoda2024}
        (at \(x=0\), \cite[Theorem~20]{CFFGZZ2024}), extended to
        \(x\in[0,1]\)~\cite[Theorem~5.1]{ShaoShi2026FurtherPotts}.
  \item Large-girth: the mechanism of~\cite{CLMM2023} (at \(x=0\), as in
        \Cref{thm:lee-yang}), extended to
        \(x\in[0,1]\)~\cite[Theorem~6.1]{ShaoShi2026FurtherPotts}.
  \item High-temperature: \Cref{prop:full-interval-hci}.
  \item BBR interval: the tree contraction
        of~\cite{BencsBerrekkalRegts2025Trees}, strengthened to tree
        total-influence decay and transferred to large-girth graphs by
        the mechanism of~\cite{CLMM2023};
        see~\cite[Theorem~7.6]{ShaoShi2026FurtherPotts}.
  \item Edge-Potts: a slot-lift adaptation of the list-edge-colouring
        coupling of~\cite{CWZZ2025} (whose \(x=0\) bound enters
        \Cref{cor:edge-lee-yang}), with coupling-independence constant
        \(\Deg-1\) on all of \([0,1]\), including
        \(x=0\)~\cite[Theorem~8.1]{ShaoShi2026FurtherPotts}.
  \item Girth-five: a covariance recursion built on the all-pinning
        spectral gap of~\cite[Theorem~9.7]{ShaoShi2026FurtherPotts},
        converted by the sphere-to-coupling
        lemma~\cite[Lemma~5.13]{CLMM2023} into the coupling
        bound~\cite[Theorem~9.4]{ShaoShi2026FurtherPotts}.  This setting
        parallels the spectral local-to-global framework
        of~\cite{ChenLiu2026LocalGlobal}, from which no input is taken.
\end{itemize}

\begin{table}[tbp]
\centering
\small
\setlength{\tabcolsep}{7pt}
\renewcommand{\arraystretch}{1.22}
\begin{tabular}{@{}lll@{}}
\toprule
Regime & Parameters & Zero-free near \\
\midrule
near-Vigoda
  & \(\Deg\ge2\), \(q\ge(11/6-1/84000)\Deg\)
  & \([0,1]\) \\
Carlson--Vigoda
  & \(\Deg\ge125\), \(q\ge1.809\Deg\)
  & \([0,1]\) \\
large-girth
  & \(\Deg\ge3\), \(q\ge\Deg+3\); \(\operatorname{girth}(G^\tau)\ge g_*(q,\Deg)\)
  & \([0,1]\) \\
high-temperature
  & \(\Deg\ge2\), \(q>(11/6)(1-x_*)\Deg\), \(x_*\in(0,1]\)
  & \([x_*,1]\) \\
BBR interval
  & \(q\ge3\), \(\Deg/q\ge(e-1/2)/(e-1)\); \(\operatorname{girth}(G^\tau)\ge g_{\rm BBR}(q,\Deg)\)
  & \([x_0,1]\) \\
edge-Potts
  & \(\Deg\ge2\), \(q\ge3\Deg\)
  & \([0,1]\) \\
girth-five
  & \(q\ge(1+\delta)\Deg\), \(0<\delta\le1\); \(\Deg\ge\Deg_5(\delta)\), \(\operatorname{girth}(G^\tau)\ge5\)
  & \([0,1]\) \\
\bottomrule
\end{tabular}
\smallskip

\parbox{0.97\textwidth}{\raggedright\footnotesize
In every row the conclusion is \(\nZpin{G}{\tau}(z)\ne0\) for all \(z\)
in an \(\eps\)-neighbourhood of the stated interval, with \(\eps>0\)
depending only on the listed parameters, uniformly over graphs and
pinnings.  The unnormalized conclusions are specified in
\Cref{rem:additional-potts-transfer}.
In the edge-Potts row the polynomial is \(\nZpin{L(G)}{\tau}\), with
\(\tau\) an edge-colour pinning.  The thresholds \(g_*(q,\Deg)\),
\(g_{\rm BBR}(q,\Deg)\), and \(\Deg_5(\delta)\) are those
of~\cite[Theorems~6.1, 7.6, and~9.4]{ShaoShi2026FurtherPotts}; BBR refers
to Bencs, Berrekkal, and Regts~\cite{BencsBerrekkalRegts2025Trees}.  In
the BBR row,
\(x_0:=1-\frac q\Deg\bigl(1-\frac k\Deg\bigr)^{\!2}
\frac{\Deg-k}{\Deg-k/2}\) with \(k:=e\frac{\Deg-q/2}{\Deg-q}\) (and
\(x_0:=3/4\) if \((q,\Deg)=(3,4)\)).  The girth-five row assumes no
bounds on local short-cycle counts.}
\caption{Further Potts zero-free regimes obtained from coupling
independence.}
\label{tab:appendix-extensions}
\end{table}

\begin{remark}[Transfer and the hard endpoint]
\label{rem:additional-potts-transfer}
The full-interval Potts rows use \Cref{thm:potts-transfer}
(on \(L(G)\), with degree bound \(2\Deg-2\), for edge-Potts).
The high-temperature and BBR rows use the positive-activity induction in
\Cref{lem:potts-positive-response}.  In the rows with a condition
\(\operatorname{girth}(G^\tau)\ge g\) (\(g=g_*\), \(g_{\rm BBR}\), or \(5\)),
these results are applied to \(\Gdegg{g}\) after
replacing \((G,\tau)\) by \((\widetilde G,\widetilde\tau)\): starting from
\(G^\tau\), for every edge \(up\) with \(u\in V^\tau\) and
\(p\in\Lambda\), attach to \(u\) a new leaf pinned to \(\tau_p\).  This
preserves \(\nZpin{G}{\tau}\) and the degree in \(G\) of every free vertex,
the new leaves have degree one, and
\(\operatorname{girth}(\widetilde G)=\operatorname{girth}(G^\tau)\); hence
\(\widetilde G\in\Gdegg{g}\).  For rows with interval \([0,1]\), the
unnormalized pinned polynomial has no zeros in the stated neighbourhood
except, when \(m_G(\tau)\ge1\), a zero of multiplicity \(m_G(\tau)\) at
\(0\) (\(m_{L(G)}(\tau)\) for edge-Potts).
For the two positive-interval rows, choose \(\eps\) smaller than the
left endpoint; the unnormalized polynomial is then nonzero throughout
the same neighbourhood.
\end{remark}

\bibliographystyle{alpha}
\bibliography{main}

\end{document}